\newcommand{\citeN}[2][]{\textcite[#1]{#2}}
\newtheorem{theorem}{Theorem}[section]
\newtheorem{definition}[theorem]{Definition}
\newtheorem{lemma}[theorem]{Lemma}
\newtheorem{proposition}[theorem]{Proposition}
\newtheorem{corollary}[theorem]{Corollary}
\theoremstyle{empty}
\theoremstyle{nonumberplain}
\newtheorem{proof}{Proof}
\tikzstyle{bullet}=[draw,semithick,circle,inner sep=0pt,minimum size=5pt]
\tikzstyle{square}=[draw,semithick,rectangle,inner sep=0pt,minimum size=5pt]
\tikzstyle{vertex}=[circle, draw, inner sep=0pt, minimum width=2pt]
\tikzstyle{edge}=[semithick,cap=round]
\tikzstyle{dedge}=[edge,>=latex,->,join=round]
\newcommand{\cc}[1]{\ensuremath{\mathrm{#1}}}
\newcommand{\pp}[1]{\textsc{#1}}
\newcommand{\op}[1]{\ensuremath{\operatorname{#1}}}
\newcommand{\poly}{\op{poly}}
\newcommand{\bbbn}{\mathbb{N}}
\newcommand{\bbbq}{\mathbb{Q}}
\newcommand{\bbbr}{\mathbb{R}}
\newcommand{\N}{\bbbn}
\newcommand{\Q}{\bbbq}
\newcommand{\R}{\bbbr}
\newcommand{\dotcup}{\mathbin{\dot\cup}}
\newcommand{\ETH}{{\upshape ETH}}    
\newcommand{\rETH}{{\upshape rETH}}  
\newcommand{\cETH}{{\upshape \#ETH}} 
\newcommand{\red}{\preccurlyeq}
\newcommand{\scr}[1]{\mathscr{#1}}
\newcommand{\perm}{\operatorname{per}}
\newcommand{\comp}{k}
\newcommand{\ZNUL}[3] {\mbox{$Z_0({#1};{#2},{#3})$}} 	
\newcommand{\ZREL}[2]{\mbox{\ZNUL{#1}{0}{#2}}}			
\newcommand{\inflate}[2] {\mbox{${#1}\otimes{#2}$}}		
\newcommand{\wump}[1] {\mbox{$W_{#1}$}}				
\newcommand{\Oh}{O}
\newcommand{\iref}[1]{(\ref{#1})}
\title{%
Exponential Time Complexity\\
of the Permanent and the Tutte Polynomial%
\thanks{%
  The journal version of this paper appears in the ACM Transactions on
  Algorithms~\cite{TALGversion}.
  Preliminary versions appeared in the
  proceedings of ICALP~2010~\cite{ICALPversion}
  and IPEC~2010~\cite{HusfeldtTaslaman}.
}
}
\author{%
  \begin{minipage}{.45\linewidth}
    \begin{center}
      Holger Dell%
\thanks{%
  Research partially supported by the Alexander von Humboldt
  Foundation and NSF grant 1017597.%
}%
      \\\small
      University of Wisconsin--Madison, USA%
      \\\small\tt
      holger@cs.wisc.edu
    \end{center}
  \end{minipage}
\and
  \begin{minipage}{.45\linewidth}
    \begin{center}
      Thore Husfeldt
      \\\small
      IT University of Copenhagen, Denmark
      \\\small
      Lund University, Sweden%
      \\\small\tt
      thore@itu.dk
    \end{center}
  \end{minipage}
\and
  \begin{minipage}{.45\linewidth}
    \begin{center}
      D\'aniel Marx%
\thanks{%
  Research supported by ERC Starting Grant PARAMTIGHT (280152).%
}%
      \\\small
      Computer and Automation Research Institute,
      Hungarian Academy of Sciences (MTA SZTAKI), Budapest, Hungary
      \\\small\tt
      dmarx@cs.bme.hu
    \end{center}
  \end{minipage}
\and
  \begin{minipage}{.45\linewidth}
    \begin{center}
      Nina Taslaman
      \\\small
      IT University of Copenhagen, Denmark%
      \\\small\tt
      nsta@itu.dk
    \end{center}
  \end{minipage}
\and
  \begin{minipage}{.45\linewidth}
    \begin{center}
      Martin Wahl\'en
      \\\small
      Lund University, Sweden
      \\\small
      Uppsala University, Sweden
      \\\small\tt
      mva@df.lth.se
    \end{center}
  \end{minipage}
}
\date{\mbox{}\\[.5\baselineskip]June, 2012\vspace{-1.5\baselineskip}}
\begin{document}

\maketitle

\begin{abstract}
We  show conditional lower bounds for well-studied \cc{\#P}-hard
problems:
\begin{itemize}
  \item
    The number of satisfying assignments of a $2$-CNF formula with $n$
    variables cannot be computed in time $\exp(o(n))$, and the same is
    true for computing the number of all independent sets in an
    $n$-vertex graph.
  \item
    The permanent of an $n\times n$ matrix with entries~$0$ and~$1$
    cannot be computed in time $\exp(o(n))$. %
  \item
    The Tutte polynomial of an $n$-vertex multigraph
    cannot be computed in time $\exp(o(n))$ at most evaluation
    points~$(x,y)$ in the case of multigraphs, and it cannot be
    computed in time $\exp(o(n/\poly\log n))$ in the case of simple
    graphs. %
\end{itemize}
Our lower bounds are relative to (variants of) the Exponential Time
Hypothesis (\ETH{}), which says that the satisfiability of
$n$-variable $3$-CNF formulas cannot be decided in time $\exp(o(n))$. %
We relax this hypothesis by introducing its counting version \cETH{},
namely that the satisfying assignments cannot be counted in time
$\exp(o(n))$. %
In order to use \cETH{} for our lower bounds, we transfer the
sparsification lemma for $d$-CNF formulas to the counting setting. %
\end{abstract}
\section{Introduction}

The permanent of a matrix and the Tutte polynomial of a graph are
central topics in the study of counting algorithms. %
Originally defined in the combinatorics literature, they unify and
abstract many enumeration problems, including immediate questions
about graphs such as computing the number of perfect matchings,
spanning trees, forests, colourings, certain flows and orientations,
but also less obvious connections to other fields, such as link
polynomials from knot theory, reliability polynomials from network
theory, and (maybe most importantly) the Ising and Potts models from
statistical physics.

From its definition (repeated in \eqref{eq: per def} below), the
permanent of an $n\times n$-matrix can be computed in $O(n!n)$ time,
and the Tutte polynomial \eqref{eq: Tutte def} can be evaluated in
time exponential in the number of edges. %
Both problems are famously \cc{\#P}-hard, which rules out the
existence of polynomial-time algorithms under standard
complexity-theoretic assumptions, but that does not mean that we have
to resign ourselves to brute-force evaluation of the definition. %
In fact, Ryser's famous formula~\cite{R63} computes the permanent with
only $\exp(O(n))$ arithmetic operations, and more recently, an
algorithm with running time $\exp(O(n))$ for $n$-vertex graphs has
also been found~\cite{BHKK08} for the Tutte polynomial. %
Curiously, both of these algorithms are based on the
inclusion--exclusion principle. %
In the present paper, we show that these algorithms are not likely to be
significantly improved, by providing conditional lower bounds of
$\exp(\Omega(n))$ for both problems.

It is clear that \#P-hardness is not the right conceptual framework
for such claims, as it is unable to distinguish between different
types of super-polynomial time complexities. %
For example, the Tutte polynomial for planar graphs remains \#P-hard,
but can be computed in time $\exp(O(\sqrt{n}))$ \cite{SIT95}. %
Therefore, we work under Impagliazzo and Paturi's
\emph{Exponential Time Hypothesis}
(\ETH{}), viz. the complexity theoretic assumption that \emph{some}
hard problem, namely the satisfiability of $3$-CNF formulas in $n$
variables, cannot be solved in time $\exp(o(n))$~\cite{IP01}. %
More specifically, we introduce \cETH{}, a counting analogue of \ETH{}
which models the hypothesis that \emph{counting} the satisfying
assignments cannot be done in time $\exp(o(n))$.

\subsection*{Computing the permanent}
The permanent of an $n\times n$ matrix $A$ is defined as
\begin{equation} \label{eq: per def}%
  \perm A = \sum_{\pi \in S_n} \prod_{1\leq i \leq n} A_{i \pi(i)}\,,
\end{equation}
where $S_n$ is the set of permutations of $\{1,\ldots,n\}$. %
This is redolent of the determinant from linear algebra, $\det A=
\sum_\pi \op{sign}(\pi) \prod_i A_{i\pi(i)}$, the only difference is
an easily computable sign for every summand. %
However small this difference in the definition may seem, the
determinant and the permanent are believed to be of a vastly different
computational calibre.
Both definitions involve a summation with $n!$ terms and both problems
have much faster algorithms that are textbook material: The
determinant can be computed in polynomial time using Gaussian
elimination and the permanent can be computed in $O(2^nn)$ operations
using Ryser's formula.
Yet, the determinant seems to be exponentially easier to compute
than the permanent.

Valiant's celebrated \cc{\#P}-hardness result for the
permanent~\cite{V79} shows that no polynomial-time algorithm \`a la
``Gaussian elimination for the permanent'' can exist unless
$\cc{P}=\cc{NP}$, and indeed unless $\cc{P}=\cc{P}^\cc{\#P}$.
Several unconditional lower bounds for the permanent in restricted
models of computation are also known. %
\citeN{JS82} have shown that monotone arithmetic
circuits need $n(2^{n-1}-1)$ multiplications to compute the permanent,
a bound they can match with a variant of Laplace's determinant
expansion. %
\citeN{Raz} has shown that multi-linear arithmetic formulas for the
permanent require size~$\exp(\Omega(\log^2 n))$.  Ryser's formula
belongs to this class of formulas, but is much larger than the lower
bound; no smaller construction is known. %
Intriguingly, the same lower bound holds for the determinant, where it
is matched by a formula of size $\exp(O(\log^2 n))$ due to
\citeN{Ber84}. %
One of the consequences of the present paper is that Ryser's formula
is in some sense optimal under \cETH{}. %
In particular, no uniformly constructible, subexponential size formula
such as Berkowitz's can exist for the permanent unless~\cETH{} fails.

A related topic is the expression of $\perm A$ in terms of $\det
f(A)$, where $f(A)$ is a matrix of constants and entries from $A$ and
is typically much larger than $A$. %
This question has fascinated many mathematicians for a long time, see
Agrawal's survey~\cite{Ag}; the best known bound on the dimension of
$f(A)$ is $\exp(O (n))$ and it is conjectured that all such
constructions require exponential size. %
In particular, it is an important open problem if a permanent of size
$n$ can be expressed as a determinant of size $\exp(O(\log^2 n))$. %
We show that under \cETH{}, if such a matrix $f(A)$ exists, computing
$f$ must take time $\exp(\Omega(n))$.

\subsection*{Computing the Tutte polynomial}

The Tutte polynomial, a bivariate polynomial associated with a given
graph $G=(V,E)$ with $n$ vertices and $m$ edges, is defined as
\begin{equation} \label{eq: Tutte def}
  T(G;x,y) =
  \sum_{A\subseteq E} (x-1)^{\comp (A)-\comp(E)}
  (y-1)^{\comp(A)+|A|-|V|}\,,
\end{equation}
where $\comp(A)$ denotes the number of connected components of the
subgraph $(V,A)$.


Despite their unified definition \eqref{eq: Tutte def}, the various
computational problems given by $T(G;x,y)$ for different points
$(x,y)$ differ widely in computational complexity, as well as in the
methods used to find algorithms and lower bounds. %
For example, $T(G;1,1)$ equals the number of spanning trees in $G$,
which happens to admit a polynomial-time algorithm, curiously again
based on Gaussian elimination. %
On the other hand, the best known algorithm for computing $T(G;2,1)$,
the number of forests, runs in $\exp(O(n))$ time.

Computation of the Tutte polynomial has fascinated researchers in
computer science and other fields for many decades. %
For example, the algorithms of Onsager and Fischer from the 1940s and
1960s for computing the so-called partition function for the planar
Ising model are viewed as major successes of statistical physics and
theoretical chemistry; this corresponds to computing $T(G;x,y)$ along
the hyperbola $(x-1)(y-1)=2$ for planar~$G$. %
Many serious attempts were made to extend these results to other
hyperbolas or graph classes, but ``after a quarter of a century and
absolutely no progress,'' Feynman in 1972 observed that ``the exact
solution for three dimensions has not yet been found.''%
\footnote{The Feynman quote and many other quotes describing the
  frustration and puzzlement of physicists around that time can be
  found in the copious footnotes of \cite{I00}.}

The failure of theoretical physics to ``solve
the Potts model'' and sundry other questions implicit in the
computational complexity of the Tutte polynomial were explained only
with Valiant's \#P-hardness programme. %
After a number of papers, culminating in the work of \citeN{JVW90}, the
polynomial-time complexity of exactly computing the Tutte polynomial
at points $(x,y)$ is now completely understood: it is \cc{\#P}-hard
everywhere except at those points $(x,y)$ where a polynomial-time
algorithm is known; these points consist of the hyperbola
$(x-1)(y-1)=1$ as well as the four points
$(1,1),(-1,-1),(0,-1),(-1,0)$.

In the present paper, we show an $\exp(\Omega(n))$ lower bound to match
the $\exp(O(n))$ algorithm from~\cite{BHKK08}, which holds under
\cETH{} everywhere except for $|y|=1$. %
In particular, this establishes a gap to the planar case, which admits
an $\exp(O(\sqrt{n}))$ algorithm~\cite{SIT95}. %
Our hardness results apply (though not everywhere, and sometimes with
a weaker bound) even if the graphs are sparse and simple. %
These classes are of particular interest because most of the graphs
arising from applications in statistical mechanics arise from bond
structures, which are sparse and simple.

It has been known since the 1970s \cite{L76} that graph $3$-colouring
can be solved in time $\exp(O(n))$, and this is matched by an
$\exp(\Omega(n))$ lower bound under \ETH{}~\cite{IPZ01}. %
Since graph $3$-colouring corresponds to evaluating~$T$ at $(-2,0)$,
the exponential time complexity for $T(G;-2,0)$ was thereby already
understood. %
In particular, computing $T(G;x,y)$ for input $G$ \emph{and} $(x,y)$
requires vertex-exponential time, an observation that is already made
in~\cite{GHN06} without explicit reference to \ETH{}.

The literature for computing the Tutte polynomial is very rich, and we
make no attempt to survey it here. %
A recent paper of \citeN{GJ08Tutte}, which shows that the Tutte
polynomial is hard to even approximate for large parts of the Tutte
plane, contains an overview. %
A list of graph classes for which subexponential time algorithms are
known can be found in~\cite{BHKK08}.

\subsection*{Complexity assumptions}
\label{sec:compass}
The standard complexity assumption $\cc{P}\neq\cc{NP}$ is not
sufficient for our purposes: it is consistent with current knowledge
that $\cc{P}\neq\cc{NP}$ holds and yet NP-hard problems such as
\pp{$3$-Sat} have subexponential time algorithms.
What we need is a complexity assumption stating that certain problems
can be solved only in exponential time.

The exponential time hypothesis (\ETH{}) by~\citeN{IP01} is
that satisfiability of $3$-CNF formulas cannot be computed
substantially faster than by trying all possible assignments.
Formally, this reads as follows:

\vspace{5pt}
\begin{minipage}{.95\linewidth}
  \hspace{\parindent}
  (\ETH{})
  \quad
  \begin{minipage}{.85\linewidth}
    There is a constant $c>0$ such that no deterministic algorithm
    can decide \pp{$3$-Sat} in time~$\exp(c\cdot n)$.
  \end{minipage}
\end{minipage}
\vspace{5pt}

A different way of formulating \ETH{} is to say that there is no
algorithm deciding \pp{$3$-Sat} in time $\exp(o(n))$.
The latter statement is clearly implied by the above statement, and it
will be more convenient for discussion to use this form and state
results this way.

In two of our lower bounds, Theorem~\ref{thm: 2sat} and
Theorem~\ref{thm: perm}\iref{thmi: perm RETH}, we need a slightly
stronger assumption that rules out the possibility of randomized
algorithms as well:

\vspace{5pt}
\begin{minipage}{.95\linewidth}
  \hspace{\parindent}
  (\rETH{})
  \quad
  \begin{minipage}{.85\linewidth}
    There is a constant $c>0$ such that no {\em randomized} algorithm
    can decide \pp{$3$-Sat} in time~$\exp(c\cdot n)$ with error
    probability at most~$1/3$.
  \end{minipage}
\end{minipage}
\vspace{5pt}

The reason why we need \rETH{} in these two proofs is that we are
reducing from the promise problem \pp{Unique $3$-Sat}, which is
\pp{$3$-Sat} with the promise that the given $3$-CNF formula has at
most one satisfying assignment.
\citeN{Calabro_isolation} established a lower bound on \pp{Unique
$3$-Sat} assuming \rETH{}, thus our results are also relative to this
complexity assumption.
By reducing from \pp{Unique $3$-Sat}, we avoid the use of
interpolation, which typically weakens the lower bound by
polylogarithmic factors in the exponent.

Intuitively, counting the number of solutions is much harder than
deciding the existence of a solution: in the latter case, we only need
to find a single solution, while in the former case we have to somehow
reason about the set of all possible solutions.
A formal evidence is that many natural counting problems are
\cc{\#P}-hard and therefore not only as hard as all problems in
\cc{NP} but as hard as all the problems in the polynomial-time
hierarchy~\cite{toda89}.
If counting problems seem to be so much harder, then it is natural to
ask if their hardness can be demonstrated by a weaker complexity
assumption than what is needed for the decision problems.
We show that our lower bounds, with the exception of Theorem~\ref{thm:
2sat} and Theorem~\ref{thm: perm}\iref{thmi: perm RETH}, can be
obtained using the weaker complexity assumption stating that counting
the number of solutions to a 3-CNF formula requires exponential time
(i.e., a counting variant of \ETH{}).
\begin{quote}\small
  \begin{description}
    \item[Name] \pp{\#$3$-Sat}
    \item[Input] $3$-CNF formula $\varphi$ with $n$ variables and $m$
      clauses.
    \item[Output] The number of satisfying assignments to $\varphi$.
  \end{description}
\end{quote}
The best known algorithm for this problem runs in
time~$O(1.6423^n)$~\cite{Kutzkov20071counting3sat}. %

\vspace{5pt}
\begin{minipage}{.95\linewidth}
  \hspace{\parindent}
  (\cETH{})
  \hfill
  \begin{minipage}{.85\linewidth}
    There is a constant $c>0$ such that no deterministic
    algorithm can compute \pp{\#$3$-Sat} in time~$\exp(c\cdot n)$.
  \end{minipage}
\end{minipage}
\vspace{5pt}

\ETH{} trivially implies \cETH{} whereas the other direction is not
known.

By introducing the sparsification lemma, \citeN{IPZ01} show that \ETH{}
is a robust notion in the sense that the clause width~$3$ and the
parameter~$n$ (number of variables) in its definition can be replaced
by $d\geq 3$ and~$m$ (number of clauses), respectively, to get an
equivalent hypothesis, albeit the constant~$c$ may change in doing
so. As most of the reductions are sensitive to the number of clauses,
this stronger form of \ETH{} is essential for proving tight lower bounds
for concrete problems.  In order to be able to use \cETH{} in such
reductions, we transfer the sparsification lemma to \pp{\#$d$-Sat} and
get a similar kind of robustness for \cETH{}.
\begin{theorem}\label{thm: counting_sparsification_essence}
  Let $d \geq 3$ be an integer. Then
  \cETH{} holds
  if and only if there is a constant $c>0$ such that no deterministic
  algorithm can solve
  \pp{\#$d$-Sat} in time $\exp(c\cdot m)$.
\end{theorem}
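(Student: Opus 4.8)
The plan is to derive the theorem from a counting analogue of the sparsification lemma of Impagliazzo, Paturi, and Zane, and then to assemble the two implications from a pair of short reductions. Concretely, I would first establish the following statement (where $\#\psi$ denotes the number of satisfying assignments of a CNF formula $\psi$ over its variable set): for every $\epsilon>0$ and every integer $d\ge 2$ there is a constant $C=C(\epsilon,d)$ such that every $d$-CNF formula $\varphi$ over a variable set $V$ can be transformed, in time $2^{\epsilon|V|}\cdot\poly(|V|)$, into a list $\varphi_1,\dots,\varphi_t$ of $d$-CNF formulas over $V$ such that $t\le 2^{\epsilon|V|}$, each $\varphi_i$ has at most $C\cdot|V|$ clauses, and the solution sets $\op{sat}(\varphi_i)\subseteq\{0,1\}^{V}$ are pairwise disjoint with union $\op{sat}(\varphi)$. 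In particular $\#\varphi=\sum_{i=1}^{t}\#\varphi_i$, so computing \pp{\#$d$-Sat} on $\varphi$ reduces, in time $2^{\epsilon|V|}\cdot\poly(|V|)$, to computing it on at most $2^{\epsilon|V|}$ instances that each have at most $C|V|$ clauses.

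For the proof of this counting sparsification statement I would reuse the Impagliazzo--Paturi--Zane recursion essentially unchanged. Their algorithm repeatedly picks a formula $\psi$ in the current list that still has more than $C|V|$ clauses, extracts (via the sunflower lemma) a large sunflower among $\psi$'s clauses of a suitable fixed width, with heart $H$, and replaces $\psi$ by $\psi\wedge H$---in which the petal clauses all contain $H$ and may therefore be deleted---and by $\psi\wedge\neg H$---in which the literals of $H$ are fixed by unit clauses and every petal clause shrinks; degenerate sunflowers are handled exactly as in the original argument. The one observation needed beyond IPZ is that each such step is a genuine case distinction $\op{sat}(\psi)=\op{sat}(\psi\wedge H)\dotcup\op{sat}(\psi\wedge\neg H)$, and that every auxiliary simplification used along the way (unit propagation, removal of satisfied or duplicate clauses) likewise neither drops nor duplicates a satisfying assignment; hence the leaves of the recursion partition $\op{sat}(\varphi)$, the model counts add up, and---since no operation ever changes $V$---the potential-function argument bounding the number of leaves by $2^{\epsilon|V|}$ (each leaf having at most $C|V|$ clauses by construction) carries over verbatim. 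I expect this verification, namely confirming that nothing in the IPZ construction is count-lossy, to be the only real work; it is a careful re-reading rather than a new idea.

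Granting the counting sparsification statement, the \emph{only if} implication is the contrapositive of the following claim: if \pp{\#$d$-Sat} can be solved in time $\exp(c\cdot m)$ for every $c>0$, then \pp{\#$3$-Sat} can be solved in time $\exp(c'\cdot n)$ for every $c'>0$, contradicting \cETH{}. Given a $3$-CNF formula on $n$ variables---which, as $d\ge 3$, is in particular a $d$-CNF---apply the counting sparsification statement with a small $\epsilon=\epsilon(c')$, obtaining at most $2^{\epsilon n}$ $d$-CNF formulas, each with $O(n)$ clauses (the hidden constant depending only on $\epsilon$ and $d$); then run the assumed \pp{\#$d$-Sat} algorithm with a sufficiently small constant $c$ on each of them and add up the returned counts. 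Choosing $\epsilon$ and $c$ proportional to $c'$, with proportionality constants depending only on $d$, makes the total running time $\exp(c'n)$ up to a polynomial factor, which we absorb by shrinking $c'$ slightly; this is the required algorithm.

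The \emph{if} implication needs no sparsification, only a count-preserving clause-splitting. Arguing by contraposition again, suppose \pp{\#$3$-Sat} can be solved in time $\exp(c'\cdot n)$ for every $c'>0$. Given a $d$-CNF formula $\psi$ with $m$ clauses, delete the variables occurring in no clause (at most $dm$ remain), and rewrite every clause $\ell_1\vee\cdots\vee\ell_w$ of width $w>3$ with private fresh variables $y_2,\dots,y_{w-1}$ by adding the $O(w)$ clauses of width at most $3$ that encode the biconditionals $y_2\leftrightarrow(\ell_1\vee\ell_2)$ and $y_j\leftrightarrow(y_{j-1}\vee\ell_j)$ for $3\le j\le w-1$, together with the clause $y_{w-1}\vee\ell_w$. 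Since these biconditionals determine $y_2,\dots,y_{w-1}$ uniquely from $\ell_1,\dots,\ell_{w-1}$, every satisfying assignment of $\psi$ extends in exactly one way to a satisfying assignment of the resulting $3$-CNF formula $\psi'$, so $\#\psi'=\#\psi$; moreover $\psi'$ has $O(dm)$ variables and is built in polynomial time. Running the assumed \pp{\#$3$-Sat} algorithm on $\psi'$ with a sufficiently small constant therefore computes $\#\psi$ in time $\exp(c\cdot m)$ for any prescribed $c>0$, which yields the required algorithm for \pp{\#$d$-Sat} and completes the proof.
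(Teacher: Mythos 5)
Your proposal is correct and follows essentially the same route as the paper's Appendix~A: prove a counting sparsification lemma by modifying the IPZ/Flum--Grohe branching so the two branches force $H$ and $\neg H$ respectively (hence partition rather than merely cover the solution set), then combine this with a parsimonious width reduction from \pp{\#$d$-Sat} to \pp{\#$3$-Sat}. Your biconditional encoding for the width reduction is actually spelled out in more detail than the paper's appeal to ``the standard width-reduction''; conversely, the paper makes explicit the one point you defer to ``careful re-reading,'' namely that the added unit clauses $\{\neg l\}$ in the petal branch never enter a flower (they are eliminated by the reduce step), so the $t\le 2^{n/k}$ bound on the number of leaves is genuinely preserved.
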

The proof of this theorem is spelled out in Appendix~\ref{sec:
  counting sparsification}.
The relationship between \cETH{} and the parameterized complexity of
counting problems is explained in Appendix~\ref{sec: parameterized}.

\subsection*{Results: Counting Independent Sets}

In light of Theorem~\ref{thm: counting_sparsification_essence}, it is
natural to consider the exponential time complexity of \pp{\#$2$-Sat}.
Restricted to antimonotone $2$-CNF formulas, this corresponds to
counting \emph{all} independent sets in a given graph, which
cannot be done in time $\exp(o(n/\log^3 n))$  under \cETH~\cite{Hoffmann2010}.
The loss of the $\poly\log$-factor in the exponent is due to the
interpolation inherent in the hardness reduction.
We avoid interpolation using the isolation lemma for $d$-CNF formulas
by~\citeN{Calabro_isolation}, and we get an asymptotically tight lower
bound.
The drawback is that our lower bound only holds under the randomized
version of \ETH{} instead of \cETH{}.
\begin{theorem}\label{thm: 2sat}%
  Under \rETH{}, there is no randomized algorithm that computes the
  number of all independent sets in time
  $\exp(o(m))$, where $m$ is the number of edges.
  Under the same assumption, there is no randomized algorithm for
  \pp{\#$2$-Sat} that runs in time $\exp(o(m))$, where $m$ is the
  number of clauses.
\end{theorem}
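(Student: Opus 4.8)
The plan is to reduce from \pp{Unique $3$-Sat}, the promise variant of \pp{$3$-Sat} in which the input is guaranteed to have at most one satisfying assignment. By \citeN{Calabro_isolation}, \rETH{} implies that this problem has no randomized $\exp(o(n))$ algorithm. The point of reducing from the \emph{unique} version is that we then only have to decide whether the number of satisfying assignments is $0$ or $1$, which we can read off from the \emph{parity} of a single evaluation of the counting problem; this is what lets us avoid interpolation — and thus the $\poly\log$ loss in the exponent incurred by the \cETH{}-based argument of \cite{Hoffmann2010}.

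First I would make the number of clauses linear. Given a \pp{Unique $3$-Sat} instance $\varphi$ on $n$ variables, the sparsification lemma \cite{IPZ01} writes it, for any $\epsilon>0$, as a disjunction of at most $\exp(\epsilon n)$ formulas $\varphi_1,\varphi_2,\dots$, each a $3$-CNF with $O(n)$ clauses (the constant depending on $\epsilon$) and each obtained from $\varphi$ only by adding clauses and fixing variables, so its solution set is a restriction of that of $\varphi$. Hence every $\varphi_i$ again has at most one satisfying assignment, and $\varphi$ is satisfiable iff some $\varphi_i$ is. So it suffices to rule out fast algorithms for \pp{Unique $3$-Sat} on instances with $O(n)$ clauses: running such an algorithm on all $\exp(\epsilon n)$ pieces, for $\epsilon$ chosen small enough, would beat the bound of \citeN{Calabro_isolation}.

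Next I would give a reduction from \pp{$3$-Sat} to counting independent sets that is parsimonious modulo $2$ and linear in size. From a $3$-CNF $\varphi$ with variables $x_1,\dots,x_n$ and clauses $C_1,\dots,C_m$, build a graph $G$: a triangle on fresh vertices $t_i,f_i,q_i$ for each $x_i$, and for each clause $C_j$ a fresh vertex $w_j$ joined to the vertex among $\{t_i,f_i\}$ that records ``this literal is true'' for each of the three literals of $C_j$ (so to $t_i$ for a positive occurrence of $x_i$, to $f_i$ for a negative one). Then $|E(G)|=3n+3m=O(n+m)$. Decompose an independent set of $G$ by its intersection with each triangle, one of $\emptyset,\{t_i\},\{f_i\},\{q_i\}$: the independent sets hitting some triangle in $\emptyset$ or $\{q_i\}$ occur in pairs — these two choices admit exactly the same completions on the rest of the graph, since $q_i$ has no neighbour outside its triangle — and cancel modulo $2$. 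The surviving independent sets pick $t_i$ or $f_i$ for every $i$, hence encode a Boolean assignment $\tau$; given $\tau$, a vertex $w_j$ may be added precisely when none of its three neighbours is in the set, i.e.\ when $\tau$ falsifies all three literals of $C_j$, i.e.\ when $\tau$ violates $C_j$. So there are $2^{\#\{j\,:\,\tau\text{ violates }C_j\}}$ completions, which is odd iff $\tau$ satisfies $\varphi$. Summing over $\tau$, the number of independent sets of $G$ is congruent modulo $2$ to the number of satisfying assignments of $\varphi$.

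Finally I would assemble: a randomized algorithm for counting independent sets in time $\exp(o(m))$ ($m$ the number of edges), applied to the graph built from each sparsified piece $\varphi_i$, which has $|E|=O(n)$, decides from the parity of its output and the at-most-one promise whether $\varphi_i$, and hence $\varphi$, is satisfiable; after standard error amplification over the $\exp(\epsilon n)$ pieces this solves \pp{Unique $3$-Sat} in randomized time $\exp((\epsilon+o(1))n)$, contradicting \rETH{} (via the first step) once $\epsilon$ is small. This proves the first half of the statement, and since counting the independent sets of $G$ is precisely \pp{\#$2$-Sat} for the antimonotone $2$-CNF $\bigwedge_{uv\in E(G)}(\bar u\vee\bar v)$, which has $|E(G)|$ clauses, the same lower bound follows for \pp{\#$2$-Sat}. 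The hard part is the middle step: the reduction must be \emph{simultaneously} parsimonious modulo a small number — so the uniqueness promise lets us avoid interpolation and get an asymptotically tight bound — and of linear size — so the bound is in terms of $m$. A \emph{fully} parsimonious such reduction would give the bound already under the weaker \cETH{}, which is open; the auxiliary vertices $q_i$ and the passage to parity, together with encoding both polarities of a variable as ``lies in the independent set'', are what make a local, linear-size construction work, and getting the modulo-$2$ cancellation bookkeeping right is where the care lies.
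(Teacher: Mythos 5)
Your proof is correct and follows the same overall strategy as the paper: reduce from \pp{Unique $3$-Sat} using the \rETH{}-hardness of \citeN{Calabro_isolation}, build from $\varphi$ a graph $G$ with $O(m)$ edges whose number of independent sets has the same parity as the number of satisfying assignments of $\varphi$, and read off satisfiability (under the uniqueness promise) from that single parity; the \pp{\#$2$-Sat} claim then follows by rewriting edges as antimonotone clauses. Where you genuinely differ is the gadget. The paper attaches to each variable an edge $\{x,\bar x\}$ plus a guard vertex $g_x$, and to each clause a $7$-clique $c_1,\dots,c_7$ (one vertex per satisfying partial assignment) plus a guard $g_c$; ``good'' independent sets then biject exactly with satisfying assignments, and the guard-toggling involution kills all the rest modulo~$2$. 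You instead use a triangle $\{t_i,f_i,q_i\}$ per variable and a \emph{single} vertex $w_j$ per clause adjacent to the three ``true'' literal vertices; $q_i$ plays precisely the role of the paper's $g_x$, but your clause gadget replaces the $8$-vertex construction by one vertex and a second, independent parity trick: given a rooted assignment $\tau$, the free choices among the $w_j$ number $2^{\#\{j\,:\,\tau\text{ violates }C_j\}}$, which is odd exactly when $\tau$ satisfies $\varphi$. This is noticeably leaner (no clause guard, no $7$-clique, no consistency wiring), at the small cost that good independent sets no longer biject with satisfying assignments but merely match their parity. One redundancy: the sparsification detour at the start is unnecessary, since the hardness result you quote (Corollary~2 of \citeN{Calabro_isolation}, the paper's Theorem~\ref{thm: Calabro_isolation}) is already stated with the lower bound $\exp(o(m))$ in the number of clauses, so you can feed your gadget the \pp{Unique $3$-Sat} instance directly; your use of sparsification is sound (the subformulas' solution sets are subsets of $\op{sat}(\varphi)$, so uniqueness is preserved), it simply re-proves something you already have.
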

We discuss the isolation technique and prove this theorem in
\S\ref{sec: indsets}.
\subsection*{Results: The Permanent}

For a set $S$ of rationals we define the following problems:
\begin{quote}\small
  \begin{description}
    \item[Name]  $\pp{Perm}^S$
    \item[Input] Square matrix $A$ with entries from $S$.
    \item[Output] The value of $\perm A$.
  \end{description}
\end{quote}
We write $\pp{Perm}$ for $\pp{Perm}^\N$. %
If $B$ is a bipartite graph with $A_{ij}$ edges from the $i$th vertex
in the left half to the $j$th vertex in the right half $(1\leq i,j\leq
n)$, then $\perm (A)$ equals the number of perfect matchings
of~$B$. %
Thus $\pp{Perm}$ and $\pp{Perm}^{0,1}$ can be viewed as counting the
perfect matchings in bipartite multigraphs and bipartite simple
graphs, respectively.
We express our lower bounds in terms of $m$, the number of non-zero
entries of~$A$. %
Without loss of generality, $n \leq m$, so the same bounds hold for
the parameter~$n$ as well. %
\begin{theorem}\mbox{ }
\label{thm: perm}
  \begin{enumerate}[(i)]
    \item\label{thmi: perm standard}%
      $\pp{Perm}^{-1,0,1}$ and $\pp{Perm}$
      cannot be computed in time $\exp(o(m))$ under \cETH.
    \item\label{thmi: perm zeroone}%
      $\pp{Perm}^{0,1}$ cannot be computed in time $\exp(o(m/\log n))$ under \cETH.
    \item\label{thmi: perm RETH}%
      $\pp{Perm}^{0,1}$ cannot be computed in time $\exp(o(m))$ under \rETH.
\end{enumerate}
\end{theorem}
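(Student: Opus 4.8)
The plan is to run Valiant's classical \cc{\#P}-hardness reduction for the permanent~\cite{V79} with tight control on the number of nonzero entries of the output matrix, and to feed it with the counting robustness of \cETH{} from Theorem~\ref{thm: counting_sparsification_essence}. By that theorem (with $d=3$), \cETH{} implies that \pp{\#$3$-Sat} has no $\exp(o(m'))$-time algorithm, where $m'$ is the number of clauses. Valiant's construction turns a $3$-CNF formula $\varphi$ into a square integer matrix $A$ of order $O(m')$, with all entries of constant absolute value, and $\perm A = c\cdot(\#\varphi)$ for an explicitly computable constant $c$ (a power of $4$ in the standard presentation); the size is $O(m')$ because a variable with $k$ occurrences costs a gadget of size $O(k)$, the occurrences sum to $3m'$, and the clause gadgets together with the equality gadgets that tie the occurrences of a variable together are linear as well. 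For Part~\iref{thmi: perm standard}, replacing the finitely many entries of absolute value larger than $1$ by fixed $O(1)$-size $\{-1,0,1\}$ edge gadgets yields a matrix over $\{-1,0,1\}$ with $m=O(m')$ nonzero entries and permanent still $c\cdot(\#\varphi)$, so reading off $\#\varphi$ reduces \pp{\#$3$-Sat} to $\pp{Perm}^{-1,0,1}$; for $\pp{Perm}$ over $\N$, pick $q=2^\mu+1$ with $2^\mu>2|\perm A|$ (so $\mu=O(m'\log m')$) and replace each entry $-1$ by $q-1$, which does not change the number of nonzero entries and changes $\perm A$ only by a multiple of $q$, so $\#\varphi$ is recovered from a single call modulo $q$. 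In both cases an $\exp(o(m))$-time algorithm would solve \pp{\#$3$-Sat} in time $\exp(o(m'))$, contradicting \cETH{}.

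For Part~\iref{thmi: perm zeroone} the obstruction is that a $0/1$ matrix cannot carry the weight $-1$, and simulating a weight-$w$ edge over $\{0,1\}$ costs $\Theta(\log w)$ extra vertices — this is the origin of the $\log n$ loss. Starting from the $\{-1,0,1\}$ matrix $A$ of Part~\iref{thmi: perm standard}, which has at most $N=O(m')$ entries equal to $-1$, I would turn those $-1$'s into a formal weight $t$, so that $\perm A(t)=\sum_{j\le N}c_j t^j$ is a polynomial with $\perm A(-1)=c\cdot(\#\varphi)$. For each $t\in\{1,\dots,N+1\}$ I replace the weight-$t$ edges by the standard binary-expansion $\{0,1\}$ gadget of size $O(\log t)=O(\log m')$, obtaining a $0/1$ matrix $A_t$ of dimension $\poly(m')$ with $O(m'\log m')$ nonzero entries and $\perm A_t=\perm A(t)$. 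Computing $\perm A_t$ at these $N+1$ points and interpolating recovers $\#\varphi$. Since $m=O(m'\log m')$ and $\log n=O(\log m')$, an $\exp(o(m/\log n))$-time permanent algorithm uses $\exp(o(m'))$ time per instance and $\exp(o(m'))$ time in total, again contradicting \cETH{} via Theorem~\ref{thm: counting_sparsification_essence}.

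For Part~\iref{thmi: perm RETH} I would remove the $\log n$ loss by avoiding interpolation altogether: reduce from \pp{Unique $3$-Sat}, for which \S\ref{sec: indsets} establishes, under \rETH{} and via the isolation lemma of~\cite{Calabro_isolation}, that there is no randomized $\exp(o(m'))$-time algorithm. On such an input $\#\varphi\in\{0,1\}$, so $\perm A\in\{0,c\}$; and since $2\equiv-1\pmod 3$, replacing each entry $-1$ of the $\{-1,0,1\}$ matrix $A$ by a fixed $O(1)$-size $\{0,1\}$ edge gadget of permanent $2$ produces a genuine $0/1$ matrix $A'$ with $m=O(m')$ nonzero entries and $\perm A'\equiv\perm A\pmod 3$. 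As $c$ is a power of $4$ and hence coprime to $3$, we get $\perm A'\equiv0\pmod 3$ precisely when $\varphi$ is unsatisfiable, so an $\exp(o(m))$-time algorithm for $\pp{Perm}^{0,1}$ decides \pp{Unique $3$-Sat} in time $\exp(o(m'))$, contradicting \rETH{}.

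The bulk of the work sits in the size-controlled version of Valiant's reduction: one must exhibit gadgets whose total size is linear in $m'$, keep every base weight of constant magnitude, and track the constant $c$ precisely enough to know it is coprime to the prime used in Part~\iref{thmi: perm RETH}. What makes the bounds tight rather than routine are the two lighter ideas layered on top — invoking the counting sparsification lemma (Theorem~\ref{thm: counting_sparsification_essence}) so the reduction can be measured against the clause parameter, and the observation that for a uniquely satisfiable formula the permanent modulo a fixed prime already decides satisfiability, which is what lets Part~\iref{thmi: perm RETH} dispense with interpolation and its polylogarithmic loss.
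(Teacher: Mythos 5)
Your proposal is correct and follows essentially the same three-stage plan as the paper: reduce \pp{\#$3$-Sat} (via counting sparsification) to $\pp{Perm}^{-1,0,1}$ with a size-preserving gadget construction, then eliminate the $-1$'s, and for part~\iref{thmi: perm RETH} avoid interpolation by working from \pp{Unique $3$-Sat} and reading the permanent modulo~$3$. The one genuine deviation is in the $\pp{Perm}^{-1,0,1}\red\pp{Perm}$ step of part~\iref{thmi: perm standard}: you substitute $-1\mapsto q-1$ for a large modulus $q$ with $2^\mu\equiv-1\pmod q$ and make a single oracle call, while the paper substitutes a formal variable for the $-1$'s and interpolates over $O(n)$ small-integer evaluations (thereby also producing the intermediate instances with weights in $[0,n]$ that it reuses directly for part~\iref{thmi: perm zeroone}). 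Both are sound; your modular substitution is a one-call reduction but produces entries of bit-length $\Theta(m'\log m')$, whereas the paper's interpolation keeps entries polynomially bounded and feeds naturally into the $0/1$-gadget stage. One small caveat worth making explicit if you flesh this out: for part~\iref{thmi: perm RETH} your argument relies on the multiplicative constant $c$ being coprime to $3$, so the gadget construction must be carried out (or normalized, as the paper does by balancing positive and negative literal occurrences) precisely enough to certify this.
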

The proof of this theorem is in \S\ref{sec: permanent}. %
For~\iref{thmi: perm standard}, we follow a standard reduction by
Valiant \cite{V79,Papa} but use a simple equality gadget derived
from~\cite{BD07} instead of Valiant's XOR-gadget, and we use
interpolation to get rid of the negative weights. %
To establish~\iref{thmi: perm zeroone} we simulate edge weights $w>1$
by gadgets of size logarithmic in $w$, which increases the number of
vertices and edges by a logarithmic factor. %
For~\iref{thmi: perm RETH} we use the isolation lemma and the reduction
from part \iref{thmi: perm standard}, and we simulate the edge
weights~$-1$ without interpolation by replacing them with~$2$ and
doing computation modulo~$3$.  Observe that~\iref{thmi: perm RETH}
is an asymptotically tight lower bound while~\iref{thmi: perm zeroone}
is not, but it also uses the stronger complexity assumption \rETH{}
instead of \cETH.

\subsection*{Results: The Tutte Polynomial}

The computational problem $\pp{Tutte}(x,y)$ is defined for each pair
$(x,y)$ of rationals.
\begin{quote}\small
\begin{description}
  \item[Name] $\pp{Tutte}(x,y)$.
  \item[Input] Undirected multigraph $G$ with $n$ vertices.
  \item[Output] The value of $T(G;x,y)$.
\end{description}\end{quote}
In general, parallel edges and loops are allowed; we write
$\pp{Tutte}^{0,1}(x,y)$ for the special case where the input graph is
simple.

Our main result is that, under \cETH{}, $\pp{Tutte}(x,y)$ cannot be
computed in time $\exp(o(n))$ for specific points $(x,y)$.
However, the size of the bound, and the graph classes for which it
holds, varies. %
We summarise our results in the theorem below, see also
Figure~\ref{fig: Tutte}. %
\newcommand\legend[2]{%
  \begin{tikzpicture}[baseline=(X.south)]
    \node[fill=#1,draw,inner sep=4pt] (X) {};
    \node[anchor=west,inner sep=0,align=left] at (.3,0) {\small #2};
  \end{tikzpicture}
  }%
For quick reference, we state the propositions in which the individual
results are proved and the techniques used in each case.
\begin{theorem}\label{thm: Tutte main result}%
  Let $(x,y)\in\Q^2$. %
  Under \cETH{},
  \begin{enumerate}[(i)]
    \item\label{thmi: Tutte general}
      \legend{blue!25!gray}{%
        \normalsize
      $\pp{Tutte}(x,y)$ cannot be computed in time $\exp(o(n))$\\
      if $(x-1)(y-1)\neq 1$ and $y\not\in\{0,\pm 1\}$,}
      \\[.3\baselineskip]
      {\upshape\footnotesize\mbox{}\hfill
      (
      Stretching and thickening
      ;
      Proposition~\ref{prop: individual points, multigraphs, nonzero q}
      in \S\ref{sec: thickening and interpolation}
      )}
    \item\label{thmi: Tutte linial}
      \legend{blue!25!gray}{%
        \normalsize
      $\pp{Tutte}^{0,1}(x,y)$ cannot be computed in time $\exp(o(n))$\\
      if $y=0$ and $x\not\in\{0,\pm 1\}$,}
      \\[.3\baselineskip]
      {\upshape\footnotesize\mbox{}\hfill
      (
      Linial's reduction
      ;
      Proposition~\ref{prop: Tutte linial} in
      Appendix~\ref{app: standard hardness results}
      )}
    \item\label{thmi: Tutte reliability}
      \legend{orange!80}{%
        \normalsize
      $\pp{Tutte}^{0,1}(x,y)$ cannot be computed in time $\exp(o( m/\log^2 m))$\\
      if $x=1$ and $y\neq 1$,} %
      \\[.3\baselineskip]
      {\upshape\footnotesize\mbox{}\hfill
      (
      Inflation with Wump graphs
      ;
      Proposition~\ref{prop: reliability}
      in \S\ref{sec: reliability}
      )}
    \item\label{thmi: Tutte simple}
      \legend{orange!50}{%
        \normalsize
      $\pp{Tutte}^{0,1}(x,y)$ cannot be computed in time $\exp(o( m/\log^3 m))$\\
      if $(x-1)(y-1)\not\in\{0,1\}$ and
      $(x,y)\not\in\{(-1,-1),(-1,0),(0,-1)\}$.} %
      \\[.3\baselineskip]
      {\upshape\footnotesize\mbox{}\hfill
      (
      Inflation with Theta graphs
      ;
      Proposition~\ref{prop: Tutte Theta}
      in \S\ref{sec: points simple}
      )}
  \end{enumerate}
\end{theorem}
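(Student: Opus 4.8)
The statement is a compilation: each item \iref{thmi: Tutte general}--\iref{thmi: Tutte simple} is established by the proposition named in the corresponding parenthetical, and for a given $(x,y)\in\Q^2$ satisfying the hypothesis of one of the items the theorem follows by invoking that proposition together with Theorem~\ref{thm: counting_sparsification_essence}. So my plan is to set up the common reduction scheme and then indicate how each proposition instantiates it. The source of hardness is always \pp{\#$3$-Sat}, which by Theorem~\ref{thm: counting_sparsification_essence} we may replace by \pp{\#$d$-Sat} for a suitable constant~$d$, with the lower bound read off from the number of clauses. It thus suffices, for a target $(x,y)$, to give a polynomial-time reduction from \pp{\#$d$-Sat} to the evaluation of $T(\,\cdot\,;x,y)$ --- or to several such evaluations on related graphs, recombined by polynomial interpolation --- that blows up the instance by only a constant factor for \iref{thmi: Tutte general} and \iref{thmi: Tutte linial}, and by a $\poly\log$ factor for \iref{thmi: Tutte reliability} and \iref{thmi: Tutte simple}.

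For \iref{thmi: Tutte general} I would proceed by \emph{anchor and transport}. First fix, for each hyperbola $H_q=\{(x,y):(x-1)(y-1)=q\}$ with $q\notin\{0,1\}$, an \emph{anchor} point on $H_q$ whose Tutte evaluation is \cETH{}-hard via a linear-size reduction from \pp{\#$d$-Sat}: for an integer $q\ge 3$ take $(1-q,0)$, where $T$ is, up to an easily computed factor, the number of proper $q$-colourings and \pp{\#$q$-Colouring} reduces linearly from \pp{\#$d$-Sat}; for the remaining~$q$ --- in particular $q=2$, where $(1-q,0)=(-1,0)$ is one of the exceptional easy points, and all non-integer $q$ --- anchor instead at a generic point of $H_q$ via a direct reduction through the $q$-state Potts partition function. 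Now transport: thickening (replace each edge by $k$ parallel copies) and stretching (replace each edge by a path of $k$ edges) each act on the Tutte polynomial by an identity of the shape $T(G';x,y)=(\text{prefactor})^{|E|}\cdot T(G;x',y')$ with $(x',y')$ again on $H_q$. Given a \pp{\#$d$-Sat}-derived anchor graph $H$ and a hypothetical $\exp(o(n))$ algorithm $\mathcal{A}$ for $T(\,\cdot\,;x,y)$, I would apply one constant stretch (to leave $y\in\{0,\pm1\}$ if necessary) and then $k$-thickenings for $k=1,\dots,O(n)$, run $\mathcal{A}$ on each resulting graph, divide out the nonzero prefactors, and so obtain the values at $y,y^{2},\dots,y^{O(n)}$ of the univariate polynomial that is $T(H;\,\cdot\,)$ restricted to $H_q$ (of degree $O(n)$ once denominators are cleared, the anchor graph being sparse). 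These arguments are pairwise distinct precisely because $y\notin\{0,\pm1\}$ excludes roots of unity, so interpolation recovers $T(H;\text{anchor})$ in time $O(n)\cdot\exp(o(n))=\exp(o(n))$, contradicting \cETH{}. The decisive point is that thickening creates no new vertices, so all $O(n)$ queried graphs have the same vertex count and nothing is lost in the exponent; this is why \iref{thmi: Tutte general} is tight for multigraphs, and it is Proposition~\ref{prop: individual points, multigraphs, nonzero q}. The excluded values of $y$ are exactly those breaking the scheme: $y=1$ is the degenerate hyperbola $q=0$, $y=0$ is treated separately, at $y=-1$ the thickened images collapse onto $\{\pm1\}$ and prefactors vanish, and $(x-1)(y-1)=1$ is excluded because $H_1$ already has a polynomial-time algorithm.

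Item \iref{thmi: Tutte linial} concerns the line $y=0$, where $T(\,\cdot\,;x,0)$ is the chromatic polynomial at $k=1-x$ up to sign and a power of~$k$; rather than interpolating (which would lose a $\poly\log$ factor and need not preserve simplicity) I would invoke Linial's reduction, which for every $x\notin\{0,\pm1\}$ builds from a \pp{\#$d$-Sat} instance a \emph{simple} graph on $O(n)$ vertices whose chromatic polynomial at $1-x$ encodes the number of satisfying assignments (Proposition~\ref{prop: Tutte linial}, in the appendix). For \iref{thmi: Tutte reliability} and \iref{thmi: Tutte simple} the output must stay simple, so thickening and stretching are unavailable and I would replace them by \emph{inflation} \inflate{G}{H}: every edge of $G$ is substituted by a fixed simple two-terminal gadget $H$, its terminals identified with the edge's endpoints. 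Inflation again acts like an ``effective edge'', $T(\inflate{G}{H};x,y)=(\text{prefactor})^{|E|}\cdot T(G;x_H,y_H)$, so a family of gadgets yields a family of effective evaluation points and interpolation over the family recovers a \cETH{}-hard anchor: on the reliability line $x=1$ the anchor is a point of the reliability polynomial and the gadgets are the Wump graphs \wump{j} (Proposition~\ref{prop: reliability}); in the general simple case the gadgets are Theta graphs and the anchor lies on the relevant nondegenerate hyperbola, which is why $(x-1)(y-1)\notin\{0,1\}$ is required, the three excluded points $(-1,-1),(-1,0),(0,-1)$ being the remaining easy points of \citeN{JVW90} (Proposition~\ref{prop: Tutte Theta}).

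The main obstacle lies in these last two items, and it is the source of the $\poly\log$ losses: a naive interpolation needs a gadget family whose largest member has $\Omega(n)$ edges, producing a quadratic blow-up and only an $\exp(\Omega(\sqrt{m}))$ bound. The remedy, mirroring the ``simulate an edge weight $w$ by a gadget of size $O(\log w)$'' trick used for $\pp{Perm}^{0,1}$ in Theorem~\ref{thm: perm}\iref{thmi: perm zeroone}, is to realise the required effective points by gadgets that combine smaller pieces in a binary fashion and so have only $\poly\log n$ vertices; the whole reduction then blows up the instance by only a $\poly\log$ factor, and pinning down the exact exponent of the logarithm (two on the reliability line, three in the general simple case) is the bulk of the work in Propositions~\ref{prop: reliability} and~\ref{prop: Tutte Theta}. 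A uniform low-level nuisance in all four propositions is verifying that prefactors and interpolation nodes are nonzero and pairwise distinct --- which is exactly what fails on the loci $(x-1)(y-1)\in\{0,1\}$, $y\in\{0,\pm1\}$, and at the sporadic easy points --- so each carries a short case analysis excluding those; granting the propositions, the theorem is immediate.
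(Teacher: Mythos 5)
Your proposal is structurally correct: the theorem is indeed a compilation theorem, each item is established by the cited proposition together with the sparsification result, and your account of the common scheme --- reduce from \pp{\#$d$-Sat}, produce a family of related graphs (by thickening or by inflation with a small gadget), interpolate, and read off a hard anchor --- matches the paper's organization, including the correct attribution of the $\poly\log$ losses to gadget sizes in the simple-graph propositions and the correct explanation of the excluded loci.

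One detail in your description of item \iref{thmi: Tutte general} deviates from the paper and is actually undercooked. You propose anchoring at $(1-q,0)$ via $q$-colouring for integer $q\geq 3$, and ``a direct reduction through the $q$-state Potts partition function'' for $q=2$ and non-integer $q$. But for non-integer $q$ the Potts partition function is not a counting problem, so there is no ``direct reduction'' to invoke. What Proposition~\ref{prop: individual points, multigraphs, nonzero q} actually relies on is Proposition~\ref{prop: Ising hyperbola} (for $q=2$, via \pp{\#MaxCut}) and Proposition~\ref{prop: hyp} (for all $q\notin\{1,2\}$, via \pp{\#$3$-Terminal MinCut}): the latter uses the multivariate polynomial with three negative-weight edges (Lemma~\ref{lem: three negative edges}), followed by deletion--contraction to eliminate the negative weights and a $3$-stretch to restore simplicity. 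This machinery is the real content behind the ``anchor''; it proves hardness of the entire coefficient vector of the univariate restriction $w\mapsto Z_0(G;q,w)$, after which thickening plus interpolation transports to a single point. Since you do correctly cite the propositions as carrying the argument, this is an expository slip rather than a logical gap in the compilation, but your sketch of how Proposition~\ref{prop: individual points, multigraphs, nonzero q} itself is proved does not go through as written for $q\notin\Z$.
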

\begin{figure}[tpb]
\[\vcenter{\hbox{
    \begin{tikzpicture}
      [line cap=round,line join=round,x=.4cm,y=.4cm,scale=1.4]
  \def\xmin{-2.5}; \def\ymin{-2.1};\def\xmax{3.3}; \def\ymax{3.0};

\begin{scope}
  \clip (\xmin,\ymin) rectangle (\xmax,\ymax);
   \fill[color=blue!25!gray] (\xmin,\ymin) rectangle (\xmax,\ymax);
   \draw[color=white, ultra thick] (\xmin,1)--(\xmax,1);
   \draw[color=orange!50, ultra thick] (\xmin,-1)--(\xmax,-1);

   \draw[fill,orange!80] (1,-1) circle (1pt);
   \draw[fill,orange!80] (1,0) circle (1pt);

   \draw[fill] (-1,-1) circle (1pt);
   \draw[fill] (-1,0) circle (1pt);
   \draw[fill] (0,-1) circle (1pt);
   \draw[fill] (0,0) circle (1pt);
   \draw[fill] (1,1) circle (1pt);

  \draw[smooth,ultra thick,samples=100,domain=\xmin:.9] plot(\x,{1+1/(\x-1)});
  \draw[smooth,ultra thick,samples=100,domain=1.1:\xmax] plot(\x,{1+1/(\x-1)});
\end{scope}
  \draw[thick,-latex] (\xmin,\ymin) -- (\xmax+.5,\ymin);
  \draw[shift={(\xmax,\ymin)}] node[below] {$x$};
  \foreach \x in {-1,0,1}
  \draw[shift={(\x,\ymin)}] (0pt,0pt) -- (0pt,-2pt)
                                   node[below] {$\x$};

  \draw[thick,-latex] (\xmin,\ymin) -- (\xmin,\ymax+.5);
  \draw[shift={(\xmin,\ymax)}] node[left] {$y$};
  \foreach \y in {-1,0,1}
      \draw[shift={(\xmin,\y)}] (0pt,0pt) -- (-2pt,0pt)
                                    node[left] {$\y$};
\end{tikzpicture}}}
\vcenter{\hbox{
\begin{tikzpicture}
      [line cap=round,line join=round,x=.4cm,y=.4cm,scale=1.4]
  \def\xmin{-2.5}; \def\ymin{-2.1};\def\xmax{3.3}; \def\ymax{3.0};

  \begin{scope}
  \clip (\xmin,\ymin) rectangle (\xmax,\ymax);

  \draw[fill, orange!50]  (\xmin,\ymin) rectangle (\xmax, \ymax);
  \draw[color=blue!25!gray, ultra thick] (\xmin,0)--(\xmax,0);
  \draw[color=white, ultra thick] (\xmin,1)--(\xmax,1);
  \draw[color=orange!80, ultra thick] (1,\ymin)--(1,\ymax);

  \draw[smooth,ultra thick,samples=100,domain=\xmin:.9] plot(\x,{1+1/(\x-1)});
  \draw[smooth,ultra thick,samples=100,domain=1.1:\xmax] plot(\x,{1+1/(\x-1)});

   \draw[fill] (-1,-1) circle (1pt);
   \draw[fill] (-1,0) circle (1pt);
   \draw[fill] (0,-1) circle (1pt);
   \draw[fill] (0,0) circle (1pt);
   \draw[fill] (1,1) circle (1pt);
  \end{scope}
  \draw[thick,-latex] (\xmin,\ymin) -- (\xmax+.5,\ymin);
  \draw[shift={(\xmax,\ymin)}] node[below] {$x$};
  \foreach \x in {-1,0,1}
  \draw[shift={(\x,\ymin)}] (0pt,0pt) -- (0pt,-2pt)
                                   node[below] {$\x$};

  \draw[thick,-latex] (\xmin,\ymin) -- (\xmin,\ymax+.5);
  \draw[shift={(\xmin,\ymax)}] node[left] {$y$};
  \foreach \y in {-1,0,1}
      \draw[shift={(\xmin,\y)}] (0pt,0pt) -- (-2pt,0pt)
                                    node[left] {$\y$};
\end{tikzpicture}
}}
\begin{array}{l}
  \legend{blue!25!gray}{no $\exp(o( n))$
    }\\
  \legend{orange!80}{no $\exp(o(n/\log^2 n))$
    }\\
  \legend{orange!50}{no $\exp(o(n/\log^3 n))$
    }\\
  \legend{white}{no $n^{O(1)}$ 
    }\\
  \legend{black}{$n^{O(1)}$ 
    }
\end{array}
\]
\caption{\label{fig: Tutte}\small Exponential time complexity under
  \cETH{} of the Tutte plane for multigraphs (left) and simple graphs
  (right) in terms of $n$, the number of vertices. %
  The white line $y=1$ on the map is uncharted territory, and we only
  have the \#P-hardness. %
  The black hyperbola $(x-1)(y-1)=1$ and the four points close to the
  origin are in P.
  Everywhere else, in the shaded regions, we prove a lower bound
  exponential in $n$, or within a polylogarithmic factor of it.}
\end{figure}
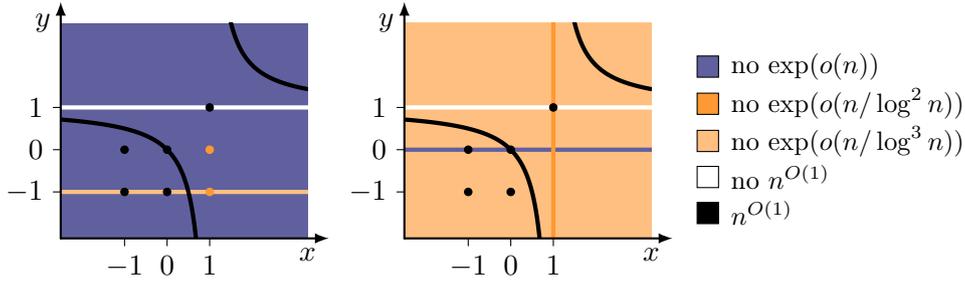%

Above, the results \iref{thmi: Tutte reliability} and \iref{thmi:
Tutte simple} are stated in terms of the parameter~$m$, the number of
edges of the given graph, but the same results also hold for the
parameter~$n$, the number of vertices, because $n\leq m$ in connected
graphs. %
The formulation with respect to~$m$ gives a stronger hardness result
under \cETH{} since~$m$ can potentially be much larger than~$n$.
This is in the same spirit as the sparsification lemma of
\citeN{IPZ01} and Theorem~\ref{thm: counting_sparsification_essence}.
Using this stronger formulation, Theorem~\ref{thm: Tutte main result}
can be used as a starting point for further hardness reductions under
\cETH{}.

In an attempt to prove Theorem~\ref{thm: Tutte main result}, we may
first turn to the literature, which contains a cornucopia of
constructions for proving hardness of the Tutte polynomial in various
models. %
In these arguments, a central role is played by graph transformations
called thickenings and stretches. %
A $k$-thickening replaces every edge by a bundle of~$k$ edges
\begin{tikzpicture}[scale=0.38]
  \node[draw,circle,fill=white,inner sep=1.2pt] (left)  at (0*1.4,0) {};
  \node[draw,circle,fill=white,inner sep=1.2pt] (right) at (1*1.4,0) {};
  \draw[bend left]  (left) edge (right);
  \draw             (left) edge (right);
  \draw[bend right] (left) edge (right);
\end{tikzpicture},
and a $k$-stretch replaces every edge by a path of $k$ edges
\begin{tikzpicture}[scale=0.38]
  \node[draw,circle,fill=white,inner sep=1.2pt] (left)  at (0*1.4,0) {};
  \node[draw,circle,fill=white,inner sep=1.2pt] (mid1)  at (1*1.4,0) {};
  \node[draw,circle,fill=white,inner sep=1.2pt] (mid2)  at (2*1.4,0) {};
  \node[draw,circle,fill=white,inner sep=1.2pt] (right) at (3*1.4,0) {};
  \draw[bend left]  (left) edge (mid1);
  \draw[bend left]  (mid1) edge (mid2);
  \draw[bend left]  (mid2) edge (right);
\end{tikzpicture}. %
This is used to ``move'' an evaluation from one point to another. %
For example, if $H$ is the $2$-stretch of~$G$ then
$T(H; 2,2)\sim T(G; 4,\frac{4}{3})$. %
Thus, every algorithm for $(2,2)$ works also at $(4,\frac{4}{3})$,
connecting the complexity of the two points. %
These reductions are very well-developed in the literature, and are
used in models that are immune to polynomial-size changes in the input
parameters, such as \cc{\#P}-hardness and approximation complexity. %
However, we cannot always afford such constructions in our setting,
otherwise our bounds would be of the form
$\exp(\Omega(n^{1/r}))$ for some constant $r$ depending on the blow-up
in the proof. %
In particular, the parameter $n$ is destroyed already by a $2$-stretch
in a non-sparse graph.

The proofs are in \S\ref{sec: hyperbolas}--\S\ref{sec: points simple}.
Where we can, we sample from established methods, carefully avoiding
or modifying those that are not parameter-preserving. %
At other times we require more subtle techniques, e.g., the
constructions in \S\ref{sec: points simple}, which use graph products
with graphs of polylogarithmic size instead of thickenings and
stretches. %
Like many recent papers, we use Sokal's multivariate version of the
Tutte polynomial, which vastly simplifies many of the technical
details.

\subsection*{Consequences}

The permanent and Tutte polynomial are equivalent to, or
generalisations of, various other graph problems, so our lower bounds
under \rETH{} and \cETH{} hold for these problems as well. %
In particular,  the
following graph polynomials (for example, as a list of their
coefficients) cannot be computed in time $\exp(o(m))$ for a given simple graph: the Ising partition function,
the $q$-state Potts partition function ($q\neq 0,1,2)$, the
reliability polynomial, the chromatic polynomial, and the flow
polynomial. %
Moreover, our results show that the following
counting problems on multigraphs cannot be solved in time $\exp(o(n))$: \#~perfect matchings, \# cycle
covers in digraphs, \# connected spanning subgraphs, all-terminal
graph reliability with given edge failure probability $p>0$, \#
nowhere-zero $k$-flows ($k\neq 0,\pm 1$), and \# acyclic orientations.

The lower bound for counting the number of perfect matchings holds
even in bipartite graphs, where an $O(1.414^n)$ algorithm is given by
Ryser's formula. %
Such algorithms are also known for general graphs~\cite{BH08}, the
current best bound is $O(1.619^n)$ \cite{K09}.

For simple graphs, we have $\exp(\Omega(m))$ lower bounds for
\#~perfect matchings and \#~cycle covers in digraphs.

\section{Counting Independent Sets}
\label{sec: indsets}

In this section, we establish Theorem~\ref{thm: 2sat}, the hardness of
counting independent sets and of \pp{\#$2$-Sat}.
For the proof, we make use of the randomized \ETH-hardness of the following
problem.
\begin{quote}\small
\begin{description}
  \item[Name]   $\pp{Unique $3$-Sat}$.
  \item[Input]  $3$-CNF formula $\varphi$ with $m$ clauses and at most
    one satisfying assignment.
  \item[Decide] Is $\varphi$ satisfiable?
\end{description}
\end{quote}
Calabro et al.~\cite{Calabro_isolation} prove an isolation lemma for $d$-CNF formulas
to show that solving this problem in subexponential time implies that
the (randomized) exponential time hypothesis fails.
\begin{theorem}[Corollary~2 of Calabro et al.~\cite{Calabro_isolation}]
  \mbox{}\\
  \rETH{} implies that \pp{Unique $3$-Sat} cannot be computed in time
  $\exp(o(m))$.
  \label{thm: Calabro_isolation}
\end{theorem}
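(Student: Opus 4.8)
This is Corollary~2 of~\cite{Calabro_isolation}; we sketch how it is obtained from the isolation lemma for $d$-CNF formulas proved there, combined with the sparsification lemma and a routine amplification.

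First I would pass to the clause-number formulation of \rETH{}. By the sparsification lemma of~\citeN{IPZ01} --- whose reduction is deterministic and hence harmless inside a randomized algorithm --- \rETH{} is equivalent to the assertion that no randomized algorithm decides \pp{$3$-Sat} in time $\exp(o(m))$, with $m$ the number of clauses. Indeed, given an $n$-variable $3$-CNF formula, sparsification produces for every $\varepsilon>0$ a list of at most $\exp(\varepsilon n)$ formulas with $O_\varepsilon(n)$ clauses each such that the input is satisfiable iff some list member is; running a subexponential-in-$m$ algorithm on each list member would then, for every $c>0$, decide the input in time $\exp(cn)$.

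Next I would invoke the isolation lemma of~\citeN{Calabro_isolation}: there is a randomized polynomial-time procedure turning a $3$-CNF formula $\varphi$ with $m$ clauses into a $3$-CNF formula $\varphi'$ with $O(m)$ clauses such that (i)~every satisfying assignment of $\varphi'$ satisfies $\varphi$, so $\varphi$ unsatisfiable implies $\varphi'$ unsatisfiable, and (ii)~if $\varphi$ is satisfiable then $\varphi'$ has exactly one satisfying assignment with probability at least $1/\poly(m)$. Assume for contradiction that \pp{Unique $3$-Sat} admits a bounded-error randomized algorithm running in time $\exp(o(m))$. On input $\varphi$ I would run the isolation procedure $\poly(m)$ times independently; each output $\varphi'$ may or may not obey the \pp{Unique $3$-Sat} promise, but I would feed each of them (after the usual error reduction) to the assumed algorithm and answer ``satisfiable'' iff at least one call does. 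If $\varphi$ is unsatisfiable then every $\varphi'$ is unsatisfiable and, with high probability, every call answers ``no''; if $\varphi$ is satisfiable then, with high probability, at least one of the $\poly(m)$ runs produces a uniquely satisfiable $\varphi'$, on which the algorithm --- now facing a valid promise instance --- answers ``yes''. This is a bounded-error randomized algorithm for \pp{$3$-Sat} of running time $\poly(m)\cdot\exp(o(m))=\exp(o(m))$, contradicting the clause-number form of \rETH{} established above.

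The one genuinely hard ingredient is the isolation lemma itself, which I would quote rather than reprove. The difficulty is that a direct Valiant--Vazirani isolation appends $\Theta(\log|S|)$ parity constraints on up to $n$ variables (with $S$ the set of satisfying assignments of $\varphi$), and expanding a wide parity into width-$3$ clauses costs $\Theta(n)$ fresh variables and clauses apiece, inflating the instance quadratically and wrecking any bound phrased in terms of $m$. Circumventing this --- by random restrictions to the appropriate ``scale'' of the solution set, breaking only the residual ties with a bounded-width gadget of linear size --- is exactly the work carried out in~\cite{Calabro_isolation}, and it is this lemma, together with the two off-the-shelf ingredients above, that yields the theorem.
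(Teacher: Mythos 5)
The paper does not prove Theorem~\ref{thm: Calabro_isolation}; it quotes it as Corollary~2 of Calabro et al.~\cite{Calabro_isolation}, so there is no internal proof to compare against. Your sketch is an accurate reconstruction of how that corollary follows from the isolation lemma of that paper: sparsification (deterministic, hence safe inside a randomized algorithm) gives the clause-parameter form of \rETH{}; the isolation lemma, treated as a black box, produces from a $3$-CNF with $m$ clauses a $3$-CNF with $O(m)$ clauses that implies the original (so unsatisfiability is preserved) and is uniquely satisfiable with probability $1/\poly(m)$ whenever the original is satisfiable; and $\poly(m)$-fold repetition with error-reduced calls turns this into a bounded-error randomized $3$-Sat algorithm of time $\poly(m)\cdot\exp(o(m))=\exp(o(m))$. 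You also handle the promise issue correctly: runs on non-unique instances yield only one-sided noise that is absorbed by taking the disjunction of answers, and the unsatisfiable case is protected by a union bound after error reduction. Your diagnosis of why plain Valiant--Vazirani fails---width-reducing each of the $\Theta(\log|S|)$ wide XOR constraints costs $\Theta(n)$ clauses and inflates $m$ quadratically---is exactly the obstruction that the bounded-width isolation lemma of~\cite{Calabro_isolation} is designed to overcome. This is correct as a proof sketch of the cited result.
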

We are now in the position to prove Theorem~\ref{thm: 2sat}.
\begin{theorem*}[Theorem \ref{thm: 2sat} (restated).]%
  Under \rETH{}, there is no randomized algorithm that computes the
  number of all independent sets in time
  $\exp(o(m))$, where $m$ is the number of edges.
  Under the same assumption, there is no randomized algorithm for
  \pp{\#$2$-Sat} that runs in time $\exp(o(m))$, where $m$ is the
  number of clauses.
\end{theorem*}
\begin{proof}
  Let $\varphi$ be an instance of \pp{Unique $3$-Sat} with $m$
  clauses.
  We construct a graph~$G$ with $O(m)$ edges that has an odd number of
  independent sets if and only if $\varphi$ is satisfiable.
  For each variable~$x$, we introduce vertices $x$ and
  ${\overline x}$, and the edge~$(x {\overline x})$.
  This makes sure that any independent set of $G$ chooses at most one
  of $\{x,\overline{x}\}$, so we can interpret the independent set as
  a partial assignment to the variables of $\varphi$.
  For each clause $c=(\ell_1\vee\ell_2\vee\ell_3)$ of $\varphi$, we
  introduce a clique in~$G$ that consists of seven vertices
  $c_1,\dots,c_7$.
  These vertices correspond to the seven partial assignments that
  assign truth values to the literals $\ell_1$, $\ell_2$, and $\ell_3$
  ~in such a way that~$c$ is satisfied.
  Any independent set of~$G$ contains at most one~$c_i$ for each
  clause~$c$.
  To ensure that the independent set chooses the variables and partial
  assignments of the clauses consistently, we add an edge for
  every~$c_i$ and every variable~$x$ occurring in the clause~$c$:
  If the partial assignment that corresponds to~$c_i$ sets~$x$ to
  true, we add $(c_i \overline{x})$ to $G$;
  otherwise, we add $(c_i x)$ to $G$.
  To finalize the construction, we introduce guard vertices~$g_x$
  and~$g_c$ for every variable~$x$ and every clause~$c$, along with
  the edges $(g_x x)$, $(g_x \overline x)$, and $(g_c c_i)$ for
  $i=1,\dots,7$.

  We now prove that~$G$ has the required properties.
  First, any independent set contains at most~$n$ literal vertices and
  at most~$m$ clause vertices.
  \emph{Good} independent sets are those that contain exactly~$n$
  literal and~$m$ clause vertices (and no guard vertex).
  Good independent sets correspond to the satisfying assignments of
  $\varphi$ in a natural way.
  We now show that the number of bad independent sets is even.
  For this, let $S$ be a bad independent set, that is, $S$ is disjoint
  from $\{x,\overline{x}\}$ for some~$x$ or it is disjoint from
  $\{c_1,\dots,c_7\}$ for some clause~$c$.
  By construction, the neighbourhood of either~$g_x$ or~$g_c$ is
  disjoint from~$S$.
  Let~$g$ be the lexicographically first guard vertex whose
  neighbourhood is disjoint from~$S$.
  Both the sets $S\setminus\{g\}$ and $S\cup\{g\}$ are bad independent
  sets and~$S$ is one of these sets.
  Formally, we can therefore define a function that maps these sets
  onto each other.
  This function is a well-defined involution on the set of bad
  independent sets, and it does not have any fixed points.
  Therefore, the number of bad independent sets is even, and the
  parity of the number of independent sets of~$G$ is equal to the
  parity of the number of satisfying assignments of~$\varphi$.

  The above reduction shows that an $\exp(o(m))$-time algorithm for
  counting independent sets modulo $2$ implies an $\exp(o(m))$-time
  algorithm for \pp{Unique $3$-Sat}.
  By Theorem~\ref{thm: Calabro_isolation}, this implies that \rETH{}
  fails.

  To establish the hardness of \pp{\#$2$-Sat}, we reduce from counting
  independent sets.
  Let~$G$ be a graph.
  For each vertex~$v$, we introduce a variable~$v$, and each edge
  $(uv)$ becomes a clause $(\overline{u}\vee\overline{v})$.
  The satisfying assignments of the so constructed $2$-CNF formula are
  in one-to-one correspondence with the independent sets of~$G$.
\end{proof}

\section{The Permanent}
\label{sec: permanent}

This section contains the proof of Theorem~\ref{thm: perm}. %
With $[0,n]=\{0,1,\ldots, n\}$ we establish the reduction chain
$\pp{\#$3$-Sat} \red \pp{Perm}^{-1,0,1} \red \pp{Perm}^{[0,n]} \red
\pp{Perm}^{0,1}$ while taking care of the instance sizes.

\begin{theorem*}[Theorem~\ref{thm: perm} (restated).]\mbox{ }
  \begin{enumerate}[(i)]
    \item
      $\pp{Perm}^{-1,0,1}$ and $\pp{Perm}$
      cannot be computed in time $\exp(o(m))$ under \cETH.
    \item
      $\pp{Perm}^{0,1}$ cannot be computed in time $\exp(o(m/\log n))$ under \cETH.
    \item
      $\pp{Perm}^{0,1}$ cannot be computed in time $\exp(o(m))$ under \rETH.
\end{enumerate}
\end{theorem*}
\begin{proof}
  To establish \iref{thmi: perm standard}, we reduce $\pp{\#$3$-Sat}$
  in polynomial time to $\pp{Perm}^{-1,0,1}$ such that \mbox{$3$-CNF}
  formulas~$\varphi$ with~$m$ clauses are mapped to graphs~$G$
  with~$O(m)$ edges. %
  For technical reasons, we preprocess~$\varphi$ such that every
  variable~$x$ occurs equally often as~a positive literal and as a
  negative literal~$\bar x$ (e.g., by adding trivial clauses of the
  form $(x\vee \bar x \vee \bar x)$ to~$\varphi$). %
  We construct~$G$ with $O(m)$ edges and weights $w:E \to \{\pm 1\}$
  such that $\pp{\#Sat}(\varphi)$ can be derived from $\perm G$ in
  polynomial time. %
  For weighted graphs, the permanent is
  \begin{equation*}
    \perm G =\sum_{C \subseteq E} w(C)\,,\qquad
    \text{where $w(C)=\prod_{e\in C} w(e)$}\,.
  \end{equation*}
  The sum above is over all cycle covers $C$ of $G$, that is,
  subgraphs $(V,C)$ with an in- and outdegree of $1$ at every vertex.

  \begin{figure}[tbp]
    \[
    \vcenter{\hbox{
      \begin{tikzpicture}[node distance=1.5cm,auto]
        \tikzstyle{every edge}+=[dedge]

        \node [bullet]               (top) {};
        \node [bullet, below of=top] (bot) {};

        \draw [bend right=45] (top) edge node[left]  {$x$}      (bot);
        \draw [bend left=45]  (top) edge node[right] {$\bar x$} (bot);
        \draw                 (bot) edge (top);
      \end{tikzpicture}
}}
\qquad\vcenter{\hbox{
      \begin{tikzpicture}[auto]
        \tikzstyle{every edge}+=[dedge]
        \tikzstyle{every loop}=[]

        \node [bullet]             (bot) at (-90:1.0) {};
        \node [bullet]             (tol) at (150:1.0) {};
        \node [bullet]             (tor) at (30:1.0)  {};
        \node [bullet]             (mid) at (-90:.4)     {};

        \draw [bend left=50] (tol) edge node {$\bar{\ell_2}$} (tor);
        \draw [bend left=50] (tor) edge node {$\bar{\ell_3}$} (bot);
        \draw [bend left=50] (bot) edge node {$\bar{\ell_1}$} (tol);
        \draw [bend right=10](mid) edge (tor);
        \draw [bend left=10] (mid) edge (tol);
        \draw [bend right=10](tor) edge (tol);
        \draw [bend right=10] (tol) edge (tor);
        \draw [bend right=10](tor) edge (mid);
        \draw [bend left=10] (tol) edge (mid);
        \draw [bend right=20](mid) edge (bot);
        \draw [bend right=20](bot) edge (mid);
      \end{tikzpicture}
}}
\qquad\vcenter{\hbox{
    \begin{tikzpicture}[auto]
        \tikzstyle{every edge}+=[dedge]
        \tikzstyle{every loop}=[]

        \node [bullet,label=left:{$u$}] (ul)  at (135:1.5) {};
        \node [bullet]                  (uvl) at (180:0.5) {};
        \node [bullet,label=left:{$v$}] (vl)  at (205:1.5) {};

        \node [bullet,label=right:{$u\smash{'}$}] (ur)  at (45:1.5) {};
        \node [bullet]                            (uvr) at (0:0.5)   {};
        \node [bullet,label=right:{$v\smash{'}$}] (vr)  at (335:1.5) {};

        \draw [bend right=15] (ul)  edge (uvl);
        \draw [bend right=15] (uvl) edge (vl);
        \draw [bend left=15]  (ur)  edge (uvr);
        \draw [bend left=15]  (uvr) edge (vr);

        \draw [bend left=10]  (uvl) edge (uvr);
        \draw [bend left=10]  (uvr) edge (uvl);

        \node [bullet] (top) at (90:.8) {};
        \draw [bend left=10] (uvl) edge (top);
        \draw [bend left=10] (top) edge (uvl);
        \draw [bend left=10] (uvr) edge (top);
        \draw [bend left=10] (top) edge (uvr);
        \draw [in=60,  out=120, loop] (top) edge node {$-1$} (top);
        \draw [in=-120,out=-60, loop] (uvl) edge (uvl);
        \draw [in=-120,out=-60, loop] (uvr) edge (uvr);
      \end{tikzpicture}}}
\]
\caption{\label{fig:perm-gadgets}%
  Left: A selector gadget for variable $x$. %
  Depending on which of the two cycles is chosen, we assume~$x$ to be
  set to true or false. %
  Middle: A clause gadget for the clause $\ell_1\vee\ell_2\vee\ell_3$.
  The gadget allows all possible configurations for the outer edges,
  except for the case that all three are chosen (which would
  correspond to $\ell_1=\ell_2=\ell_3=0$). %
  Right: An equality gadget that replaces two edges $uv$ and~$u'v'$.
  The top loop carries a weight of~$-1$. %
  It can be checked that the gadget contributes a weight of~$-1$ if
  all four outer edges are taken, $+2$ if none of them is taken,
  and~$0$ otherwise.  }
  \end{figure}
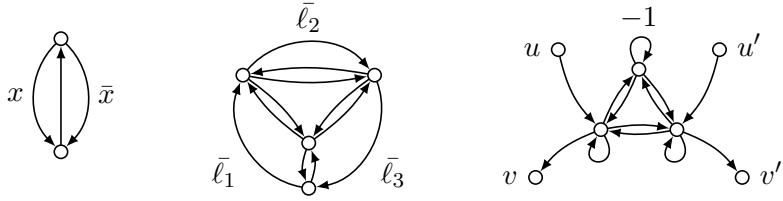
  In Figure~\ref{fig:perm-gadgets}, the gadgets of the construction are
  depicted. %
  For every variable~$x$ that occurs in $\varphi$, we add a
  \emph{selector gadget} to $G$. %
  For every clause $c=\ell_1 \vee \ell_2 \vee \ell_3$ of~$\varphi$, we
  add a \emph{clause gadget} to $G$. %
  Finally, we connect the edge labelled by a literal~$\ell$ in the
  selector gadget with all occurrences of $\ell$ in the clause
  gadgets, using \emph{equality gadgets}. %
  That is, we use a fresh copy of the equality gadget for each
  occurrence of a literal.
  For the first occurrence of the literal, we replace the corresponding
  edge in the selector gadget with a path of length two and identify
  this path with the path from $u$ to $v$ in the corresponding copy of
  the equality gadget.
  Furthermore, we replace the corresponding edge in the clause gadget
  with a path of length two and identify this path with the path from
  $u'$ to $v'$.
  For subsequent occurrences of the literal, we subdivide one of the
  edges on the corresponding path of the selector even further and use
  a new equality gadget as before.
  This concludes the construction of~$G$.

  The number of edges of the resulting graph $G$ is linear in the
  number of clauses. %
  The correctness of the reduction follows along the lines
  of~\cite{Papa} and~\cite{BD07}. %
  The satisfying assignments stand in bijection to cycle covers of
  weight~$(-1)^i 2^j$ where $i$ (resp.\ $j$) is the number of
  occurrences of literals set to false (resp.\ true) by the
  assignment, and all other cycle covers sum up to~$0$. %
  Since we preprocessed~$\varphi$ such that $i=j$ holds and $i$ is
  constant over all assignments, we obtain
  $\perm G=(-2)^i \cdot \pp{\#Sat}(\varphi)$.

  For the second part of \iref{thmi: perm standard}, we reduce
  $\pp{Perm}^{-1,0,1}$ in polynomial time to $\pp{Perm}^{[0,n]}$ by
  interpolation: %
  On input~$G$, we conceptually replace all occurrences of the
  weight $-1$ by a variable~$x$ and call this new graph~$G_x$. %
  We can assume that only loops have weight~$x$ in~$G_x$ because the
  output graph~$G$ from the previous reduction has weight~$-1$ only on
  loops. %
  Then $p(x)=\perm G_x$ is a polynomial of degree $d \leq n$.

  If we replace $x$ by a value $a\in [0,n]$, then $G_a$ is a weighted
  graph with as many edges as $G$. %
  As a consequence, we can use the oracle to compute $\perm G_a$ for
  $a=0,\dots,d$ and then interpolate, to get the coefficients of the
  polynomial~$p(x)$. %
  At last, we return the value $p(-1)=\perm G$.  This completes the
  reduction, which queries the oracle $d+1$ graphs that have at most
  $m$ edges each.

  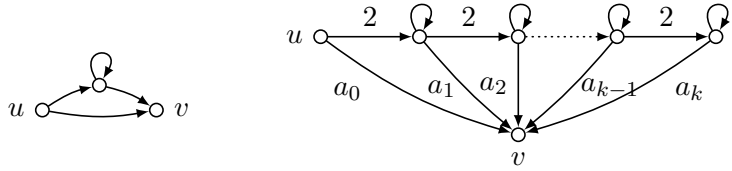
\begin{figure}[tpb]
    \[\vcenter{\hbox{
      \begin{tikzpicture}[auto]
        \tikzstyle{every loop}=[]
        \tikzstyle{every edge}+=[dedge]
        \node [bullet,label=left:{$u$}]  (u)   at (180:.75) {};
        \node [bullet]                   (top) at (90:0.33) {};
        \node [bullet,label=right:{$v$}] (v)   at (0:.75)   {};

        \draw [bend right=10] (u)    edge (v);
        \draw [bend left=10]  (u)    edge (top);
        \draw [bend left=10]  (top)  edge (v);
        \draw [in=60, out=120,loop]    (top)  edge (top);
      \end{tikzpicture}
}}
\qquad
    \vcenter{\hbox{
      \begin{tikzpicture}[node distance=1.3cm,auto]
        \tikzstyle{every loop}=[]
        \tikzstyle{every edge}+=[dedge]
        \node [bullet,label=left:{$u$}]                 (u)      {};
        \node [bullet,right of=u]      (two1)   {};
        \node [bullet,right of=two1]   (two2)   {};
        \node [bullet,right of=two2]   (twok-1) {};
        \node [below of=two2,bullet,label=below:{$v$}]  (v)      {};
        \node [bullet,right of=twok-1] (twok)   {};

        \draw [bend right=10](u)      edge node[near start,below left]  {$a_0$}    (v);
        \draw [bend right=5] (two1)   edge node[left]  {$a_1$}    (v);
        \draw [bend left=0]  (two2)   edge node[left]  {$a_2$}    (v);
        \draw [bend left= 5] (twok-1) edge node[right] {$a_{k-1}$} (v);
        \draw [bend left=10] (twok)   edge node[near start,below right] {$a_{k}$}  (v);

        \draw         (u)      edge node {$2$} (two1)  ;
        \draw         (two1)   edge node {$2$} (two2)  ;
        \draw [dotted](two2)   edge            (twok-1);
        \draw         (twok-1) edge node {$2$} (twok)  ;

        \draw [in=60, out=120,loop ] (two1)   edge (two1);
        \draw [in=60, out=120,loop ] (two2)   edge (two2);
        \draw [in=60, out=120,loop ] (twok-1) edge (twok-1);
        \draw [in=60, out=120,loop ] (twok)   edge (twok);
      \end{tikzpicture}}}\]
    \caption{\label{fig:perm-weight-a}%
    Left: This gadget simulates in unweighted graphs edges $uv$ of
    weight $2$. %
    Right: This gadget simulates edges $uv$ of weight
    $a=\sum_{i=0}^k a_i 2^i$ with $a_i\in\left\{0,1\right\}$.
    }
  \end{figure}
  For \iref{thmi: perm zeroone}, we have to get rid of weights larger
  than~$1$. %
  Let $G_a$ be one query of the last reduction. %
  Again we assume that $a\leq n$ and that weights $\neq 1$ are only
  allowed at loop edges. %
  We replace every edge of weight~$a$ by the gadget that is drawn in
  Figure~\ref{fig:perm-weight-a}, and call this new unweighted graph
  $G'$. %
  It can be checked easily that the gadget indeed simulates a weight
  of~$a$ (parallel paths correspond to addition, serial edges to
  multiplication), i.e., $\perm G'=\perm G_a$. %
  Unfortunately, the reduction increases the number of edges by a
  superconstant factor: The number of edges of~$G'$ is
  $m(G')\leq(m+n\log a)\leq O(m + n \log n)$. %
  But since $m(G')/\log m(G') \leq O(m)$, the reduction implies
  that~\iref{thmi: perm zeroone}.

  For \iref{thmi: perm RETH}, we assume that \rETH{} holds. %
  Theorem~\ref{thm: Calabro_isolation} gives that \pp{Unique $3$-Sat}
  cannot be computed in time $\exp(o(m))$.
  Now we apply the first reduction of~\iref{thmi: perm standard} to a
  formula~$\varphi$ which is promised to have at most one satisfying
  assignment.
  Then the number $\perm G=(-2)^i \cdot \pp{\#Sat}(\varphi)$ is either
  $0$ or $(-2)^i$.
  In $G$, we replace each edge of weight $-1$ by a gadget of
  weight~$2\equiv -1 \mod 3$ and similarly get that $(\perm G\bmod 3)$
  is $(0\bmod 3)=0$
  or $(4^i\bmod 3)=1$.
  Hence we can distinguish the case in which
  $\varphi$ is unsatisfiable from the case in which $\varphi$ has
  exactly one satisfying assignment.
\end{proof}

\section{Hyperbolas in the Tutte plane}
\label{sec: hyperbolas}

Consider a hyperbola in the Tutte plane described by $(x-1)(y-1)=q$, where $q$ is some fixed rational number.
Our first goal is to show that it is hard to compute the coefficients of the (univariate) restriction of the Tutte
polynomial to any such hyperbola.
It is useful to view the Tutte polynomial in the Fortuin--Kasteleyn
formulation \cite{fortuin1972random,Sokal2005}: %
\begin{equation} \label{eq: tutte FK-formulation}
  Z(G;q,w) = \sum_{A\subseteq E} q^{\comp(A)} w^{|A|}\,.
\end{equation}
Here, $\comp(A)$ is the number of connected components in the subgraph
$(V,A)$. %
The connection to the Tutte polynomial is given by
\begin{equation}\label{eq: sokal tutte}%
\begin{split}
 T(G;x,y)=(x-1)^{-\comp(E)}(y-1)^{-|V|} Z(G;q,w)\,,\\
 \text{where $q=(x-1)(y-1)$ and $w=y-1$\,,}
\end{split}
\end{equation}
see~\cite[eq.~(2.26)]{Sokal2005}. %

\subsection*{The Ising Hyperbola}

The Ising partition function is the Tutte polynomial from~\eqref{eq:
tutte FK-formulation} when~$q$ is fixed to~$2$.
We now show that computing the coefficients of this univariate
polynomial is hard under \cETH{}.
\begin{proposition}\label{prop: Ising hyperbola}
  If \cETH{} holds,
  the coefficients of the polynomial $w\mapsto Z(G;2,w)$ for
  a given simple graph $G$ cannot be computed in time $\exp(o(m))$.
\end{proposition}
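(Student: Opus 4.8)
The plan is to reduce from \pp{\#$3$-Sat} (or from the counting independent sets problem of Theorem~\ref{thm: 2sat}, but \cETH{}-based so we start from \pp{\#$3$-Sat}), using the fact that the $q=2$ restriction $Z(G;2,w)=\sum_{A\subseteq E} 2^{\comp(A)} w^{|A|}$ is a polynomial in $w$ of degree $m$, so recovering all its coefficients recovers the whole univariate polynomial and in particular its evaluation at any convenient point. The key observation is that at $q=2$ the Fortuin--Kasteleyn partition function is (up to an easily computed prefactor) the Ising partition function, which is known to be \cc{\#P}-hard, and more importantly there is a classical reduction expressing the number of satisfying assignments of a $3$-CNF formula as an Ising-type partition function on a graph whose \emph{number of edges} is linear in the number of clauses. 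So first I would take an $n$-variable, $m$-clause $3$-CNF formula $\varphi$, build a graph $G$ with $O(m)$ edges together with an assignment of edge weights $w_e$ (a multivariate instance, in Sokal's formulation $Z(G;q,\mathbf{w})=\sum_{A\subseteq E}q^{\comp(A)}\prod_{e\in A}w_e$) so that $Z(G;2,\mathbf{w})$ encodes $\pp{\#Sat}(\varphi)$ via a simple closed form.

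The second step is to pass from the multivariate instance with several distinct weight values to the univariate polynomial $w\mapsto Z(G;2,w)$ on a single \emph{simple} graph. This is exactly where the thickening/stretching toolkit comes in: a $k$-thickening multiplies an edge weight according to the rule for series/parallel composition in the FK formulation (parallel edges of weight $w$ combine to weight $(1+w)^k-1$, a $k$-stretch combines to a weight whose FK-value is governed by the other series rule), and since $q$ is fixed to $2$ these transformations keep us on the Ising hyperbola. By choosing $k$-thickenings/stretches of size $O(\log(\text{needed weight}))$ or just constant size (the weights coming out of the \#Sat reduction can be taken to be small constants), one simulates each required weight while blowing up the edge count only by a constant factor, so $G$ becomes a simple graph with still $O(m)$ edges. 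Then one recovers $Z(G;2,\mathbf{w})$ — hence $\pp{\#Sat}(\varphi)$ — from the coefficients of $w\mapsto Z(G';2,w)$ for the transformed simple graph $G'$ by reading off / combining the appropriate coefficients (no interpolation in $q$ is needed since $q$ stays $2$; the only "interpolation" is the trivial linear algebra of extracting coefficients, which is already given to us for free as the output).

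Putting it together: an $\exp(o(m))$-time algorithm computing the coefficients of $w\mapsto Z(G;2,w)$ for simple $G$ would, composed with this polynomial-time, edge-linear reduction, yield an $\exp(o(m))$-time algorithm for \pp{\#$3$-Sat} where $m$ is the number of clauses, contradicting Theorem~\ref{thm: counting_sparsification_essence} (the $d=3$, clause-parameter form of \cETH{}). I expect the main obstacle to be the bookkeeping in the first step: exhibiting a weighted graph $G$ with $O(m)$ edges whose $q=2$ FK partition function cleanly encodes $\pp{\#Sat}(\varphi)$, and in particular checking that the weights needed are constants (or at worst $O(\log\text{-sized})$) so that the subsequent thickening/stretching stays parameter-preserving and keeps the graph simple. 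The hyperbola-invariance of thickening/stretching and the coefficient-extraction are routine given Sokal's formalism; the delicate part is making every gadget's edge budget linear in $m$ rather than in $n$ or in $nm$, since that is what makes the bound $\exp(\Omega(m))$ rather than merely $\exp(\Omega(m^{1/r}))$.
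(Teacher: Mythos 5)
Your plan differs substantially from the paper's, and the place where you say you expect ``the main obstacle'' is precisely where there is a genuine gap rather than mere bookkeeping. You propose to build, from a $3$-CNF formula $\varphi$, a weighted graph $G$ with $O(m)$ edges such that the $q=2$ multivariate FK partition function $Z(G;2,\mathbf w)$ ``encodes $\pp{\#Sat}(\varphi)$ via a simple closed form,'' and then to remove the weights by constant-size thickenings/stretches. But there is no standard gadget that encodes $3$-SAT directly into the two-state Ising/FK model in this way; the usual route into $q=2$ is through cut problems, not clause gadgets. A $q=2$ FK configuration is a $2$-colouring of the vertex set (after the FK identity), and building a clause gadget that distinguishes ``at least one literal true'' from ``all three false'' using only pairwise Ising interactions with constant-size weights is exactly the nontrivial content that your sketch leaves unexhibited. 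Until such a gadget is produced and shown to keep the edge count linear in the number of clauses, the reduction is not complete, and the subsequent thickening/stretching and coefficient-extraction steps (which are themselves fine) rest on a missing foundation.

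The paper's proof avoids all of this. It reduces from \pp{\#MaxCut} (whose $\exp(o(m))$-hardness under \cETH{} is established by a separate chain \pp{\#$3$-Sat} $\to$ \pp{\#NAE-$3$-Sat} $\to$ \pp{\#MaxCut} in the appendix, with a $3$-stretch at the end to make the graph simple while staying $O(m)$-edge). The key observation is the Fortuin--Kasteleyn identity: at $q=2$,
\[
Z(G;2,w)=\sum_{\sigma\colon V\to\{\pm1\}}\prod_{uv\in E}\bigl(1+w\,[\sigma(u)=\sigma(v)]\bigr)
=\sum_{C\subseteq V}(1+w)^{\,m-|E(C,\overline C)|},
\]
so the polynomial $w\mapsto Z(G;2,w)$, rewritten in the basis of powers of $(1+w)$, is literally the cut-size generating function of $G$; reading off the coefficient of $(1+w)^{m-c}$ for the smallest $c$ at which it is nonzero yields \pp{\#MaxCut}$(G)$. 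No gadget, no multivariate weights, and no thickening or stretching are needed at this stage --- the ``simple closed form'' is an identity, not a construction. I'd suggest you either find (and verify the edge budget of) the Ising gadget your plan relies on, or replace that step with the FK/cut-generating-function observation, which makes the whole proof essentially one line plus the appendix hardness of \pp{\#MaxCut}.
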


\begin{proof}
  The reduction is from \pp{\#MaxCut} and well-known, see,
  \emph{e.g.}, \cite[Theorem~15]{JerrumSinclair}. %
  \begin{quote}\small
    \begin{description}
      \item[Name] $\pp{\#MaxCut}$
      \item[Input] Simple undirected graph $G$.
      \item[Output] The number of maximum cuts.
    \end{description}
  \end{quote}
  A maximum cut is a set $C\subseteq V(G)$ that maximizes the number
  $|E(C,\overline C)|$ of edges of~$G$ that cross the cut.
  By the Fortuin--Kasteleyn identity \cite[Theorem~2.3]{Sokal2005}, one
  can express $Z(G;2,w)$ for $G=(V,E)$ as
  \[
  \sum_{\sigma\colon V\mapsto \pm 1}
  \prod_{uv\in E} \bigl(1+w\cdot[\sigma(u)=\sigma(v)])\,.
  \]
  Here the Iverson bracket~$[P]$ is~$1$ if~$P$ is true and is~$0$
  if~$P$ is false.
  The sets $\sigma^{-1}(1)$ and $\sigma^{-1}(-1)$ define a cut in $G$,
  so we can write the above expression as
  \[
  \sum_{U\subseteq V}
  \prod_{\substack{uv\in E\\ [u\in U]=[v\in U]}} (1+w)\,
  =
  \sum_{C\subseteq V(G)} (1+w)^{m-|E(C,\overline C)|}\,,
  \]
  Now, the coefficient of $(1+w)^{m-c}$ in $Z(G;2,w)$ is the number of
  cuts in $G$ of size $c$. %
  In particular, after some interpolation, we can compute the number
  of maximum cuts in $G$ from the coefficients of $w\mapsto Z(G;2,w)$. %
  But as we observe in Appendix~\ref{app: standard hardness results},
  \pp{\#MaxCut} cannot be computed in time $\exp(o(m))$ under \cETH{}.
\end{proof}

\subsection*{The Multivariate Tutte Polynomial}

For other $q$, in particular non-integers, it is simpler to work with a
\emph{multivariate} formulation of the Tutte polynomial due to
\citeN{fortuin1972random}. %
We use the definition by~\citeN{Sokal2005}: %
Let $G=(V,E)$ be an undirected graph whose edge weights are given by a
function $\mathbf w\colon E\rightarrow\Q$. %
Then
\begin{equation}\label{eq: sokal def}%
  Z(G;q,\mathbf w)
  = \sum_{A\subseteq E}q^{k(A)}\prod_{e\in A}  \mathbf w(e)\,.
\end{equation}
If $\mathbf w$ is single-valued, in the sense that $\mathbf w(e)=w$
for all $e\in E$, we recover $Z(G;q,w)$. %

The conceptual strength of the multivariate perspective is that it
turns the Tutte polynomial's second variable~$y$, suitably
transformed, into an edge weight of the graph. %
In particular, the multivariate formulation allows the graph to have
different weights on different edges, which turns out to be a dramatic
technical simplification even when, as in the present work, we are
ultimately interested in the single-valued case.

Sokal's polynomial vanishes at $q=0$, so we sometimes use the
polynomial
\begin{equation*}
  Z_0(G;q,\mathbf w)
  =
  \sum_{A\subseteq E}q^{k(A)-k(E)}\prod_{e\in A} \mathbf w(e)\,,
\end{equation*}
which gives something non-trivial for $q=0$ and is otherwise a proxy
for $Z$:
\begin{equation} \label{eq: Z and Z_0}
  Z(G;q,\mathbf w) =q^{k(E)} Z_0(G;q,\mathbf w)\,.
\end{equation}

\subsection*{Three-terminal minimum cut}

For $q\not\in\{1,2\}$, we first establish that, with two different
edge weights, one of them negative, the multivariate Tutte polynomial
computes the number of $3$-terminal minimum cuts:
\begin{quote}\small
\begin{description}
  \item[Name] \pp{\#$3$-Terminal MinCut}
  \item[Input] Simple undirected graph $G=(V,E)$ with three
    distinguished vertices (``terminals'') $t_1,t_2,t_3 \in V$.
  \item[Output] The number of edge subsets $A\subseteq E$ of minimal
    size that separate $t_1$ from $t_2$, $t_2$ from $t_3$, and $t_3$
    from $t_1$.
\end{description}
\end{quote}
We establish the hardness of this problem under \cETH{} in
Appendix~\ref{app: standard hardness results}. %
The connection of this problem with the Tutte polynomial has been used
already by Goldberg and
Jerrum~\cite{GJ07FerromagneticIsing,GJ08Tutte}, with different
reductions, to prove hardness of approximation.

\medskip

The graphs we consider here are connected and have rather simple
weight functions. %
The edges are partitioned into two sets $E\dotcup T$ and, for some
fixed rational~$w$, the weight function is given by
\begin{equation}\label{eq: w def}
\mathbf w(e)=\begin{cases} -1, & \text{if $e\in T$},\\
 w, &\text{if $e\in E$}.
\end{cases}
\end{equation}

For such a graph, we have
\begin{equation}\label{eq: Z for G^T}
  Z_0(G;q,\mathbf w)=\sum_{A\subseteq E\cup T} q^{k(A)-1}w^{|A\cap E|
  } (-1)^{|A\cap T|}.
\end{equation}
For fixed $G$ and $q$, this is a polynomial in $w$ of degree at most
$m$.

\begin{lemma}\label{lem: three negative edges}
  Let $q$ be a rational number with $q\not\in\{1,2\}$. %
  The coefficients of the polynomial $w\mapsto\allowbreak
  Z_0(G;q, \mathbf w)$, with $\mathbf w$ as in \eqref{eq: w def}, for
  a given simple graph~$G$ cannot be computed in time $\exp(o(m))$ under
  \cETH{}.
  Moreover, this is true even if $|T|=3$.
\end{lemma}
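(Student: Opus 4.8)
The plan is to reduce from \pp{\#$3$-Terminal MinCut}, whose \cETH{}-hardness (with $\exp(o(m))$ lower bound, on simple graphs) is supplied by Appendix~\ref{app: standard hardness results}. Given an instance $(G,t_1,t_2,t_3)$ with $G=(V,E)$ simple, I would build a graph $G'$ on the same vertex set by leaving every original edge with weight $w$ (a formal variable), and adding a small fixed set $T$ of $|T|=3$ ``negative'' edges of weight $-1$ joining the three terminals pairwise, i.e.\ $T=\{t_1t_2,\,t_2t_3,\,t_3t_1\}$ (if some terminal pair is already adjacent in $G$, route the new edge through a fresh subdivision vertex so that $G'$ stays simple; this changes $m$ only by an additive constant). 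The weight function is then exactly \eqref{eq: w def}, and \eqref{eq: Z for G^T} expresses $Z_0(G';q,\mathbf w)$ as a polynomial in $w$ of degree at most $m':=|E\cup T|=m+O(1)$.

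The heart of the argument is to read off the number of $3$-terminal minimum cuts from the coefficients of this polynomial. Group the sum in \eqref{eq: Z for G^T} by $B:=A\cap E$, so that for each $B\subseteq E$ the contribution is $w^{|B|}$ times $\sum_{S\subseteq T}q^{k(B\cup S)-1}(-1)^{|S|}$, a quantity depending only on how the three terminals are distributed among the connected components of $(V,B)$. Since $|T|=3$, there are only a handful of ``terminal patterns'': either all three terminals lie in one component of $(V,B)$, or exactly two of them do, or all three are separated. A direct computation of the inner alternating sum over the eight subsets $S$ of the triangle $T$ shows it equals a fixed rational $f(q)$ in the last case (all terminals separated) and a \emph{different} fixed rational in each of the other cases — and here is exactly where $q\notin\{1,2\}$ is used: the values $f_{\mathrm{sep}}(q)$, $f_{\mathrm{two}}(q)$, $f_{\mathrm{all}}(q)$ must be pairwise distinct, and in particular $f_{\mathrm{sep}}(q)\neq 0$, which fails precisely when $q=1$ or $q=2$. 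Consequently, writing $m_0$ for the size of a minimum $3$-terminal cut, the coefficient of $w^{m_0}$ in $Z_0(G';q,\mathbf w)$ equals $f_{\mathrm{sep}}(q)$ times the number of minimum cuts plus $f_{?}(q)$ times the number of size-$m_0$ subsets $B$ that fail to separate all three terminals; one extracts $m_0$ as the least exponent whose coefficient is nonzero among those $B$ that separate, but since lower-degree terms may also be nonzero from non-separating $B$'s, the cleanest route is to argue that \emph{every} $B$ of size $<m_0$ is non-separating and contributes a known multiple, so after subtracting a directly computable correction (the number of each terminal pattern among small edge sets is \emph{not} directly computable, so instead) — more carefully, I would isolate the separating sets by a second, standard interpolation in an auxiliary variable or by the observation that minimum separating sets are the unique separating sets of their size, making $m_0$ the smallest exponent at which the coefficient deviates from the value predicted by assuming \emph{no} $B$ of that size separates.

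Given the coefficients of $w\mapsto Z_0(G';q,\mathbf w)$, all of the above post-processing is polynomial time, so an $\exp(o(m'))=\exp(o(m))$ algorithm for computing those coefficients yields an $\exp(o(m))$ algorithm for \pp{\#$3$-Terminal MinCut}, contradicting \cETH{} via the appendix result. The main obstacle I anticipate is the bookkeeping in the previous paragraph: cleanly separating the contribution of the minimum $3$-terminal cuts from that of the non-separating (or partially-separating) edge sets of the same size, using only the coefficient vector. I expect this is handled either by the ``smallest deviating coefficient'' trick sketched above, or — and this is likely what the authors do — by noting that the three pattern-values $f_{\mathrm{all}},f_{\mathrm{two}},f_{\mathrm{sep}}$ are distinct nonzero rationals for $q\notin\{1,2\}$, so that the generating polynomial in $w$ literally \emph{counts} edge subsets weighted by pattern, and a further short interpolation (tracking a variable that marks, say, whether $t_1$ and $t_2$ are separated) disentangles the three counts, after which the minimum-cut count is the leading term of the fully-separated series. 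The weight blow-up is only additive $O(1)$, so the parameter $m$ is preserved and the lower bound is genuinely $\exp(o(m))$.
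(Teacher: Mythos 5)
You have the right reduction and the right global structure: take a \pp{\#$3$-Terminal MinCut} instance, add the triangle $T=\{t_1t_2,t_2t_3,t_3t_1\}$ of weight-$(-1)$ edges, group the subsets $A\subseteq E\cup T$ by their intersection with $E$, and try to read the number of minimum cuts off the coefficient vector of $w\mapsto Z_0(G;q,\mathbf w)$. This is exactly the paper's route. But there is a genuine gap in the middle: you claim the inner alternating sum $\sum_{S\subseteq T}q^{k(B\cup S)-1}(-1)^{|S|}$ evaluates to three \emph{pairwise distinct} rationals depending on the terminal pattern, and you speculate that $q\notin\{1,2\}$ is what makes them distinct. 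If you actually carry out that sum over the eight subsets, the ``all three terminals in one component'' pattern and the ``exactly two together'' pattern both give $0$, identically in $q$. (For the first: $k(B\cup S)$ is constant, so the sum is proportional to $\sum_{S\subseteq T}(-1)^{|S|}=(1-1)^3=0$. For the second, say $t_1,t_2$ joined, $t_3$ separate: the six subsets $S$ containing at least one of $t_1t_3,t_2t_3$ all land on the same power $q^{k(B)-2}$ with signs summing to zero, and the remaining two subsets cancel at power $q^{k(B)-1}$.) Only the ``all separated'' pattern contributes, giving $q^{k(B)-3}(q-1)(q-2)$. The paper establishes the vanishing by a cleaner pairing argument: if $B$ already connects $t_i$ and $t_j$, pair each $S$ with $S\oplus\{t_it_j\}$; the component count is unchanged and the sign flips.

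Because you missed that the non-separating contributions vanish outright, you ended up inventing obstacles and then sketchy fixes (a ``smallest deviating coefficient'' heuristic, a ``second interpolation in an auxiliary variable'') that are both unnecessary and unsound as stated — there is no way, from the coefficient vector of the single polynomial $w\mapsto Z_0$, to introduce a second marking variable, and the ``correction'' you want to subtract is not directly computable. Once the vanishing is established, there is nothing to correct: $Z_0(G;q,\mathbf w)=(q-1)(q-2)\sum_{A\in\scr A}q^{k(A)-3}w^{|A|}$, the factor $(q-1)(q-2)$ is nonzero precisely because $q\notin\{1,2\}$, and the separating sets $A$ of the largest possible size $m-c$ are exactly the complements of the minimum cuts and have $k(A)=3$. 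So the largest index $i$ with nonzero coefficient $d_i$ is $i=m-c$, and $d_{m-c}/(q-1)(q-2)$ is the number of minimum $3$-terminal cuts. (Also note your quantifier is slightly off: for $q\notin\{1,2\}$ the three pattern values are \emph{not} pairwise distinct — two of them coincide at $0$ for all $q$ — which is the whole point.)
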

\begin{proof}
  In Appendix~\ref{app: standard hardness results}, we argue that a
  standard reduction from \pp{\#MaxCut} already implies that the
  problem \pp{\#$3$-Terminal MinCut} cannot be computed in time $\exp(o(m))$
  under \cETH.
  We reduce this problem to the problem of evaluating the coefficients
  of~$Z_0$ at $q\not\in\{1,2\}$.
  Suppose $G'=(V,E,t_1,t_2,t_3)$ is an instance of \pp{\#$3$-Terminal
  MinCut} with $n=|V|$ and $m=|E|$. %
  We can assume that $G'$ is simple and connected. %
  We modify~$G'$ by adding a triangle between the terminals, obtaining
  the graph $G=(V,E\cup T)$ where $T=\{t_1t_2,t_2t_3,t_1t_3\}$; note
  that $n(G)=n$, $m(G)=m+3$, and $|T|=3$.

  We focus our attention on the family $\scr A$ of edge subsets
  $A\subseteq E$ for which $t_1$, $t_2$, and~$t_3$ each belong to a
  distinct component in the graph $(V,A)$. %
  In other words, $A$ belongs to~$\scr A$ if and only if $E-A$ is a
  3-terminal cut in~$G'$. %
  Then we can split the sum in \eqref{eq: Z for G^T} into
  \begin{equation}\label{eq: split}%
    Z_0(G; q,\mathbf w) =
    \sum_{B\subseteq T}
    \left(
      \sum_{A\in\scr A} q^{\comp (A\cup B)-1} w^{|A| } (-1)^{|B|}
      + \sum_{A\notin\scr A} q^{\comp (A\cup B)-1}w^{|A| } (-1)^{|B|}
    \right).
  \end{equation}

  We first show that the second term of \eqref{eq: split} vanishes. %
  Consider an edge subset $A\not\in \scr A$ and assume without loss of
  generality that it connects the terminals~$t_1$ and~$t_2$. %
  Consider $B\subseteq T$, and let $B'=B\oplus \{t_1t_2\}$, so that
  $B'$ is the same as $B$ except for $t_1t_2$. %
  Then the contributions of $A\cup B$ and $A\cup B'$ cancel: %
  First, $\comp(A\cup B)$ equals $\comp(A\cup B')$ because~$t_1$
  and~$t_2$ are connected through $A$ already, so the presence or
  absence of the edge $t_1t_2$ makes no difference. %
  Second, $(-1)^{|B|}$ equals $-(-1)^{|B'|}$.

  We proceed to simplify the first term of \eqref{eq: split}. %
  The edges in $B$ only ever connect vertices in $T$, and for
  $A\in\scr A$, each of these lies in a separate component of $(V,A)$,
  so
  \begin{equation*}
    \comp(A\cup B)=
    \begin{cases}
      \comp(A)-|B|\,,  & \text{if $|B|=0,1,2$},\\
      \comp(A)-2 \,,   &\text{if $|B|=3$}.
    \end{cases}
  \end{equation*}
  Calculating the contribution of $B$ for each size $|B|$, we arrive at
  \begin{equation*}
    \sum_{B\subseteq T} \sum_{A\in\scr A} q^{\comp (A\cup B)-1}w^{|A|}(-1)^{|B|} =
    \sum_{A\in\scr A} q^{\comp (A)-1} (q^0 -3q^{-1} + 3q^{-2} -q^{-2})w^{|A|}\,,
  \end{equation*}
  and after some simplifications we can write~\eqref{eq: split} as
  \begin{equation}
    Z_0(G;q,\mathbf w) = Q\cdot \sum_{A\in\scr A}
    q^{\comp (A)-3}w^{|A|}\,,\quad
    \text{where $Q=(q-1)(q-2)\,$.}
  \end{equation}
  Note that, by assumption on $q$, we have $Q\neq 0$.

  Let us write $\sum_{i=0}^m d_i w^i = Q^{-1} Z_0(G;q,\mathbf w)$,
  i.e., $d_i$ is the coefficient of the monomial~$w^i$ in the sum
  above. %
  More specifically,
  \begin{equation*}
    Q\cdot d_i = \sum_{A \in \scr A\colon |A|=i} q^{\comp(A) - 3}\,.
  \end{equation*}

  The edge subsets $A\in\scr A$ are exactly the complements of the
  $3$-terminal cuts in~$G'$. %
  Now consider the family $\scr C$ of \emph{minimal} $3$-terminal
  cuts, all of size $c$. %
  The sets $E-A$ in~$\scr C$ are exactly the sets $A$ of size $m-c$
  in~$\scr A$, and by minimality, $\comp(A) = 3$. %
  Thus, \[ Q\cdot d_{m-c} = \sum_{A\in \scr A\colon |A|=m-c} q^{3-3}
  =|\scr C|. \]

  Thus, if we could compute the coefficients $d_0,\ldots, d_m$
  of $w\mapsto Q^{-1} Z_0(G;q,\mathbf w)$, then we could determine the
  smallest $c$ so that $d_{m-c} \neq 0$ and return
  $d_{m-c}=|\scr C|/Q$, the number of $3$-terminal mincuts.
\end{proof}

\subsection*{General Hyperbolas}

We use Lemma~\ref{lem: three negative edges} to show that the
coefficients of the univariate Tutte polynomial from~\eqref{eq: tutte
FK-formulation} are hard to compute for any fixed $q\not\in\{1,2\}$. %
For this, we need to get rid of negative weights and reduce to a
single-valued weight function. %
\citeN{GJ08Tutte} achieve this using stretching and thickening, which
we want to avoid. %
Since the number of edges with a negative weight is small
(in fact, $3$), we can use another tool: deletion--contraction. %

\medskip
\newcommand{\etriangle}{\begin{tikzpicture}
	\path (90+0*120:0.15) node[vertex] (v0) {};
	\path (90+1*120:0.15) node[vertex] (v1) {};
	\path (90+2*120:0.15) node[vertex] (v2) {};
	\draw (v0) -- (v1) -- (v2) -- (v0);
	\draw[very thick] (v2) -- (v0);
\end{tikzpicture}}
A \emph{deletion--contraction} identity expresses a function of the graph
$G$ in terms of two graphs $G-e$ and $G / e$, where $G-e$ arises from $G$
by \emph{deleting} the edge $e$
($\etriangle \mapsto \begin{tikzpicture}
	\path (90+0*120:0.15) node[vertex] (v0) {};
	\path (90+1*120:0.15) node[vertex] (v1) {};
	\path (90+2*120:0.15) node[vertex] (v2) {};
	\draw (v0) -- (v1) -- (v2);
\end{tikzpicture}$)
and $G/e$ arises from $G$ by
\emph{contracting} the edge $e$
($\etriangle \mapsto \begin{tikzpicture}
	\node[vertex] (left)  at (0,0) {};
	\node[vertex] (right) at (0.4,0) {};
	\draw[bend left]  (left) edge (right);
	\draw[bend right] (left) edge (right);
	\end{tikzpicture}$)
that is, deleting it and identifying its endpoints (so any remaining
edges between these two endpoints become loops).

It is known~\cite[eq.~(4.6)]{Sokal2005} that
\begin{equation*}\label{eq: delcon}%
  Z(G;q,\mathbf w)= Z(G- e; q,\mathbf w) + \mathbf w(e)Z(G/e;q,\mathbf w).
\end{equation*}
An edge~$e$ is a \emph{bridge} of~$G$ if deleting~$e$ from~$G$
increases the number of connected components.
The above gives a deletion--contraction identity for $Z_0$ as well:
\begin{equation} \label{eq: delcontr}
	\ZNUL{G}{q}{\mathbf w} =
	\begin{cases}
		\;q \ZNUL{G- e}{q}{\mathbf w} +
		\mathbf w(e) \ZNUL{G/ e}{q}{\mathbf w}
		& \text{if $e$ is a bridge},\\
		  \;\;\,
		\ZNUL{G- e}{q}{\mathbf w} +
		\mathbf w(e) \ZNUL{G/ e}{q}{\mathbf w}
		& \text{otherwise}.
	\end{cases}
\end{equation}
\begin{proposition}\label{prop: hyp}
  Let $q$ be a rational number with $q\notin\{ 1,2\}$. %
  The coefficients of the polynomial $v\mapsto Z_0(G;q,v)$ for a given
  simple graph $G$ cannot be computed in time $\exp(o(m))$ under
  \cETH{}.
\end{proposition}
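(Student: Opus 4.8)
The plan is to reduce from the problem of Lemma~\ref{lem: three negative edges}, eliminating the three negative-weight edges one at a time by means of the deletion--contraction identity~\eqref{eq: delcontr}. Recall that the hard instances constructed there have the form $G = G'\cup T$, where $T=\{t_1t_2,t_2t_3,t_1t_3\}$ is a triangle on the three terminals of a simple connected \pp{\#$3$-Terminal MinCut} instance $G'$, the edges of $T$ carry weight $-1$, and all other edges carry the indeterminate weight $w$. Before adding the triangle I would preprocess $G'$ so that $t_1,t_2,t_3$ become pairwise non-adjacent in $G'$ and pairwise share no neighbour in $G'$: replace each terminal $t_i$ by a fresh vertex $t_i'$ joined to $t_i$ through $m+1$ internally disjoint paths of length two, and declare $t_1',t_2',t_3'$ the new terminals. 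Every minimum $3$-terminal cut has size at most $m$, hence cannot disconnect any $t_i'$ from $t_i$ and uses no edge of these path bundles; so the preprocessing leaves the number and the size of the minimum $3$-terminal cuts unchanged while enlarging the edge set by only a constant factor, and the $\exp(o(m))$ lower bound survives for the resulting graph $\hat G=\hat{G'}\cup\hat T$.

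I would then apply~\eqref{eq: delcontr} successively to the three edges of $\hat T$. Since $\hat{G'}$ (that is, $\hat G$ with the three edges of $\hat T$ removed) is connected, none of these edges is ever a bridge in any intermediate graph, so the non-bridge case of~\eqref{eq: delcontr} always applies, and as each edge has weight $-1$ this yields
\[
  \ZNUL{\hat G}{q}{\hat{\mathbf w}}
  \;=\; \sum_{S\subseteq\hat T}(-1)^{|S|}\,\ZNUL{H_S}{q}{w}\,,
\]
where $H_S$ is obtained from $\hat G$ by deleting the edges in $\hat T\setminus S$ and contracting those in $S$. Each $H_S$ has at most $|E(\hat G)|=O(m)$ edges and a single-valued weight function, constantly $w$. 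The point of the preprocessing is that, since the three terminals are pairwise non-adjacent in $\hat{G'}$ and have no common neighbour in $\hat{G'}$, and all edges of $\hat T$ are deleted or contracted, contracting any subset of $\hat T$ produces neither loops nor parallel edges; hence every $H_S$ is a \emph{simple} graph. (The term contracting all of $\hat T$ carries a factor $1+(-1)=0$ and may simply be dropped.)

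Finally, from a hypothetical algorithm computing the coefficients of $v\mapsto\ZNUL{G}{q}{v}$ for simple graphs in time $\exp(o(m))$, I would call it on each of the at most eight simple graphs $H_S$, obtaining polynomials in $w$ of degree $O(m)$, and take the signed combination above to recover the coefficients of $w\mapsto\ZNUL{\hat G}{q}{\hat{\mathbf w}}$ in time $\exp(o(m))$; by the proof of Lemma~\ref{lem: three negative edges} applied to the preprocessed instance (where $q\notin\{1,2\}$ is used precisely to guarantee $Q=(q-1)(q-2)\neq 0$) this task is \cETH{}-hard, a contradiction. I expect the main obstacle to be the bookkeeping that keeps every $H_S$ simple: one must verify that the preprocessing genuinely rules out loops and parallel edges under contraction --- handling even a single parallel bundle by a further round of deletion--contraction would, over the possibly $\Omega(n)$ bundles that would otherwise arise, blow the number of oracle calls up to $\exp(\Omega(m))$ --- and that it costs only a constant factor in the number of edges, so that the $o(m)$ bound is preserved. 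The remaining ingredients (absence of bridges, the degree bounds, the signed-sum reconstruction) are routine.
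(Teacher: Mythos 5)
Your proof is correct and reaches the same conclusion as the paper, but by a genuinely different mechanism for the step where simplicity of the oracle queries is guaranteed. Both proofs start from Lemma~\ref{lem: three negative edges} and expand the three $(-1)$-weighted triangle edges via~\eqref{eq: delcontr}, obtaining a signed sum of at most eight single-weight instances. The paper accepts that the resulting graphs $G_C$ may have loops or parallel edges (from contraction) and repairs this \emph{afterwards} by applying a $3$-stretch to each $G_C$, invoking the series-reduction identity to relate $Z_0(G_C;q,w)$ to $Z_0(G_C';q,w')$ for a shifted weight $w'$. You instead repair the problem \emph{beforehand}: by attaching to each terminal $t_i$ a fresh vertex $t_i'$ via a fat bundle of $m+1$ length-two paths and re-declaring $t_1',t_2',t_3'$ the terminals, you guarantee the new terminals are pairwise non-adjacent and share no neighbour, so contraction of triangle edges can never create a loop or parallel edge, and every $H_S$ is simple by construction. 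This trades the stretch identity (and its weight-shift bookkeeping) for the extra verification that your terminal gadget preserves both the count and the size of minimum $3$-terminal cuts --- which it does, because a minimum cut has size at most $m$, while disconnecting any $t_i'$ from $t_i$ would cost $m+1$ edges. One small imprecision: you write that the $S=\hat T$ term ``carries a factor $1+(-1)=0$.'' Literally its coefficient in the signed sum is $(-1)^3$, not zero; what is true is that $H_{\hat T}$ coincides with one of the $|S|=2$ graphs (the third edge having become a loop, for which deletion and contraction yield the same graph), so those two terms cancel and one of them may be dropped. The conclusion is unaffected, and the remaining bookkeeping (no bridges ever appear because $\hat{G'}$ stays a connected spanning subgraph; the oracle is invoked a constant number of times on $O(m)$-edge graphs) is sound.
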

By \eqref{eq: Z and Z_0}, this proposition also holds for $Z$ instead
of $Z_0$ when $q\not\in\{0,1,2\}$.
\begin{proof}
  Let $G=(V,E)$ be a graph as in the previous lemma, with three edges
  $T=\{e_1,e_2,e_3\}$ of weight $-1$. %
  The given reduction actually uses the restriction that
  $G'=(V,E\setminus T)$ is connected, so we can assume that this is
  the case. %
  Thus, none of the $T$-edges is a bridge, so three applications of
  \eqref{eq: delcontr} to delete and contract these edges, gives
  \begin{equation}\label{eq: delcon expansion}
    Z_0(G;q,\mathbf w)=
    \sum_{C\subseteq \{1,2,3\}} (-1)^{|C|} Z_0(G_C;q,\mathbf w),
  \end{equation}
  where for each $C\subseteq \{1,2,3\}$, the graph $G_C$ is
  constructed from $G$ by removing $e_1,e_2,e_3$ as follows: %
  If $i\in C$ then $e_i$ is contracted, otherwise it is deleted. %
  In any case, the edges of~$T$ have disappeared and remaining edges
  of $G_C$ are in one-to-one correspondence with the edges in $E$;
  especially, they all have the same weight $w$, so $Z_0(G_C;q,\mathbf
  w)=Z_0(G_C;q,w)$.

  The resulting $G_C$ are not necessarily simple, because the
  contracted edges from~$T$ may have been part of a triangle and may
  have produced a loop. %
  (In fact, investigating the details of the previous lemma, we can
  see that this is indeed the case.) %
  Thus we construct the simple graph~$G_C'$ from~$G_C$ by subdividing
  every edge into a 3-path. %
  This operation, known as a 3-stretch, is known to largely preserve
  the value of $Z$ and $Z_0$ (see \cite{Sokal2004} for the former and
  \cite{GJ08Tutte} for the latter). In particular,
  \[
    Z_0(G_C;q,w) = f(q,w')^m \cdot Z_0(G_C'; q, w')\,,
  \]
  where for $q\neq 0$
  \[
 	1+\frac{q}{w}= \Bigl(1+\frac{q}{w'}\Bigr)^3
  	\quad \text{and} \quad f(q,w')=q^{-1}\cdot ((q+w')^3-{w'}^3)  \,,
  \]
  and for $q=0$
  \[
 	w = w'/3 \quad \text{and} \quad f(q,w')=1/(3w'^2)  \,.
  \]
  In summary, to compute the coefficients of the polynomial
  $w\mapsto Z_0(G;q,\mathbf w)$, we need to compute the 8 polynomials
  $v\mapsto Z_0(G_C; q,v)$, one for each $G_C$. We use the above
  equation and the assumed oracle for simple graphs to do this. %
  We note that every $G_C'$ is simple and has at most $n+2m$ vertices
  and at most $3m$ edges.
\end{proof}

\section{Individual Points for Multigraphs}
\label{sec: thickening and interpolation}
If we allow graphs to have multiple edges, we can use thickening and
interpolation, one of the original strategies of \citeN{JVW90}, for
relocating the hardness result for hyperbolas from
Proposition~\ref{prop: Ising hyperbola} and Proposition~\ref{prop: hyp} to
individual points in the Tutte plane. %
For most points, this gives us tight bounds in terms of~$n$, the
number of vertices, but not for points with $y\in\{0,\pm 1\}$, where
thickening fails completely.

We recall the thickening identities for the Tutte polynomial. %
The \emph{$k$-thickening of~$G$} is the graph~$G_k$ in which all edges
have been replaced by~$k$ parallel edges. %
One can show~\cite[(4.21)]{Sokal2005} that, with
$w_k = (1+w)^{k} - 1$,
\begin{equation}\label{multitutte thickening}%
  Z(G;q,w_{k}) = Z(G_{k};q,w)\,.
\end{equation}
It is easy to transfer this result to the Tutte polynomial~$T$
using~\eqref{eq: sokal tutte}, yielding special cases of Brylawski's
well-known graph transformation rules. %

We use interpolation and obtain
Theorem~\ref{thm: Tutte main result}\iref{thmi: Tutte general}
for $y\neq 0$ from the following.
\begin{proposition}\label{prop: individual points, multigraphs, nonzero q}
  Let $(q,w)\in\Q^2$ with $w\not\in \{0,-1,-2\}$ and $q\neq 1$.

  $Z(G;q,w)$ for a given graph $G$ (not necessarily simple)
  cannot be computed in time $\exp(o(n))$ under \cETH{}.
\end{proposition}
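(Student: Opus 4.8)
The plan is to transport the hardness of computing the \emph{coefficients} of the univariate Tutte polynomial along a hyperbola --- Proposition~\ref{prop: Ising hyperbola} for $q=2$, and Proposition~\ref{prop: hyp} together with \eqref{eq: Z and Z_0} for $q\notin\{0,1,2\}$ --- to the hardness of a single evaluation $Z(G;q,w)$. We move along the hyperbola by repeatedly \emph{thickening} the graph and then recover the coefficients from the resulting evaluations by Lagrange interpolation. The observation that makes this parameter-preserving is that a $k$-thickening multiplies the number of edges by $k$ but leaves the number of \emph{vertices} unchanged, so a hypothetical $\exp(o(n))$-time single-point algorithm applies to all the thickenings with one and the same vertex bound; this is precisely why the reduction does not spoil a vertex-exponential lower bound, unlike the polynomial edge blow-up that is tolerated in the \cc{\#P}-hardness setting.

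In detail, fix $(q,w)$ with $w\notin\{0,-1,-2\}$ and $q\neq 1$; we may further assume $q\neq 0$, since $Z(G;0,w)=0$ for every nonempty $G$, so that case is vacuous for $Z$ itself and the corresponding line $x=1$ is treated separately in Proposition~\ref{prop: reliability}. Pick a \emph{connected} simple graph $G_0$ with $n_0$ vertices and $m_0$ edges for which computing all coefficients of $w\mapsto Z(G_0;q,w)$ in time $\exp(o(m_0))$ would refute \cETH{}; such a $G_0$ is supplied by Proposition~\ref{prop: Ising hyperbola} when $q=2$ and by Proposition~\ref{prop: hyp} via \eqref{eq: Z and Z_0} otherwise. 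Assume, for contradiction, that $Z(\cdot\,;q,w)$ is computable in time $\exp(o(n))$ on $n$-vertex multigraphs. For $k=1,\dots,m_0+1$ let $(G_0)_k$ be the $k$-thickening of $G_0$; it has $n_0$ vertices and $k\,m_0$ edges, and by \eqref{multitutte thickening}
\begin{equation*}
  Z\bigl((G_0)_k;q,w\bigr)=Z\bigl(G_0;q,w_k\bigr),\qquad w_k=(1+w)^{k}-1 .
\end{equation*}
Because $w\notin\{0,-1,-2\}$ we have $1+w\notin\{-1,0,1\}$, and since the only rational roots of unity are $\pm1$, the numbers $w_1,\dots,w_{m_0+1}$ are pairwise distinct. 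Applying the assumed oracle to each $(G_0)_k$ therefore gives $m_0+1$ evaluations, at distinct arguments, of the degree-$m_0$ polynomial $w\mapsto Z(G_0;q,w)$, so interpolation recovers all of its coefficients. Each of the $m_0+1$ oracle calls runs in time $\exp(o(n_0))$ since its input has only $n_0$ vertices, and building the thickenings and interpolating cost only $\poly(m_0)$; as $G_0$ is connected we have $n_0\le m_0+1$, so the whole procedure runs in time $\poly(m_0)\cdot\exp(o(n_0))=\exp(o(m_0))$. This contradicts Proposition~\ref{prop: Ising hyperbola}, respectively Proposition~\ref{prop: hyp}, and hence \cETH{} fails.

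I expect no genuine combinatorial difficulty here --- all the real work sits in the two hyperbola propositions --- and the one step that needs care is the parameter accounting in the previous paragraph: one must verify that thickening does not touch the vertex count, that only $\poly(m_0)$ oracle calls are issued, and that $n_0\le m_0+1$, so that an $\exp(o(n))$ single-point algorithm really does collapse to an $\exp(o(m))$ coefficient algorithm. The temptation to resist is the classical thickening-\emph{and}-stretching machinery of \citeN{JVW90}: a stretch would enlarge $n$ polynomially and degrade the conclusion to $\exp(\Omega(n^{1/r}))$, so here one must get by with thickening alone, together with the fact that, negative weights and non-single-valued weight functions having already been eliminated in Proposition~\ref{prop: hyp}, no further edge-weight simulation is needed.
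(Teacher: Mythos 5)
Your proof is correct and follows the same route as the paper's: $k$-thickening preserves the vertex count, the shifted weights $w_k=(1+w)^k-1$ are pairwise distinct because $1+w$ is a rational number other than $0,\pm1$, and Lagrange interpolation then recovers the coefficients of $v\mapsto Z(G;q,v)$, whose hardness is supplied by Proposition~\ref{prop: Ising hyperbola} (for $q=2$) and Proposition~\ref{prop: hyp} together with~\eqref{eq: Z and Z_0} (for $q\notin\{0,1,2\}$). You are somewhat more explicit than the paper about the parameter accounting ($n_0\le m_0+1$ for connected instances) and about the degenerate case $q=0$ (where $Z$ vanishes identically and the proposition implicitly intends to exclude that point), but these are clarifications rather than a different argument.
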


\begin{proof}
  We observe that the values $w_k=(1+w)^k-1$ are all distinct for
  $k=0,1,\dots,m$.
  Thus, the $k$-thickenings $G_k$ of $G$ give rise to $m+1$ different
  weight shifts, the evaluations of which, $Z(G;q,w_k)$, can be
  obtained from $Z(G_k;q,w)$ using~\eqref{multitutte thickening}.
  Thus, with oracle access to $G'\mapsto Z(G';q,w)$, we can compute
  the coefficients of the polynomial $v\mapsto Z(G;q,v)$ in polynomial
  time for any given~$G$.
  By Proposition~\ref{prop: Ising hyperbola} and Proposition~\ref{prop: hyp},
  this cannot be done in time $\exp(o(n))$ under \cETH.
  Since the number of vertices is $n$ in each $G_k$,
  computing $G'\mapsto Z(G';q,w)$ cannot be done in time
  $\exp(o(n))$ under \cETH.
\end{proof}
The proof of
Theorem~\ref{thm: Tutte main result}\iref{thmi: Tutte linial}
uses Linial's well-known reduction for the chromatic
polynomial~\cite{L86} , and is deferred to
Proposition~\ref{prop: Tutte linial} in
Appendix~\ref{app: standard hardness results}.

\section{Individual Points for Simple Graphs}
\label{sec: points simple}

In this section we show that most points $(x,y)$ of the Tutte plane
are as hard as the entire hyperbola on which they lie, even for
sparse, simple graphs. %
The drawback of our method is that we lose a polylogarithmic factor in
the exponent of the lower bound. %
The results are particularly interesting for the points on the line
$y=-1$, for which we know no other good exponential lower bounds under
\cETH{}, even in more general graph classes. %
We remark that the points $(-1,-1)$, $(0,-1)$, and $(\frac{1}{2},-1)$
on this line are known to admit a polynomial-time algorithm, and
indeed our hardness result does not apply here.  %

\subsection*{Graph inflations}
We use the graph theoretic version of Brylawski's tensor product for
matroids \cite{brylawski1982tutte}.
We found the following terminology more intuitive in our setting.
\begin{definition}[Graph inflation]
  Let $H$ be an undirected graph with two dis\-tin\-gui\-shed vertices called {\em terminals}. For any undirected graph
  $G=(V,E)$, an {\em $H$-inflation} of $G$, denoted \inflate{G}{H},
  is obtained by replacing every edge $xy\in E$ by (a fresh copy of)
  $H$, identifying $x$ with one of the terminals of~$H$ and~$y$ with
  the other.
\end{definition}
If $H$ is not symmetric with respect to its two terminals, then the
graph~\inflate{G}{H} need not be unique since there are in general two
non-isomorphic ways two replace an edge~$xy$ by~$H$.
For us this difference does not matter since the resulting Tutte
polynomials turn out to be the same; in fact, in any graph one can
remove a maximal biconnected component and reinsert it in the other
direction without changing the Tutte polynomial, an operation that is
called the \emph{Whitney twist} \cite{whitney19332}. %
Thus we choose \inflate{G}{H} arbitrarily among the graphs that
satisfy the condition in the definition above.
Graph inflation is not commutative and Sokal uses the
notation~$\vec G^H$.

If $H$ is a simple path of $k$ edges, \inflate{G}{H} gives the usual
$k$-stretch of~$G$, and a bundle of $k$ parallel edges results in
a $k$-thickening.
What makes graph inflations so useful in the study of Tutte
polynomials is that the Tutte polynomial of \inflate{G}{H} can be
expressed in terms of the Tutte polynomials of~$G$ and~$H$, so that
$Z(G\otimes H;q, w) \sim Z(G;q,w')$ for some ``shifted'' weight $w'$.

For fixed rational points $(q,w)$, we want to use interpolation to
prove the hardness of computing $Z(G;q,w)$ for a given graph $G$.
The basic idea is to find a suitable class of graphs $\{H_i\}$, such
that we can compute the coefficients of the univariate polynomial
$v\mapsto Z(G;q,v)$ for given~$G$ and~$q$ by interpolation from
sufficiently many evaluations of
$Z(G;q,w_i) \sim Z(G\otimes H_i;q,w)$.
For this, we need that the number of different weight shifts $\{w_i\}$
provided by the graph class $\{H_i\}$ is at least $|E(G)|+1$, one more
than the degree of the polynomial.

\subsection*{Generalised Theta Graphs}
\label{sec: gen theta}

For a set $S=\{s_1,\ldots,s_k\}$ of positive integers, the
\emph{generalised Theta graph} $\Theta_S$ consists of two vertices $x$
and $y$ joined by $k$ internally disjoint paths of $s_1,\ldots,s_k$
edges, respectively. %
For example,
\[
\Theta_{\{2,3,5\}}\quad\text{is }
\vcenter{\hbox{
  \tikzstyle{vertex}=[circle,fill=white,minimum size=3pt,draw, inner sep=0pt]
  \begin{tikzpicture}
    \node (x) [vertex,label=left:{$x$}] {};
    \node (y) [vertex,label=right:{$y$},right of=x] {};
    \draw
      (x) to node[pos=.33,vertex]{} node[pos=.66,vertex]{} (y)
      (x) to[bend right=45] node[pos=.5,vertex]{}          (y)
      (x) to[bend left=45]
        node[pos=.2,vertex]{}  node[pos=.4,vertex]{}
        node[pos=.6,vertex]{}  node[pos=.8,vertex]{}       (y);
  \end{tikzpicture}}
}\,.\]
For such graphs $\Theta_S$, we study the behaviour of the {\em Theta
inflation} $G\otimes \Theta_S$.

The Tutte polynomial of Theta graphs has already been studied by Sokal
in the context of complex roots of the chromatic polynomial. %
The necessary formulas for $Z(G\otimes \Theta_{S})$ can be derived
from \cite[prop 2.2, prop 2.3]{Sokal2004}. %
We present them here for the special case where all edge weights are
the same.

\begin{lemma}[Sokal]\label{lemma: theta shift}%
  Let $q$ and $w$ be rational numbers with $w\neq 0$
  and $q\not\in\{0,-2w\}$. %
  Then, for all graphs $G$ and finite sets $S$ of positive integers,
  \begin{equation}\label{eq: Z tensor}
    Z(G\otimes \Theta_S;q,w)
    =
    q^{-|E|\cdot|S|}
    \cdot \prod_{s\in S}\big( (q+w)^s - w^s \big)^{|E|}
    \cdot Z(G;q,w_S)\,,
  \end{equation}
  where
  \begin{equation}\label{eq: w_s definition}
    w_S
    = - 1 +
    \prod_{s\in S} \left( 1 + \frac{q}{(1+q/w)^s - 1}\right)\,.
  \end{equation}
\end{lemma}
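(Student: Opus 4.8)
The plan is to realise $\Theta_S$, and hence each of the $|E|$ edge‑disjoint copies of $\Theta_S$ inside $G\otimes\Theta_S$, as an iterated series and parallel composition of single edges, and to push it through the classical series and parallel reduction rules for the multivariate Tutte polynomial $Z(\,\cdot\,;q,\mathbf w)$ of \cite{Sokal2005}. I would use two facts, both of which follow from a one‑line case analysis on which of the affected edges lie in $A$ in the defining sum \eqref{eq: sokal def}. \emph{Parallel rule}: replacing $k$ edges between the same pair of vertices, of weights $a_1,\dots,a_k$, by a single edge of weight $\prod_i(1+a_i)-1$ leaves $Z$ unchanged. \emph{Series rule}: if $v$ has degree two with incident edges of weights $a$ and $b$, then suppressing $v$ and inserting a single edge of weight $c=ab/(q+a+b)$ in place of the path multiplies $Z$ by the scalar $q+a+b$. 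The series rule is the general‑length analogue of the $3$‑stretch identity already invoked in the proof of Proposition~\ref{prop: hyp}.

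Next I would collapse one copy of $\Theta_S$. A path of $s$ edges, each of weight $w$, is a series composition; applying the series rule $s-1$ times collapses it to a single edge of effective weight $w^{(s)}$ with $1+q/w^{(s)}=(1+q/w)^s$, since $1+q/w$ is multiplicative under the series rule ($1+q/c=(1+q/a)(1+q/b)$). The scalars picked up along the way telescope: each equals $q+w^{(j-1)}+w=w^{(j-1)}w/w^{(j)}$, so their product over $j=2,\dots,s$ is $w^s/w^{(s)}=\bigl((q+w)^s-w^s\bigr)/q$. The $|S|$ collapsed paths now form $|S|$ parallel edges between the two terminals, with weights $w^{(s)}$ for $s\in S$; the parallel rule collapses them, with no extra scalar, into a single edge of weight $w_S=-1+\prod_{s\in S}\bigl(1+w^{(s)}\bigr)$, which is exactly \eqref{eq: w_s definition}.

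Finally I would perform this collapse simultaneously on all $|E|$ copies of $\Theta_S$; since distinct copies share only the original vertices of $G$, the reductions do not interfere (parallel edges of $G$, if any, are simply left in place, each with weight $w_S$). Each copy contributes the scalar $q^{-|S|}\prod_{s\in S}\bigl((q+w)^s-w^s\bigr)$ and turns into a single edge of weight $w_S$, so after all $|E|$ of them we are left with $G$ itself, every edge of weight $w_S$, together with a total scalar equal to the $|E|$‑th power of the above. This is precisely \eqref{eq: Z tensor}.

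I expect the only real work to be the bookkeeping that keeps every intermediate quantity well defined, and this is exactly where the hypotheses enter: $w\neq0$ is needed to form $1+q/w$; each series step requires $(1+q/w)^j\neq1$ for $1\le j\le s$, and since $1+q/w$ is rational this can fail only when $1+q/w\in\{1,-1\}$, i.e.\ $q=0$ or $q=-2w$; and $q\neq0$ is also needed for the scalar $q^{-|S|}\prod_{s}(\cdots)$. The one genuine subtlety is loops in $G$: inflating a loop identifies the two terminals of $\Theta_S$, so the final parallel edge is a loop; but a loop of weight $c$ contributes a factor $1+c$ in the case analysis exactly as an ordinary edge does, so the accumulated scalar and the value of $w_S$ are unchanged and the identity still holds verbatim.
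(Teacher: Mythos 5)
Your proof is correct and takes the same route the paper gestures at: the paper simply defers to Sokal's series and parallel reduction rules for the multivariate $Z$ and calls the derivation a ``straightforward calculation,'' and you have carried out exactly that calculation, including the telescoping of the series scalars (giving $q^{-1}((q+w)^s-w^s)$ per path), the multiplicativity of $1+q/w$ under series reduction, the collapse of the $|S|$ effective parallel edges to weight $w_S$, and the hypothesis bookkeeping showing that the only rational obstructions are $q=0$ and $q=-2w$.
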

This lemma can be derived from Sokal's series and parallel reduction
rules for $Z$ using a straightforward calculation.
Since all edge weights are the same, the result can also be
established from the classical Tutte polynomial via the series and
parallel reduction rules in \cite{JVW90}, but the calculation would be
slightly more laborious.

We now show that the class of Theta graphs provides a rich enough
spectrum of weight shifts to allow for interpolation.
In the following lemma, we use the definition of~$w_S$ from~\eqref{eq:
w_s definition}.
\begin{lemma}\label{lem: w_s}
  Let $q$ and $w$ be rational numbers with $w\neq 0$ and
  $q\not\in\{0,1,-w,-2w\}$. %
  For all integers $m\geq 1$, there exist sets $S_0,\dots,S_m$ of
  positive integers such that
  \begin{enumerate}[(i)]
    \item\label{conclusion w_s total size}
      $\sum_{s\in S_i} s \leq O(\log^3 m)$ for all $i$, and
    \item\label{conclusion w_s distinct}
      $w_{S_i} \neq w_{S_j}$ for all $i\neq j$.
  \end{enumerate}
  Furthermore, the sets $S_i$ can be computed in time polynomial
  in~$m$.
\end{lemma}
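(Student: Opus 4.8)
The plan is to understand the map $S \mapsto w_S$ well enough to produce $m+1$ sets $S_i$ whose $w_S$-values are distinct, while keeping each $S_i$ small (total size $O(\log^3 m)$). The key observation is that formula \eqref{eq: w_s definition} is multiplicative: writing $\lambda = 1 + q/w$ and defining for a single integer $s$ the factor
\[
  f(s) = 1 + \frac{q}{\lambda^s - 1}\,,
\]
we have $1 + w_S = \prod_{s\in S} f(s)$. So it suffices to find, among subsets of a moderately-sized pool of integers, many subsets whose products of $f$-values are pairwise distinct. First I would dispose of a degenerate case: if $\lambda$ is a root of unity (this can only happen for $\lambda \in \{1,-1\}$ over $\mathbb{Q}$, and $\lambda = 1$ is excluded by $q \neq 0$ while $\lambda = -1$ corresponds to $q = -2w$, also excluded) then the $f(s)$ repeat periodically; the hypotheses $q \notin \{0,1,-w,-2w\}$ are exactly tailored to rule this out and to ensure $f(s)$ is a well-defined nonzero rational with $f(s) \neq 1$ for every $s$. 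In particular $\lambda \neq 0, \pm 1$, so the values $\lambda^s$ are all distinct, and I expect the $f(s)$ to be "generically distinct" as $s$ varies.

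Next I would set up the interpolation-style construction. Choose a set $P$ of the first $k$ primes (or just the integers $1, 2, \dots, k$) with $k = \Theta(\log m)$, so that there are $2^k \geq m+1$ subsets available and each subset has element-sum $O(k^2) = O(\log^2 m)$ — not yet $O(\log^3 m)$, so I have room to spare, which is good because I may need to be slightly wasteful to guarantee distinctness. The candidate sets are $S_i = $ (the $i$-th subset of $P$). The remaining task is condition \eqref{conclusion w_s distinct}: $\prod_{s \in S_i} f(s) \neq \prod_{s \in S_j} f(s)$ for $i \neq j$. Equivalently, no nonempty subset of $P$ has its $f$-values multiplying to the product of the $f$-values of another disjoint subset; or, clearing, $\prod_{s \in A} f(s) = \prod_{s \in B} f(s)$ forces $A = B$. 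This is a multiplicative-independence statement that may fail for a badly chosen pool $P$, but should hold after a mild pruning.

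The main obstacle, then, is establishing this multiplicative-independence (or at least injectivity of $S \mapsto w_S$ on a large family). I would handle it by a counting/pigeonhole argument rather than by proving genuine independence: start with a pool of $N = \poly(\log m)$ small integers (say $\{1,\dots,N\}$, so element-sums are $O(N) = O(\log^3 m)$ once we restrict to subsets, giving the claimed bound); the number of subsets is $2^N$, vastly more than $m+1$; and I would argue that the "collisions" — pairs $(A,B)$ with $\prod_{A} f(s) = \prod_B f(s)$ — cannot be too numerous, because each such collision is a nontrivial multiplicative relation among the rationals $\{f(1), \dots, f(N)\}$, and the group of such relations has rank bounded in terms of the number of primes appearing in the numerators and denominators of the $f(s)$, which is $\poly(\log m)$. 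Quotient out by this relation lattice and one still has $\geq m+1$ distinct images; pick representatives $S_0, \dots, S_m$ among the subsets, each of size (hence element-sum) at most $N = O(\log^3 m)$. Finally, everything here — listing primes, computing the rationals $f(s)$, comparing products — is polynomial in $m$, giving the "furthermore" clause. An alternative, if the rank-of-relations bookkeeping turns out awkward, is to pick the $S_i$ greedily: go through subsets in order of increasing element-sum, keep a subset iff its $w_S$-value is new; since there are at most (polynomial in the bit-length, i.e. $\poly(\log m)$) "forbidden" values one could collide with at each stage — or more crudely, since distinct values among a bounded-height family are bounded in number only per bounded size — one reaches $m+1$ kept sets before the element-sum exceeds $O(\log^3 m)$. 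I'd write up whichever of these two routes gives the cleanest bound.
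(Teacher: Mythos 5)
Your setup is correct and matches the paper's: you expose the multiplicative structure $1 + w_S = \prod_{s\in S} f(s)$ with $f(s) = 1 + q/(\lambda^s - 1)$ and $\lambda = 1 + q/w$, you note that the hypotheses on $q,w$ rule out the degenerate cases, and you plan to draw the $S_i$ from a pool of $O(\operatorname{poly}\log m)$ small integers. That much aligns with the paper (which actually works with $b = |1+q/w|$ and restricts $S$ to \emph{even} integers to avoid sign trouble when $1+q/w<0$; your $\lambda$ would need the same care). But the entire difficulty of the lemma sits in part~\iref{conclusion w_s distinct}, and neither of your two proposed routes closes it.

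\smallskip

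The rank-of-relations route runs the wrong way. Write $P$ for the number of primes appearing in the $f(s)$. The subgroup of $\Q^\times$ generated by $f(1),\dots,f(N)$ has rank $r\le\min(N,P)$, so the relation lattice $L\subseteq\Z^N$ has rank $N-r$, which is \emph{large} when $P$ is small. Bounding $P$ therefore upper-bounds the rank of the image and potentially \emph{increases} the number of collisions; it gives no lower bound on $|\{\prod_{s\in S} f(s):S\subseteq\{1,\dots,N\}\}|$. (Extreme case: all $f(s)$ equal yields only $N+1$ distinct products from $2^N$ subsets.) Since you need $m+1$ distinct values and $m$ can be polynomially large while $N=\operatorname{poly}\log m$, singletons and small subfamilies do not suffice, and you genuinely need many products of subsets to be distinct — which the rank bound cannot deliver. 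The greedy route has the same hole: ``forbidden values at each stage are few'' is asserted, not argued, and there is no a priori reason a generic small pool cannot suffer massive multiplicative cancellation. There is also a bookkeeping slip: over the pool $\{1,\dots,N\}$, element-sums of subsets are $\Theta(N^2)$, not $O(N)$, so $N = \Theta(\log^3 m)$ would give $\Theta(\log^6 m)$ unless you additionally cap the subset size.

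\smallskip

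What the paper actually does is \emph{engineer} the pool so that distinctness can be proved directly, rather than hoping it holds for a generic pool. The set $S_i$ encodes the binary expansion $b_\ell\cdots b_0$ of $i$ via elements of the form $s_0 + \Delta\lceil\log m\rceil\cdot(2j+b_j)$, forcing any two elements drawn from any two sets to be either equal or at least $\Delta\log m$ apart. To compare $w_{S_i}$ and $w_{S_j}$ one clears denominators and is left with a signed sum of powers $b^{\|X\|}$ over $X\subseteq (S_i\setminus S_j)\cup(S_j\setminus S_i)$; by the spacing, the extreme exponent (largest if $b>1$, smallest if $b<1$) is isolated by a gap of $\Delta\log m$ from every other exponent and has a nonzero coefficient (this is where $q\notin\{0,1\}$ is used), so choosing the constant $\Delta=\Delta(q,w)$ large enough makes the dominant term beat the sum of the rest by the triangle inequality. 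This dominant-term argument is the missing idea; without some analogue of it, your proposal does not establish~\iref{conclusion w_s distinct}.
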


\begin{proof}
  Let $b=|1+q/w|$ and $f(s)=1+q/(b^s-1)$ for $s>0$. %
  Our choice of parameters ensures that $b>0$ and $b\neq 1$, so $f$ is
  a well-defined, continuous, and strictly monotone function from
  $\R^+\to\R$. %
  Furthermore, $w_S=-1+\prod_{s\in S} f(s)$ for all finite sets~$S$ of
  positive even integers. %
  Now let~$s_0\geq 2$ be an even integer such that~$f(s)$ is nonzero
  and has the same sign as~$f(s_0)$ for all $s\geq s_0$.
  For $i=0,\dots,m$, let $b_\ell\cdots b_0$ denote the binary
  expansion of $i$ where $\ell=\lfloor \log m\rfloor$. %
  Let $\Delta>6$ be a gap parameter that is a large and even integer
  chosen later, but only depends on $q$ and $w$.
  We define
  \begin{equation*}
    S_i =
    \Big\{\, s_0 + \Delta\lceil\log m\rceil \cdot (2j+b_j)
      \;\colon\;
    0\leq j \leq \ell\,\Big\}\,.
  \end{equation*}

  The salient feature of this construction is that all sets $S_i$ are
  different, of equal small cardinality, contain only positive even
  integers, and are from a range where~$f$ does not change sign. %
  Most important for our analysis is that the elements of the~$S_i$
  are spaced apart significantly, i.e.,
  \begin{equation}\tag{P}\label{salient: spaced apart}%
    \text{for $i,j$ and any $s\in S_i$ and $t\in S_j$,
    either $s=t$ or $|s-t|\geq \Delta\log m$.}
  \end{equation}
  From $|S_i|=\lfloor\log m\rfloor +1$ and the fact that all numbers
  in the sets are bounded by $O(\log^2 m)$, we immediately
  get~\iref{conclusion w_s total size}.

  \medskip To establish~\iref{conclusion w_s distinct}, let $0\leq i
  < j \leq m$. %
  We want to show that $w_{S_i} \neq w_{S_j}$. %
  Let us define $S=S_i\setminus S_j$ and $T=S_j\setminus S_i$. %
  From~\eqref{eq: w_s definition}, we see by multiplying with $(w_{S_i
    \cap S_j}+1)$ on both sides that $w_S+1 = w_T+1$ is equivalent to
  $w_{S_i} = w_{S_j}$ since $w_{S_i \cap S_j}\neq -1$.

  It remains to show that $\prod_{s\in S} f(s) \neq \prod_{t\in T}
  f(t)$. %
  Equivalently,
  \begin{equation}\label{eq: diff}
  \prod_{s\in S} \big(b^s+q-1\big) \prod_{t\in T} \big(b^t-1\big)
  -
  \prod_{t\in T} \big(b^t+q-1\big) \prod_{s\in S} \big(b^s-1\big)
  \neq 0
  \end{equation}

  We will multiply out the products in \eqref{eq: diff}. %
  Using the notation $\|X\|=\sum_{x\in X} x$, we rewrite
  \begin{equation*}
    \prod_{s\in S} \big(b^s+q-1\big) \prod_{t\in T} \big(b^t-1\big)
    =
    \sum_{X\subseteq S\cup T}
      (-1)^{|T\setminus X|} (q-1)^{|S\setminus X|} b^{\|X\|}\,.
  \end{equation*}
  Here we use the convention that for $X\subseteq S\cup T$, the term
  $b^s$ is taken in the first factor if $s\in X\cap S$, and $b^t$ is
  taken in the second factor if $t\in X\cap T$. %
  Doing this for both terms of~\eqref{eq: diff} and collecting terms
  we arrive at the equivalent claim
  \begin{equation}\label{eq: g claim}
    \sum_{X\subseteq S\cup T} g(X) \neq 0\,,
  \end{equation}
  where
  \begin{equation}\label{eq: g}
    g(X)=
    \left(
    (-1)^{|T\setminus X|} (q-1)^{|S\setminus X|}
    -
    (-1)^{|S\setminus X|} (q-1)^{|T\setminus X|}
    \right) \cdot b^{\|X\|}\,.
  \end{equation}
  Let $s_1$ be the smallest element of $S\cup T$ and without loss of
  generality assume that $s_1\in S$ (otherwise exchange $S$ and
  $T$). %
  Now from~\eqref{eq: g} and $|S|=|T|$, it follows that
  \begin{align*}
    g\big(S\cup T\big) =g(\emptyset) &=0\\
    g\big((S\cup T)\setminus\{s_1\}\big) &= q \cdot b^{\|S\cup T\|-s_1}\\
    g\big(\{s_1\}\big) &= (-q)\cdot(1-q)^{|S|-1} \cdot b^{s_1}\,.
  \end{align*}
  Since $q\neq0$, the largest exponent of~$b$ with nonzero coefficient
  in~\eqref{eq: g} is ${\|S\cup T\|-s_1}$ and all other exponents are
  at least $\Delta \log m$ smaller than that. %
  Similarly since $q\not\in\{0,1\}$, the smallest exponent of~$b$ with
  nonzero coefficient is $s_1$ and all other exponents are at least
  $\Delta \log m$ larger. %

  We let $X_0$ be the index in \eqref{eq: g claim} that maximizes the
  value $|g(X_0)|$.
  By the above considerations, we have $X_0=S\cup T\setminus\{s_1\}$
  for $b>1$ and $X_0=\{s_1\}$ for $b<1$.
  The total contribution of the remaining terms is $h=\sum_{X\neq X_0}
  g(X)$. %
  We prove~\eqref{eq: g claim} by showing $|h|<|g(X_0)|$. %
  From the triangle inequality and the fact that $S\cup T$ has at
  most~$4m^2$ subsets~$X$, we get
  \begin{equation*}
    |h|  \leq 4m^2 \cdot \max_{X\neq X_0}|g(X)|\\
         \leq 4m^2 \cdot 2|q-1|^{1+\log m}\cdot b^{\|X_0\| \pm \Delta\log m}
  \end{equation*}
  where the sign in $\pm\Delta\log m$ depends on whether~$b$ is larger
  or smaller than~$1$. %
  If ${b>1}$, the sign is negative. %
  In this case, notice that $\Delta=\Delta(q,w)$ can be chosen such that
  $4m^2\cdot 2|q-1|^{1+\log m} < |q|\cdot b^{\Delta \log m}$ for all
  $m\geq 2$.  %
  If $b<1$, we can similarly choose $\Delta$ as to satisfy $4m^2\cdot
  2|q-1|^{1+\log m} < |q|\cdot|1-q|^{|S|-1}\cdot b^{-\Delta \log
    m}$. %
  Thus, in both cases we have $|h|<|g(X_0)|$,
  which establishes~\iref{conclusion w_s distinct}.
\end{proof}

\subsection*{Points on the Hyperbolas}

The following proposition establishes
Theorem~\ref{thm: Tutte main result}\iref{thmi: Tutte simple},
which states that $Z$ is hard to evaluate at most points $(q,w)$ with
$q\not\in\{0,1\}$.
\begin{proposition}\label{prop: Tutte Theta}
  Let $(q,w)\in\Q^2\setminus\{(4,-2),(2,-1),(2,-2)\}$
  with $q\notin\{0,1\}$ and \mbox{$w\neq 0$}.
  If \cETH{} holds, then
  $Z(G;q,w)$ for a given simple graph $G$ cannot be computed in time
  $\exp(o(m/\log^3 m))$.%
\end{proposition}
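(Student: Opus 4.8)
The plan is to reduce from the univariate hyperbola hardness established in Proposition~\ref{prop: Ising hyperbola} (for $q=2$) and Proposition~\ref{prop: hyp} (for $q\notin\{0,1,2\}$): under \cETH{}, the coefficients of $v\mapsto Z(G;q,v)$ cannot be computed in time $\exp(o(m))$ for a simple graph $G$ with $m$ edges. First I would dispose of the main case, $q\notin\{0,1,-w,-2w\}$ with $w\neq0$. Since $v\mapsto Z(G;q,v)$ has degree at most $m$, it suffices to evaluate it at $m+1$ distinct points. Apply Lemma~\ref{lem: w_s} (legitimate under these assumptions on $q,w$) to obtain, in polynomial time, sets $S_0,\dots,S_m$ of positive even integers with $\sum_{s\in S_i}s\le O(\log^3 m)$ and pairwise distinct $w_{S_i}$, and form the Theta inflations $\inflate{G}{\Theta_{S_i}}$. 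Because every path in $\Theta_{S_i}$ has at least two edges, each such inflation is a simple graph, with exactly $m\cdot\sum_{s\in S_i}s=O(m\log^3 m)$ edges.

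The Theta-shift identity \eqref{eq: Z tensor} of Lemma~\ref{lemma: theta shift} then writes $Z(\inflate{G}{\Theta_{S_i}};q,w)$ as an explicit nonzero scalar times $Z(G;q,w_{S_i})$; the scalar $q^{-|E|\,|S_i|}\prod_{s\in S_i}\bigl((q+w)^s-w^s\bigr)^{|E|}$ is nonzero because $q\neq0$ and because, over $\Q$, $(q+w)^s=w^s$ for some $s\ge1$ would force $1+q/w\in\{1,-1\}$, i.e.\ $q\in\{0,-2w\}$; ruling out $q=-2w$ here is exactly what the exclusion of $(4,-2)$ and $(2,-1)$ later accommodates. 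Thus querying the assumed oracle on the $m+1$ simple graphs $\inflate{G}{\Theta_{S_i}}$ yields $Z(G;q,w_{S_i})$ at $m+1$ distinct arguments, and interpolation recovers all coefficients of $v\mapsto Z(G;q,v)$. Since each query has $M=O(m\log^3 m)$ edges, an algorithm computing $Z(\cdot;q,w)$ in time $\exp(o(M/\log^3 M))$ would solve the coefficient problem in time $\exp(o(m))$, contradicting \cETH{}. This would establish the proposition whenever $q\notin\{0,1,-w,-2w\}$ and $w\neq0$.

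Two lines remain (together with the three excluded points): $q=-w$ (the line $x=0$), minus $(2,-2)$, and $q=-2w$ (the line $x=-1$), minus $(4,-2)$ and $(2,-1)$. On both, Lemma~\ref{lem: w_s} breaks down: writing $b:=|1+q/w|$, one has $b=0$ when $q=-w$ and $b=1$ when $q=-2w$, so the shift $w_S$ degenerates and depends only on $|S|$, which would cost a $\poly(m)$ rather than polylogarithmic blow-up if applied naively. For the line $q=-w$ I would not interpolate at all: by \eqref{eq: w_s definition} with $w=-q$ one gets $w_S=-1+(1-q)^{|S|}$, so a single Theta graph $H$ of constant size (all paths of length two, with $|S|\in\{2,3\}$) produces a shifted weight $w_H$ that, for every admissible $q$, avoids $\{0,-q,-q/2\}$ and is not exceptional; then $(q,w_H)$ is covered by the main case, and $Z(\inflate{G}{H};q,w)\sim Z(G;q,w_H)$ converts an oracle at $(q,-q)$ into one at the good point $(q,w_H)$ with only a constant-factor blow-up, so $\exp(o(m/\log^3 m))$ carries over.

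The remaining line $q=-2w$ (equivalently $x=-1$) is the main obstacle. Here Lemma~\ref{lemma: theta shift} does not even apply at the target weight $w=-q/2$; even-length Theta paths contract their terminals while odd-length ones leave the weight unchanged, so no constant-size inflation moves us off this line — and bundling odd paths to realise the thickening shifts $(1+w)^k-1$ only yields distinct values with $k$ up to $m+1$, reintroducing a quadratic blow-up. Dealing with $x=-1$ therefore calls for a dedicated argument: a bespoke simple-graph gadget (morally a stretch-free surrogate for the $2$-thickening, which on $x=-1$ shifts to the innocuous weight $w(w+2)$), combined with the Whitney-twist invariance of $Z$ to keep it small, so that a single oracle call at $(q,-q/2)$ reduces to a call at a point already known to be hard. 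Verifying that such a gadget exists, stays simple, and lands at an already-hard point is the delicate part; everything else is the interpolation bookkeeping above.
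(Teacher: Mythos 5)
Your treatment of the main case ($q\notin\{0,1,-w,-2w\}$) and of the line $q=-w$ is correct and matches the paper's strategy: interpolate via Theta inflations and Lemma~\ref{lem: w_s} for the former, and for the latter apply a single constant-size inflation to land at a point covered by the main case. The genuine gap is the line $q=-2w$, which you leave open; and the key claim you use to declare it out of reach---that ``no constant-size inflation moves us off this line''---is wrong.

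The slip is hiding in your own next clause. You correctly note that bundling odd-length paths realises the thickening shift $w\mapsto(1+w)^k-1$ at $q=-2w$, but then object that distinct values require $k$ up to $m+1$, as if the interpolation were to happen \emph{on} the line $q=-2w$. It is not. The excluded points $(4,-2)$ and $(2,-1)$ guarantee $1+w=1-q/2\notin\{0,\pm1\}$, so $(1+w)^k-1$ runs through infinitely many distinct rationals as $k$ grows, and some single \emph{constant} $k$ (depending only on $(q,w)$) already gives $w''=(1+w)^k-1$ with $w''\neq 0$ and $q\notin\{-w'',-2w''\}$; only finitely many $k$ are forbidden. The corresponding gadget is a simple two-terminal graph with $O(1)$ edges: $k$ internally disjoint paths of a fixed odd length $\geq 3$ between the terminals, i.e., a $k$-thickening followed by an odd stretch. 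On $q=-2w$ an odd-length path contributes a nonzero series factor (unlike length~$2$, where $q+2w=0$ and the $2$-stretch degenerates) and leaves the edge weight unchanged, while the parallel bundling thickens, so the composite shift is exactly $(1+w)^k-1$. A single oracle call at $(q,-q/2)$ thus reduces, with constant-factor edge blow-up, to one at a point handled by the main case, and the $\exp(o(m/\log^3 m))$ bound carries over. The paper treats both degenerate lines uniformly with this one idea; your separate $w_S=-1+(1-q)^{|S|}$ calculation for $q=-w$ is correct but unnecessary, and the sketched ``bespoke gadget plus Whitney twist'' for $q=-2w$ is never produced, so the proposal does not establish the proposition.
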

By \eqref{eq: sokal tutte}, the points
$(4,-2)$, $(2,-1)$, and $(2,-2)$ in the $(q,w)$-plane
correspond to the polynomial-time computable
points $(-1,-1)$, $(-1,0)$, and $(0,-1)$ in the $(x,y)$-plane.
\begin{proof}
  We reduce from the problem of computing the coefficients of the
  polynomial $v\mapsto Z(G;q,v)$, which cannot be done in time
  $\exp(o(m))$ for $q\not\in\{0,1\}$ by Proposition~\ref{prop: Ising
    hyperbola} and Proposition~\ref{prop: hyp} (assuming \cETH{}).  We
  interpolate as in the proof of Proposition~\ref{prop: individual
    points, multigraphs, nonzero q}, but instead of thickenings we use
  Theta inflations to keep the number of edges relatively small.

  First we consider the degenerate case in which $q=-w$ or $q=-2w$.
  For a positive integer constant~$k$,
  let $G'$ be the $k$-thickening of~$G$.
  This transformation shifts the weight to $w'$ with
  \begin{equation*}
    w' = (1+w)^k-1\,,
  \end{equation*}
  which allows us to compute $Z(G;q,w')$ from $Z(G';q,w)$
  using~\eqref{multitutte thickening}.
  In the case $q=-w$, we have $1+w=1-q$, which cannot be~$1$ or~$0$,
  but which can also not be~$-1$ since then $(q,w)=(2,-2)$.
  Similarly, in the case $q=-2w$, we have $1+w=1-q/2$,
  which cannot be~$1$.
  It can also not be~$0$ since then $(q,w)=(2,-1)$,
  neither can it be~$-1$ since then $(q,w)=(4,-2)$.
  Thus, in any case, $(1+w)\not\in\{0,\pm 1\}$.
  This means that we can choose~$k$ large enough so that
  $q\not\in\{-w',-2w'\}$.
  This remains true if we let $G''$ be the $2$-stretch to~$G'$, which
  shifts the weight to $w''$ with
  \begin{equation*}
    1+\frac{q}{w''} = \left(1+\frac{q}{w'}\right)^2\,,
  \end{equation*}
  so that $Z(G;q,w'')$ can be computed from $Z(G'';q,w)$ (see
  \cite{Sokal2004}).
  We choose~$k$ so that $q\not\in\{-w'',-2w''\}$.
  The graph $G''$ after this transformation is simple and the number
  of edges is only increased by a constant factor of~$2k$.

  By the above, we can assume w.l.o.g.\ that $q\not\in\{-w,-2w\}$.
  We observe that the conditions $w\neq 0$
  and $q\not\in\{0,1,-w,-2w\}$ of Lemma~\ref{lem: w_s} now hold, and
  thus we can compute $m+1$ sets $S_0,S_1,\dots, S_m$ with all
  distinct weight shifts $w_0,\dots,w_m$ under Theta inflations.

  For a given graph $G$, let $G_i=G\otimes \Theta_{S_i}$.
  Using Lemma~\ref{lemma: theta shift}, we can compute the values
  $Z(G;q;w_i)$ from $Z(G_i;q,w)$.
  Moreover, as is clear from~\eqref{eq: sokal def}, the function
  $v\mapsto Z(G;q,v)$ is a polynomial of degree at most $m$, so we can
  use interpolation to recover its coefficients.  %
  We remark that the $G_i$ are simple graphs with at most
  $O(m\log^3 m)$ edges, so the claim follows. %
\end{proof}

\vspace{.2cm}
\subsection*{Wump Graphs}
\label{sec: reliability}

The line $x=1$ in the Tutte plane, the \emph{reliability line},
is not covered by the above since here $q=0$ holds.
On this line, the Tutte polynomial specializes (up to a closed-form multiplicative factor)
to the {\em reliability polynomial} $R(G;p)$ (with $p=1/y$),
an object studied in algebraic graph theory \cite[Section~15.8]{GoRo01}. %
Given a connected graph~$G$ and a probability~$p$, $R(G;p)$ is the
probability that~$G$ stays connected if every edge independently fails
with probability~$p$. %
For example
$R(
\vcenter{\hbox{$
{\begin{tikzpicture} \path (90+1*72:0.15) node[vertex]
      (v0) {}; \path (90+2*72:0.15) node[vertex] (v1) {}; \path
      (90+3*72:0.15) node[vertex] (v2) {}; \path (90+4*72:0.15)
      node[vertex] (v3) {}; \path (90+5*72:0.15) node[vertex] (v4) {};
      \draw (v0) -- (v1) -- (v2) -- (v3) -- (v4) -- (v0);
\end{tikzpicture}}$}}\,
;\frac{1}{3})
=
Pr(
\vcenter{\hbox{$
{\begin{tikzpicture} \path (90+1*72:0.15) node[vertex]
      (v0) {}; \path (90+2*72:0.15) node[vertex] (v1) {}; \path
      (90+3*72:0.15) node[vertex] (v2) {}; \path (90+4*72:0.15)
      node[vertex] (v3) {}; \path (90+5*72:0.15) node[vertex] (v4) {};
      \draw (v0) -- (v1) -- (v2) -- (v3) -- (v4) -- (v0);
\end{tikzpicture}}$}}
)
+
5Pr(
\vcenter{\hbox{$
{\begin{tikzpicture} \path (90+1*72:0.15) node[vertex]
      (v0) {}; \path (90+2*72:0.15) node[vertex] (v1) {}; \path
      (90+3*72:0.15) node[vertex] (v2) {}; \path (90+4*72:0.15)
      node[vertex] (v3) {}; \path (90+5*72:0.15) node[vertex] (v4) {};
      \draw (v0) -- (v1) -- (v2) -- (v3) -- (v4);
\end{tikzpicture}}$}}
)
=
(\frac{2}{3})^5 + 5\cdot \frac{1}{3}\cdot
(\frac{2}{3})^4 = \frac{112}{243}$.
Note that $R(G;1)=0$ for all connected graphs, so $p=1$ is easy to
evaluate, which we know is also the case (though for less trivial reasons)
for the corresponding limit point $(1,1)$ in the Tutte plane. %

Along the reliability line, weight shift identities take a different
form.
We use deletion--contraction identities to derive the following rules.
They are simple multi-weighted generalizations of
\cite[Section~4.3]{GJ08Tutte}.
\begin{lemma}	\label{lem: stretch}
  Let $G$ be a graph with edge weights given by $\mathbf w : E(G)\to\Q$.

	If $\varphi(G)$ is obtained from $G$ by replacing a single edge
	$e\in E$ with a simple path of $k$ edges $P=\{e_1,...,e_k\}$
	with $\mathbf w(e_i)= w_i$, then
	\[
		\ZREL{\varphi(G)}{\mathbf w} =
		C_P \cdot \ZREL{G}{\mathbf w[e\mapsto w']}
		\,,
	\]
	where
	\[
		\frac{1}{w'} =
		\frac{1}{w_1}+\cdots+\frac{1}{w_k}
		\;\;\;\;\; \text{ and } \;\;\;\;\;
		C_P = \frac{1}{w'}\prod_{i=1}^k w_i
		\,.
	\]
\end{lemma}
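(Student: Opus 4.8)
The plan is to work on the reliability line $q=0$ and iterate the deletion--contraction identity~\eqref{eq: delcontr}. Since $\ZREL{\cdot}{\cdot}=\ZNUL{\cdot}{0}{\cdot}$, setting $q=0$ in~\eqref{eq: delcontr} gives
\[
  \ZNUL{G}{0}{\mathbf w} =
  \begin{cases}
    \mathbf w(e)\,\ZNUL{G/e}{0}{\mathbf w} & \text{if $e$ is a bridge,}\\
    \ZNUL{G-e}{0}{\mathbf w}+\mathbf w(e)\,\ZNUL{G/e}{0}{\mathbf w} & \text{otherwise,}
  \end{cases}
\]
because the deletion term acquires the prefactor $q$ in the bridge case. (Equivalently, $\ZNUL{G}{0}{\mathbf w}=\sum_{A}\prod_{e'\in A}\mathbf w(e')$ with $A$ ranging over the edge sets satisfying $\comp(A)=\comp(E)$, and every bridge lies in each such $A$.)

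First I would settle the single-subdivision step: if $H$ has an edge $f$ of weight $a$ and $H'$ arises from $H$ by subdividing $f$ into two edges $e_1,e_2$ of weights $b,c$ with $\tfrac1b+\tfrac1c=\tfrac1a$ (equivalently $(b+c)a=bc$), then $\ZREL{H'}{\mathbf w}=(b+c)\,\ZREL{H}{\mathbf w}$, where the shared weight function equals $\mathbf w$ on all edges of $H$ other than $f$. The key observations about the new degree-$2$ vertex $v$ are: contracting $e_1$ yields $H'/e_1\cong H$ with $f$ re-weighted to $c$; deleting $e_1$ leaves $v$ pendant, so $e_2$ is a bridge of $H'-e_1$ and $(H'-e_1)/e_2\cong H-f$; and $e_1$ is a bridge of $H'$ exactly when $f$ is a bridge of $H$. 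Applying the $q=0$ identity above to $e_1$ in $H'$, and then once more to re-expand the outcome in terms of $\ZNUL{H-f}{0}{\mathbf w}$ and $\ZNUL{H/f}{0}{\mathbf w}$ (using the bridge rule on the pendant $e_2$, and deletion--contraction of $f$ in the copy where $f$ has weight $c$), a two-line computation in each of the two cases---$f$ a bridge of $H$, or not---yields exactly $(b+c)\,\ZREL{H}{\mathbf w}$ after invoking $(b+c)a=bc$. The same identifications hold when $f$ is a loop (then $e_1,e_2$ are both non-bridges), so loops need no separate treatment.

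The lemma then follows by induction on the path length $k$; the base case $k=1$ is trivial ($w'=w_1$, $C_P=1$). For $k\geq2$, regard the $k$-edge replacement as a single subdivision of a $(k-1)$-edge replacement: let $\psi(G)$ replace $e$ by the path $e_1,\dots,e_{k-2},f$ carrying weights $w_1,\dots,w_{k-2},w^{\ast}$ with $\tfrac1{w^{\ast}}=\tfrac1{w_{k-1}}+\tfrac1{w_k}$, so that subdividing $f$ into $e_{k-1},e_k$ recovers $\varphi(G)$. By the single-subdivision step, $\ZREL{\varphi(G)}{\mathbf w}=(w_{k-1}+w_k)\,\ZREL{\psi(G)}{\mathbf w}$, and the induction hypothesis applied to $\psi(G)$ gives $\ZREL{\psi(G)}{\mathbf w}=\tfrac1{w'}\,(w_1\cdots w_{k-2}w^{\ast})\cdot\ZREL{G}{\mathbf w[e\mapsto w']}$ with $\tfrac1{w'}=\tfrac1{w_1}+\cdots+\tfrac1{w_{k-2}}+\tfrac1{w^{\ast}}=\sum_{i=1}^{k}\tfrac1{w_i}$, matching the $w'$ in the statement. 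Since $(w_{k-1}+w_k)w^{\ast}=w_{k-1}w_k$, the two prefactors multiply to $\tfrac1{w'}\prod_{i=1}^{k}w_i=C_P$, as required. (Throughout one assumes the $w_i$, and hence $w'$, are nonzero rationals, so that the reciprocals are defined.)

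The step that needs the most care is the single-subdivision computation, specifically the split according to whether $f$ is a bridge of $H$: the two sub-cases invoke different forms of the $q=0$ deletion--contraction rule and must be checked to produce the same prefactor $b+c$ and the same shifted weight; everything else---identifying $H'/e_1$ and $H'-e_1$ and tracking which edges become bridges after a deletion---is routine bookkeeping. As an alternative that sidesteps the bridge dichotomy, one can expand $\ZNUL{\varphi(G)}{0}{\mathbf w}$ directly over the $2^{k}$ subsets of the subdividing path: the constraint $\comp(A)=\comp(\varphi(G))$ forces $A$ to contain either all $k$ path edges or all but exactly one of them, and summing the corresponding weights yields $\bigl(\prod_{i=1}^{k}w_i\bigr)\bigl(\Sigma_1+(\sum_{i=1}^{k}\tfrac1{w_i})\,\Sigma_0\bigr)$, where $\Sigma_1=\sum_{B:\,\comp(B\cup e)=\comp(E)}\prod_{e'\in B}\mathbf w(e')$ and $\Sigma_0=\sum_{B:\,\comp(B)=\comp(E)}\prod_{e'\in B}\mathbf w(e')$; this equals $\tfrac1{w'}\bigl(\prod_{i=1}^{k}w_i\bigr)\,\ZREL{G}{\mathbf w[e\mapsto w']}$, again giving $C_P=\tfrac1{w'}\prod_{i=1}^{k}w_i$, at the cost of a slightly finer count of which path-edge subsets keep the component number minimal.
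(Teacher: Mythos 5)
The paper does not actually prove Lemma~\ref{lem: stretch}: it simply states that this rule and Lemma~\ref{lem: thick} ``are simple multi-weighted generalizations of [GJ08Tutte, Section~4.3]'' obtained from the deletion--contraction identity \eqref{eq: delcontr}. So there is no paper proof to compare against line by line; your job here is to check that the route you chose is actually sound.

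It is. Both of your arguments are correct. Your first route---induction on $k$ with a single-subdivision step and the $q=0$ specialization of \eqref{eq: delcontr}---is exactly the deletion--contraction derivation the paper alludes to, and you correctly identified the key bookkeeping facts ($H'/e_1\cong H$ with $f$ re-weighted to $c$; $H'-e_1$ has $e_2$ pendant and so a bridge, with $(H'-e_1)/e_2\cong H-f$; $e_1$ is a bridge of $H'$ iff $f$ is a bridge of $H$). The case split produces $(b+c)Z_0(H-f)+bc\,Z_0(H/f)$ in the non-bridge case and $bc\,Z_0(H/f)$ in the bridge case, which equal $(b+c)\ZREL{H}{\mathbf w}$ after $a(b+c)=bc$; your remark that a loop $f$ needs no separate case is also right, since then $e_1,e_2$ are both non-bridges and $H/f=H-f$.

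Your second, direct route is arguably cleaner and worth highlighting: since $\ZREL{G}{\mathbf w}=\sum_{A:\,\comp(A)=\comp(E)}\prod_{e'\in A}\mathbf w(e')$, and every internal vertex of the subdividing path has degree~$2$ with both incident edges on the path, any $A$ with $\comp(A)=\comp(\varphi(G))$ must contain either all $k$ path edges (then $\comp$ behaves as if $e$ were present in $G$) or exactly $k-1$ of them (then $\comp$ behaves as if $e$ were absent); two or more missing path edges strand an internal segment as a fresh component. Summing gives $\bigl(\prod_i w_i\bigr)\bigl(\Sigma_1+\tfrac1{w'}\Sigma_0\bigr)=\tfrac1{w'}\bigl(\prod_i w_i\bigr)\bigl(w'\Sigma_1+\Sigma_0\bigr)=C_P\,\ZREL{G}{\mathbf w[e\mapsto w']}$, where $\Sigma_1,\Sigma_0$ are as you defined them, and this handles the bridge and loop edge cases uniformly (they merely make $\Sigma_0=0$ or $\Sigma_0=\Sigma_1$). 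This one-shot expansion sidesteps the induction and the bridge dichotomy entirely; the deletion--contraction route, on the other hand, is the one that composes transitively with the paper's other shift identities and is closer in spirit to how the paper uses these rules.

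One minor caveat to state explicitly if you write this up: the formula presumes $w_i\neq 0$ for all $i$ and $\sum_i 1/w_i\neq 0$ so that $w'$ and $C_P$ are defined. Within the paper's usage (Lemma~\ref{lem: wumpweights} and Proposition~\ref{prop: reliability}) all weights are nonzero rationals and the sums are arranged to be nonzero, so this is harmless, but the lemma as stated leaves it implicit.
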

Here $\mathbf w[e\mapsto w']$ denotes the function
$\mathbf w':E(G)\to\Q$ that is identical to $\mathbf w$ except at the
point~$e$ where it is~$\mathbf w'(e)=w'$.
\begin{lemma} 	\label{lem: thick}
	If $\varphi(G)$ is obtained from $G$ by replacing a single edge
	$e\in E$ with a bundle of parallel edges $B =\{e_1,\ldots,e_k\}$
	with $\mathbf w(e_i)= w_i$, then
	\[
		\ZREL{\varphi(G)}{\mathbf w} =
		\ZREL{G}{\mathbf w[e\mapsto w']}
		\,,
	\]
	where
	\[
		w' = -1 + \prod_{i=1}^k (1+w_i)
		\,.
	\]
\end{lemma}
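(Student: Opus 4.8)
The plan is to prove the identity by induction on the bundle size~$k$, using the deletion--contraction rule \eqref{eq: delcontr} for $Z_0$ specialised to $q=0$ together with one auxiliary observation: a loop~$\ell$ of weight~$w_\ell$ contributes a global factor~$(1+w_\ell)$ to $Z_0(\,\cdot\,;0,\mathbf w)$, because inserting a loop never changes the number of components~$\comp(A)$ of any edge subset. Throughout write $H=\varphi(G)$ and abbreviate $Z_0(\,\cdot\,;0,\,\cdot\,)$ by~$Z_0$. The base case $k=1$ is trivial, since then $w'=-1+(1+w_1)=w_1$ and $H$ is literally~$G$ with~$e$ reweighted to~$w_1$.

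For the inductive step, assume the claim for bundles of size $k-1\geq 1$ and let $B=\{e_1,\dots,e_k\}$. Since $e_k$ is parallel to~$e_{k-1}$ in~$H$, it is not a bridge, so the non-bridge case of \eqref{eq: delcontr} gives $Z_0(H)=Z_0(H-e_k)+w_k\,Z_0(H/e_k)$. Here $H-e_k$ is~$G$ with~$e$ replaced by the $(k-1)$-bundle, so the induction hypothesis yields $Z_0(H-e_k)=Z_0(G;0,\mathbf w[e\mapsto w''])$ with $w''=-1+\prod_{i=1}^{k-1}(1+w_i)$; and $H/e_k$ is $G/e$ with the edges $e_1,\dots,e_{k-1}$ turned into loops at the identified vertex, so the loop observation gives $Z_0(H/e_k)=\bigl(\prod_{i=1}^{k-1}(1+w_i)\bigr)Z_0(G/e)=(w''+1)\,Z_0(G/e)$. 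Now expand $Z_0(G;0,\mathbf w[e\mapsto w''])$ by \eqref{eq: delcontr} along~$e$ — and likewise the target quantity $Z_0(G;0,\mathbf w[e\mapsto w'])$, which falls under the same case of \eqref{eq: delcontr} because bridgeness of~$e$ is independent of its weight. Both expansions leave only $Z_0(G-e)$ and $Z_0(G/e)$, and the lemma reduces to the scalar identity $w''+w_k(w''+1)=(w''+1)(1+w_k)-1=\prod_{i=1}^{k}(1+w_i)-1=w'$. With this, $Z_0(H)$ and $Z_0(G;0,\mathbf w[e\mapsto w'])$ both equal $Z_0(G-e)+w'\,Z_0(G/e)$ in the non-bridge case and $w'\,Z_0(G/e)$ in the bridge case, completing the induction.

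The only delicate point is this bridge/non-bridge bookkeeping: at $q=0$ the bridge case of \eqref{eq: delcontr} discards the $G-e$ term, so one must track which case applies on each side, but since bridgeness is a property of the graph alone the cases line up and nothing actually obstructs the argument. If one prefers to avoid the case split entirely, the same manipulation proves the stronger parallel-reduction identity $Z(\varphi(G);q,\mathbf w)=Z(G;q,\mathbf w[e\mapsto w'])$ for \emph{all}~$q$ — the factor $q^{\comp(A)}$ is unchanged when one $xy$-edge is replaced by several parallel ones, and the remaining weight contributions of the bundle sum to $\sum_{\emptyset\neq I\subseteq[k]}\prod_{i\in I}w_i=\prod_i(1+w_i)-1=w'$ — and since $Z_0=q^{-\comp(E)}Z$ with $\comp(E)$ unaffected by the replacement, the two sides of the lemma are polynomials in~$q$ that agree for every $q\neq 0$ and hence also at $q=0$.
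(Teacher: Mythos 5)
Your proof is correct. The paper states this lemma without proof, indicating only that it follows from deletion--contraction (as a multi-weighted generalisation of the rule in Goldberg--Jerrum \S4.3); your induction via \eqref{eq: delcontr}, the loop observation, and the bridge/non-bridge bookkeeping supply exactly those omitted details, and the closing direct argument (splitting $A\subseteq E(\varphi(G))$ into its bundle part and the rest, then passing from $Z$ to $Z_0$ by polynomial continuity in~$q$) is a valid and arguably cleaner alternative that avoids the case split entirely.
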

\begin{corollary} \label{cor: stretchthick}
	If $\varphi(G)$ is obtained from $G$ by replacing a single edge
	$e\in E$ with a simple path of $k$ edges of constant weight $w$,
	then
	\begin{equation} \label{eq: stretch}
		\ZREL{\varphi(G)}{\mathbf w} =
		kw^{k-1} \cdot \ZREL{G}{\mathbf w[e\mapsto w/k]}
		\,,
	\end{equation}
	and if it is obtained from $G$ by replacing $e\in E$ with a
	bundle of $k$ parallel edges of constant weight $w$, then
	\begin{equation} \label{eq: thick}
		\ZREL{\varphi(G)}{\mathbf w} =
		\ZREL{G}{\mathbf w[e\mapsto (1+w)^k-1]}
		\,.
	\end{equation}
\end{corollary}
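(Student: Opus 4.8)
The plan is to obtain both identities as the single-valued specialisation of Lemma~\ref{lem: stretch} and Lemma~\ref{lem: thick}, applied with $w_1=\cdots=w_k=w$; no new ideas are needed beyond carrying out the two resulting elementary computations of the shifted weight $w'$ (and, in the stretch case, of the prefactor $C_P$).

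For~\eqref{eq: stretch}, I would invoke Lemma~\ref{lem: stretch} on the subdividing path $P=\{e_1,\dots,e_k\}$, each $e_i$ carrying weight $w$; this tacitly requires $w\neq 0$, since the lemma's expression for $w'$ is a sum of reciprocals. The harmonic sum collapses to $1/w'=k/w$, so $w'=w/k$, and the prefactor becomes $C_P=\frac{1}{w'}\prod_{i=1}^k w_i=\frac{k}{w}\cdot w^k=kw^{k-1}$. Substituting these two values into the conclusion of Lemma~\ref{lem: stretch} gives $\ZREL{\varphi(G)}{\mathbf w}=kw^{k-1}\cdot\ZREL{G}{\mathbf w[e\mapsto w/k]}$, which is precisely~\eqref{eq: stretch}.

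For~\eqref{eq: thick}, I would invoke Lemma~\ref{lem: thick} on the bundle $B=\{e_1,\dots,e_k\}$, each $e_i$ of weight $w$. Here the product in the lemma is simply $\prod_{i=1}^k(1+w)=(1+w)^k$, so $w'=-1+(1+w)^k$, and the conclusion of Lemma~\ref{lem: thick} becomes $\ZREL{\varphi(G)}{\mathbf w}=\ZREL{G}{\mathbf w[e\mapsto (1+w)^k-1]}$, as claimed. There is no genuine obstacle: the corollary is nothing more than the two preceding lemmas read in the equal-weight case, and the only arithmetic is the two evaluations of $w'$ and of $C_P$ above; the sole point worth flagging explicitly is the implicit hypothesis $w\neq 0$ inherited from Lemma~\ref{lem: stretch} in the stretch part.
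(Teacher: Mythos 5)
Your proposal is correct and takes exactly the intended route: the paper gives no explicit proof for this corollary, leaving it as the immediate single-valued specialisation of Lemma~\ref{lem: stretch} and Lemma~\ref{lem: thick}, which is precisely the computation you carry out. The arithmetic for $w'$ and $C_P$ checks out, and your flagging of the implicit $w\neq 0$ hypothesis in the stretch case is a sensible observation.
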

These rules are transitive \cite[Lemma~1]{GJ08Tutte}, and so can be freely
combined for more intricate weight shifts. %
We define a class of graph inflations, \emph{Wump inflations}, and
use the above to show that they give rise to distinct weight shifts
along the reliability line of the Tutte polynomial. %
Wump inflations are mildly inspired by $l$-byte numbers, in the
sense that each has associated to it a sequence of length $l$, such
that the lexicographic order of these sequences determines the size of
the corresponding (shifted) weights. %
\begin{definition}[Wump graph]
  For positive integers $i$ (height) and $s$ (width),
  an {\em $(i,s)$-hump} is the graph obtained by identifying all the
  left and all the right endpoints of~$i$ simple paths of length~$s$
  each.
  Given a sequence $S = \langle s_1, s_2, \dots, s_l \rangle$ of~$l$
  positive integers, the {\em Wump graph $\wump{S}$} is the
  graph obtained by concatenating $l$ humps at their endpoints,
  where the $i$-th hump is an $(i,s_i)$-hump, i.e., its height
  is~$i$ and its width is~$s_i$.

	{\centering
	\hfill\par
	\begin{tikzpicture}[scale=0.85]
	\draw (0-1.4,0) sin (0.7-1.4,0.2) cos (1.4-1.4,0);
	\draw (0-1.4,0) sin (0.7-1.4,0.6) cos (1.4-1.4,0);
	\draw (0-1.4,0) sin (0.7-1.4,0.95) cos (1.4-1.4,0);
	\draw (0-1.4,0) sin (0.7-1.4,1.3) cos (1.4-1.4,0)
	plot[only marks, mark=*, mark options={fill=white}, mark size=1.2pt]
	coordinates{ (0-1.4,0) (0.7-1.4,0.6) (0.7-1.4,0.2)
	(0.7-1.4,0.95) (0.7-1.4,1.3) (1.4-1.4,0)};
	\coordinate (label) at (0.7-1.4,-0.8);
	\node [above] at (label) {\footnotesize $(4,2)$-hump};

	\draw (1*1.4,0) sin (1*1.4+0.7,0.2) cos (1*1.4+1.4,0)
	plot[only marks, mark=*, mark options={fill=white}, mark size=1.2pt]
	coordinates{(1*1.4+0,0) (1*1.4+0.45,0.17) (1*1.4+0.95,0.17) (1*1.4+1.4,0) };
	\draw (2*1.4,0) sin (2*1.4+0.7,0.2) cos (2*1.4+1.4,0);
	\draw (2*1.4,0) sin (2*1.4+0.7,0.6) cos (2*1.4+1.4,0)
	plot[only marks, mark=*, mark options={fill=white}, mark size=1.2pt]
	coordinates{(2*1.4+0,0) (2*1.4+0.7,0.2) (2*1.4+0.7,0.6) (2*1.4+1.4,0)};
	\draw (3*1.4,0) sin (3*1.4+0.7,0.2) cos (3*1.4+1.4,0);
	\draw (3*1.4,0) sin (3*1.4+0.7,0.6) cos (3*1.4+1.4,0);
	\draw (3*1.4,0) sin (3*1.4+0.7,0.95) cos (3*1.4+1.4,0)
	plot[only marks, mark=*, mark options={fill=white}, mark size=1.2pt]
	coordinates{
	(3*1.4,0) (3*1.4+1.4,0)
	(3*1.4+0.45,0.17) (3*1.4+0.95,0.17)
	(3*1.4+0.43,0.50) (3*1.4+0.97,0.50)
	(3*1.4+0.41,0.77) (3*1.4+0.99,0.77)
	};
	\draw (4*1.4,0) sin (4*1.4+0.7,0.2) cos (4*1.4+1.4,0);
	\draw (4*1.4,0) sin (4*1.4+0.7,0.6) cos (4*1.4+1.4,0);
	\draw (4*1.4,0) sin (4*1.4+0.7,0.95) cos (4*1.4+1.4,0);
	\draw (4*1.4,0) sin (4*1.4+0.7,1.3) cos (4*1.4+1.4,0)
	plot[only marks, mark=*, mark options={fill=white}, mark size=1.2pt]
	coordinates{
	(4*1.4+0,0) (4*1.4+0.7,0.2) (4*1.4+0.7,0.6)
	(4*1.4+0.7,0.95) (4*1.4+0.7,1.3) (4*1.4+1.4,0)
	};
	\coordinate (label) at (3*1.4,-0.8);
	\node [above] at (label)
	{\footnotesize $S=\langle 3,2,3,2 \rangle$};
	\end{tikzpicture}
	\par}
  \noindent The number~$l$ is the {\em length} of the Wump graph~$\wump{S}$.
\end{definition}

Inflating a graph by a Wump graph shifts the weights on the
reliability line as follows.
\begin{lemma}	\label{lem: wumpweights}%
  For any graph~$G$ with $m$ edges,
  any sequence $S = \langle s_1, s_2, \dots, s_l \rangle$ of positive
  integers, and any non-zero rational number~$w$, we have
  \begin{equation*}
		\ZREL{\inflate{G}{\wump{S}}}{w} =
		C_S^m \cdot \ZREL{G}{w_S}
		\,,
  \end{equation*}
	where
  \begin{equation} \label{eq: wump weight move}%
		\frac{1}{w_S} =
		\sum_{i=1}^{l} \frac{1}{(1+w/s_i)^i-1}
		\;\;\; \text{ and }	\;\;\;
		C_S =
		\frac{1}{w_S} \cdot
		\prod_{i=1}^l w^{(s_i-1)i}
		\left( (w+s_i)^i-s_i^i \right)
		\,.%
  \end{equation}
\end{lemma}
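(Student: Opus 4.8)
The plan is to collapse each copy of the Wump graph $\wump{S}$ sitting inside $\inflate{G}{\wump{S}}$ back down to a single edge, using only the series and parallel weight-shift rules of Corollary~\ref{cor: stretchthick} together with the more general Lemmas~\ref{lem: stretch} and~\ref{lem: thick}. Recall that $\inflate{G}{\wump{S}}$ is obtained from $G$ by replacing each of its $m$ edges by a fresh, internally vertex-disjoint copy of $\wump{S}$, with the two terminals of $\wump{S}$ glued to the endpoints of the edge. Each such copy is therefore a two-terminal subgraph, the rules act locally and are transitive (as noted after Corollary~\ref{cor: stretchthick}), and collapsing the copies one at a time reinstates the edges of $G$ one by one, each now carrying a common weight $w_S$, while contributing a fixed multiplicative constant $C_S$ to $Z_0$. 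After all $m$ steps this yields $\ZREL{\inflate{G}{\wump{S}}}{w}=C_S^m\cdot\ZREL{G}{w_S}$, so the entire lemma reduces to determining the weight $w_S$ and the constant $C_S$ produced by collapsing a single copy of $\wump{S}$, which is a two-terminal series--parallel graph.

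For one copy of $\wump{S}$ I would work hump by hump. The $i$-th hump is $i$ internally disjoint paths of $s_i$ edges of weight $w$. First collapse each path by the series rule~\eqref{eq: stretch}: $s_i$ edges of weight $w$ in series become one edge of weight $w/s_i$ at the cost of a factor $s_i w^{s_i-1}$; over all $i$ paths of the hump this gives a factor $(s_i w^{s_i-1})^i$ and leaves $i$ parallel edges of common weight $w/s_i$. Next collapse these by the parallel rule~\eqref{eq: thick}, applied with ``$w$'' equal to $w/s_i$ and ``$k$'' equal to $i$, producing a single edge of weight $v_i:=(1+w/s_i)^i-1$ and no further constant. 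After treating all $i=1,\dots,l$, the copy of $\wump{S}$ has become a path of $l$ edges in series with weights $v_1,\dots,v_l$, the accumulated constant being $\prod_{i=1}^l(s_i w^{s_i-1})^i$. A last application of the general series rule of Lemma~\ref{lem: stretch} collapses this path to a single edge of weight $w'$ with $1/w'=\sum_{i=1}^l 1/v_i$ — precisely the claimed $1/w_S$ — and with constant $\tfrac{1}{w'}\prod_{i=1}^l v_i=\tfrac{1}{w_S}\prod_{i=1}^l v_i$. Multiplying the two accumulated constants and substituting $v_i=\big((w+s_i)^i-s_i^i\big)/s_i^i$ cancels the $s_i^i$ factors and produces exactly $C_S=\tfrac{1}{w_S}\prod_{i=1}^l w^{(s_i-1)i}\big((w+s_i)^i-s_i^i\big)$ of~\eqref{eq: wump weight move}; this last part is routine bookkeeping.

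The hypothesis $w\neq 0$ is what makes every reduction step legitimate: it keeps $w/s_i\neq 0$, so the series rule~\eqref{eq: stretch} applies, and keeps the constants $s_i w^{s_i-1}$ nonzero. The one remaining subtlety is the final series reduction, which formally wants $v_i\neq 0$ for every $i$; for rational $w$ this fails only at the isolated values $w=-2s_i$ with $i$ even, and I would dispose of those either by excluding them (they do not occur in the application in Proposition~\ref{prop: reliability}) or by observing that the asserted identity is an identity of rational functions in $w$ and hence persists at such points by continuity. Beyond this, nothing new is needed: transitivity of the rules and non-interference of distinct copies of $\wump{S}$ are already supplied by the cited lemmas, so the proof is essentially ``decompose $\wump{S}$ as a series--parallel graph and apply the two weight-shift rules,'' with the accounting of which $s_i^i$ cancels against which being the only place where care is required.
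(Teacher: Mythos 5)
Your proof is correct and follows exactly the paper's: collapse each Wump copy hump by hump via series (``unstretching'') and parallel (``unthickening'') reductions, then compress the residual $l$-path of weights $v_i=(1+w/s_i)^i-1$ with Lemma~\ref{lem: stretch}. The extra work you do---writing out the constant bookkeeping (which checks out: $(s_i w^{s_i-1})^i v_i = w^{(s_i-1)i}\bigl((w+s_i)^i - s_i^i\bigr)$) and flagging the tacit non-degeneracy requirement $v_i\neq 0$---is correct and makes explicit what the paper leaves unstated, but the argument is the same.
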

\begin{proof}
  We start with \inflate{G}{\wump{S}} and consider the effect that
  replacing one of the~$m$ canonical copies of $\wump{S}$ with a
  single edge~$e$ has.
  We show that, with~$\varphi$ denoting this operation,
	\begin{equation} \label{eq: flatwump}
		\ZREL{\inflate{G}{\wump{S}}}{w} =
		C_S \cdot
		\ZREL{\varphi (\inflate{G}{\wump{S})}}
		{\mathbf w[e\mapsto w_S]}
		\,,
	\end{equation}
	where $w_S$ has the above form, and $\mathbf w$ has the old value $w$
  on all unaffected edges. The lemma then follows by successively
  applying~$\varphi$ to each canonical copy of~$\wump{S}$ in
  \inflate{G}{\wump{S}}.

	The first step towards transforming a Wump graph (say,
	\begin{tikzpicture}[scale=0.38]
	\draw (0*1.4,0) sin (0*1.4+0.7,0.2) cos (0*1.4+1.4,0)
	plot[only marks, mark=*, mark options={fill=white}, mark size=2.5pt]
	coordinates{(0*1.4+0,0) (0*1.4+0.45,0.17)
	(0*1.4+0.95,0.17) (0*1.4+1.4,0) };
	\draw (1*1.4,0) sin (1*1.4+0.7,0.2) cos (1*1.4+1.4,0);
	\draw (1*1.4,0) sin (1*1.4+0.7,0.6) cos (1*1.4+1.4,0)
	plot[only marks, mark=*, mark options={fill=white}, mark size=2.5pt]
	coordinates{(1*1.4+0,0) (1*1.4+0.7,0.2) (1*1.4+0.7,0.6) (1*1.4+1.4,0)};
	\draw (2*1.4,0) sin (2*1.4+0.7,0.2) cos (2*1.4+1.4,0);
	\draw (2*1.4,0) sin (2*1.4+0.7,0.6) cos (2*1.4+1.4,0);
	\draw (2*1.4,0) sin (2*1.4+0.7,0.95) cos (2*1.4+1.4,0)
	plot[only marks, mark=*, mark options={fill=white}, mark size=2.5pt]
	coordinates{ (2*1.4,0) (2*1.4+0.7,0.2) (2*1.4+0.7,0.6)
	(2*1.4+0.7,0.95)  (2*1.4+1.4,0)};
	\end{tikzpicture})
  into a single edge, consists of contracting the paths of the humps
  to a single edge each.
  For the $i$-th hump, this is just the inverse of an
  $s_i$-stretching applied to each of the~$i$ paths.
  By~\eqref{eq: stretch} of Corollary~\ref{cor: stretchthick},
  this ``unstretching'' gives a factor $(s_iw^{s_i-1})^i$ to the
  polynomial, and each edge in the resulting $(i,1)$-hump receives a
  weight of $w/s_i$ in the modified graph.
  Repeating this process for every hump simplifies the Wump graph
  into a Wump graph of length~$l$ that is generated by a sequence
  of~$1$s
	(\begin{tikzpicture}[scale=0.35]
	\draw (0*1.4,0) sin (0*1.4+0.7,0.2) cos (0*1.4+1.4,0)
	plot[only marks, mark=*, mark options={fill=white}, mark size=2.5pt]
	coordinates{(0*1.4+0,0) (0*1.4+1.4,0) };
	\draw (1*1.4,0) sin (1*1.4+0.7,0.2) cos (1*1.4+1.4,0);
	\draw (1*1.4,0) sin (1*1.4+0.7,0.6) cos (1*1.4+1.4,0)
	plot[only marks, mark=*, mark options={fill=white}, mark size=2.5pt]
	coordinates{(1*1.4+0,0) (1*1.4+1.4,0)};
	\draw (2*1.4,0) sin (2*1.4+0.7,0.2) cos (2*1.4+1.4,0);
	\draw (2*1.4,0) sin (2*1.4+0.7,0.6) cos (2*1.4+1.4,0);
	\draw (2*1.4,0) sin (2*1.4+0.7,0.95) cos (2*1.4+1.4,0)
	plot[only marks, mark=*, mark options={fill=white}, mark size=2.5pt]
	coordinates{(2*1.4,0) (2*1.4+1.4,0)};
	\end{tikzpicture}).
  Let $\phi(\inflate{G}{\wump{S}})$ denote the graph in which one
  Wump graph has been simplified.
  By transitivity, we have the weight shift
	\[
		\ZREL{\inflate{G}{\wump{S}}}{w} =
		\left(\prod_{i=1}^l (s_iw^{s_i-1})^i \right)
		\cdot
		\ZREL{\phi(\inflate{G}{\wump{S}})}{\mathbf w'}
		\,,
	\]
  where $\mathbf w'$ takes the value $w/s_i$ on every edge of the
  $i$th hump of the simplified Wump graph, and the old value $w$
  outside the simplified Wump graph.
  Next, we successively replace each of its $(i,1)$-humps by a
  single edge to get a simple path
	(\begin{tikzpicture}[scale=0.35]
	\draw (0,0) -- (1.4,0) -- (2*1.4,0) -- (3*1.4,0)
	plot[only marks, mark=*, mark options={fill=white}, mark size=3pt]
	coordinates{(0*1.4,0) (1*1.4,0) (2*1.4,0) (3*1.4,0)};
	\end{tikzpicture})
	of length~$l$.
  This transformation is just an ``unthickening'' of each
  $(i,1)$-hump, and from~\eqref{eq: thick} of
  Corollary~\ref{cor: stretchthick} we know that it does not produce
  any new factors for the polynomial, but the weight of the $i$th edge
  in this path becomes
	\[
		w_i=
		\left(
		1+w/s_i
		\right)^i - 1
		\,.
	\]
	Finally, we compress the path into a single edge $e$.
  Then the claim in~\eqref{eq: flatwump} follows by a single
  application of Lemma~\ref{lem: stretch}.
\end{proof}
We now show that Wump inflations provide a rich enough class of
weight shifts.
The ranges of~$w$ for which we prove this is general enough to allow
for interpolation on the whole reliability line, and we make no
attempt at extending the ranges. %
In the following lemma, we use the definition of $w_S$ from
\eqref{eq: wump weight move}.%

\begin{lemma} \label{lem: lever}
  Let $w$ be a rational number with $w\in(-1,0)$ or $w\in(9,\infty)$.
  For all integers $m\geq 1$, there exist sequences $S_0,\dots,S_m$
  of positive integers such that
  \begin{enumerate}[(i)]
    \item\label{wump conclusion w_s total size}
      $|E(\wump{S_i})| \leq O(\log^2 m)$ for all $i$, and
    \item\label{wump conclusion w_s distinct}
      $w_{S_i} \neq w_{S_j}$ for all $i\neq j$.
  \end{enumerate}
  Furthermore, the sequences $S_i$ can be computed in time polynomial
  in~$m$.
\end{lemma}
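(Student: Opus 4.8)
The plan is to mimic the proof of Lemma~\ref{lem: w_s}, with the multiplicative structure of $w_S$ for Theta graphs replaced by the additive structure that Lemma~\ref{lem: wumpweights} supplies on the reliability line. By~\eqref{eq: wump weight move}, a sequence $S=\langle s_1,\dots,s_l\rangle$ produces
\[
  \frac1{w_S}=\sum_{i=1}^{l} g_i(s_i)\,,\qquad\text{where}\quad g_i(s)=\frac1{(1+w/s)^i-1}\,.
\]
Writing $t(s)=1+w/s$, the hypotheses on $w$ guarantee $t(s)>0$ and $t(s)\ne 1$ for every integer $s\ge 1$, so each $g_i(s)$ is a well-defined nonzero rational, all of the same sign (positive when $w>9$, negative when $w\in(-1,0)$); hence every $w_S$ is a well-defined nonzero rational. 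The quantitative input I would isolate is that, restricting widths to $\{1,2\}$, the contribution of position~$i$ to $1/w_S$ is geometrically small in~$i$ while its magnitude is visibly moved by the width: for $w>9$ the $g_i(s)$ are positive and decreasing in~$i$, with $g_i(1)/g_i(2)\le\frac12$ for all $i\ge 1$ and $g_{i+1}(2)/g_i(2)\le 1/t(2)<1$; for $w\in(-1,0)$ one has $t(s)\in(0,1)$ and $g_i(s)=-1/(1-t(s)^i)$, so $g_i(s)+1=t(s)^i/(t(s)^i-1)$ decays like $t(s)^i$, and again $|g_i(1)-g_i(2)|$ is, up to a constant factor, as large as the level-$i$ magnitude.

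For the construction I would fix a large constant $\Delta=\Delta(w)$, to be pinned down at the end, set $l=\Delta(\lceil\log m\rceil+1)$, and call the positions in $P=\{\Delta j+1:0\le j\le\lceil\log m\rceil\}$ \emph{active} (note $\max P\le l$). For an index $0\le i\le m$ with binary digits $\beta_0\beta_1\cdots$, let $S_i$ be the length-$l$ sequence with width $1+\beta_j\in\{1,2\}$ at the active position $\Delta j+1$ and width~$1$ at every other (filler) position. Since $2^{\lceil\log m\rceil+1}>m$, the sequences $S_0,\dots,S_m$ are pairwise distinct; every width is at most~$2$, so $|E(\wump{S_i})|=\sum_{t=1}^{l}t\,s_t\le 2\sum_{t=1}^{l}t=l(l+1)=O(\log^2 m)$, which gives~\iref{wump conclusion w_s total size}; and $S_i$ is plainly computable from~$i$ in polynomial time.

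It remains to establish~\iref{wump conclusion w_s distinct}. Given $i\ne j$, let $p\in P$ be the smallest active position at which $S_i$ and $S_j$ differ. In $1/w_{S_i}-1/w_{S_j}=\sum_{t=1}^{l}\bigl(g_t(s^{(i)}_t)-g_t(s^{(j)}_t)\bigr)$ every filler position cancels, as does every active position before~$p$; so only~$p$ and the active positions beyond~$p$ survive. The position-$p$ term has absolute value $|g_p(1)-g_p(2)|\ge c_1 M_p$, where $M_p$ denotes the level-$p$ magnitude and $c_1=c_1(w)>0$, while each surviving active position $q$ satisfies $q\ge p+\Delta$ and contributes at most $|g_q(1)-g_q(2)|\le c_2\,\rho^{\,q-p}M_p$ for constants $c_2=c_2(w)$ and $\rho=\rho(w)\in(0,1)$ (namely $\rho=1/t(2)$ for $w>9$ and $\rho=t(2)$ for $w\in(-1,0)$). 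Summing the geometric series over the at most $\lceil\log m\rceil$ later active positions bounds the tail by $\bigl(c_2\rho^{\Delta}/(1-\rho^{\Delta})\bigr)M_p$. Since $\rho<1$, we may fix $\Delta$ depending only on~$w$ so that $c_2\rho^{\Delta}/(1-\rho^{\Delta})<c_1$; then $1/w_{S_i}-1/w_{S_j}\ne 0$, hence $w_{S_i}\ne w_{S_j}$.

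The step I expect to be most delicate is this last one in the regime $w\to 0^-$: there $t(s)=1+w/s$ is close to~$1$, the per-level decay is slow, and packing the active positions consecutively would let the ``tail'' swamp the ``leading'' term. Spacing the active positions apart by a constant gap~$\Delta$ (large, depending on~$w$) is what restores a genuinely geometric separation, in the same spirit as the gap parameter in Lemma~\ref{lem: w_s}. One has to verify both that $\Delta$ can indeed be taken to be an absolute constant for each fixed~$w$ (so that $l=O(\log m)$ and $|E(\wump{S_i})|=O(\log^2 m)$), and that the elementary estimates for $g_i(s)$---the bound $g_i(1)/g_i(2)\le\frac12$, the geometric decay in~$i$, and the constants $c_1,c_2$---hold uniformly over all levels $i\le l$, including the smallest ones.
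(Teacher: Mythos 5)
Your overall strategy---spacing the active positions by a constant gap $\Delta=\Delta(w)$ and showing that the discrepancy at the first differing active position dominates the geometric tail of the later ones---is sound and does establish the conclusion. It is, however, a genuinely different route from the paper's own proof of this lemma, though it closely mirrors the paper's proof of the companion Lemma~\ref{lem: w_s} for Theta graphs. For the Wump case, the paper instead argues by monotonicity of explicit auxiliary functions: for $w>9$ it takes gap $r=1$ (no spacing at all) and shows $f(x)=\frac{1}{x^k-1}-\sum_{i>k}\frac{1}{x^i-1}$ is strictly decreasing on $(4,\infty)$, while for $w\in(-1,0)$ it expands $\frac{1}{x-1}$ as a geometric series and shows $F(y)=y^k-\sum_{i>k} y^i$ is strictly increasing on $(0,\tfrac14)$, choosing $r$ so the relevant base is below $\tfrac14$. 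Your argument is more uniform across the two regimes and needs only elementary tail estimates; the paper's approach is slicker in the $w>9$ case, where no gap is needed at all.

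There is one concrete problem in your construction that must be fixed before the lemma is usable: you assign width~$1$ to the filler positions (and to active positions where the bit is~$0$). An $(i,1)$-hump with $i\ge2$ is $i$ parallel edges between the same pair of vertices, so the Wump graphs $W_{S_i}$ you build are multigraphs, and hence so is every inflation $G\otimes W_{S_i}$. This breaks the application of the lemma in Proposition~\ref{prop: reliability}, where these inflations must be handed to an oracle that works on \emph{simple} graphs. The paper avoids the issue by taking widths from $\{2,3\}$, for which every hump (and hence the whole Wump graph) is simple. Your argument carries over verbatim after this substitution: with $t(s)=1+w/s$ one has $1<t(3)<t(2)$ when $w>9$ and $0<t(2)<t(3)<1$ when $w\in(-1,0)$, so you may take the decay ratio $\rho$ to be $1/t(3)$ or $t(3)$ respectively; the bound $|E(W_{S_i})|\le 3\sum_{t=1}^{l} t = O(\log^2 m)$ still holds. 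So the gap is easily repaired, but with widths $\{1,2\}$ as written the sequences do not serve the lemma's purpose.
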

\begin{proof}
  We consider the set of sequences
  $S=\langle s_1,\dots,s_l\rangle$ of length $l=r\log(m+1)$,
  with $s_i \in \{2,3\}$ for all $i$ which are positive integer
  multiples of $r$, and $s_i=2$ for all other~$i$.
  Here $r$ is a positive integer and will be chosen later, only
  depending on~$w$.
  Since $r$ is a constant, this construction
  satisfies~\iref{wump conclusion w_s total size}.

  Now consider any two distinct sequences $S=\langle s_i\rangle$ and
  $T=\langle t_i\rangle$.
  To show~\iref{wump conclusion w_s distinct},
  we consider the difference
	\[
		\Delta = \frac{1}{w_S}-\frac{1}{w_T} \,,
	\]
  and show that $\Delta\neq0$.

  Using Lemma~\ref{lem: wumpweights} we get a sum expression for
  $\Delta$.
	\begin{equation}\label{eqn: Deltasum}%
    \begin{split}
		\Delta
		& =
		\sum_{i=1}^{l} \frac{1}{(1+w/s_{i})^{i}-1}
		-
		\sum_{i=1}^{l} \frac{1}{(1+w/t_{i})^{i}-1}
    \\
    & =
		\sum_{i=1}^{l} g\left((1+w/s_{i})^{i}\right)
		-
		\sum_{i=1}^{l} g\left((1+w/t_{i})^{i}\right)\,,%
    \end{split}
	\end{equation}
  where $g$ is the function $g(x)=\frac{1}{x-1}$.
  This function is negative and strictly decreasing on $(0,1)$ and
  positive and strictly decreasing on $(1,\infty)$.
  It is convenient to choose $a,b\in\{(1+w/3),(1+w/2)\}$ so that $a<b$.
  By the monotonicity of $g$, we have $g(a^i)>g(b^i)$ for all
  positive~$i$.

  \textbf{Case 1:} $w>9$.
  Here we have $a=(1+w/3)$ and $b=(1+w/2)$.
  We set $r=1$ and let~$k$ be the smallest index for which the
  sequences differ, i.e., $s_k\neq t_k$.
  We assume w.l.o.g.\ that $s_{k}=3$ and $t_{k}=2$, otherwise we
  exchange the roles of~$S$ and~$T$.
  In~\eqref{eqn: Deltasum}, terms of the sum for $i<k$ cancel.
  The terms corresponding to $i=k$ are $g(a^k)-g(b^k)>0$.
  We apply the monotonicity of $g$ to the terms for $i>k$, which allows
  us to lower bound $\Delta$ as follows.
  \begin{equation*}
		\Delta \geq
		g(a^k) + \sum_{i=k+1}^{l} g(b^{i})
		-
		g(b^k) - \sum_{i=k+1}^{l} g(a^{i})
    =
		f(a) - f(b)
		\,,
  \end{equation*}
	where
	\begin{equation} \label{eq: f(x)}
		f(x) =
    g(x^k) - \sum_{i=k+1}^{l} g(x^i)
    =
		\frac{1}{x^k-1} -
		\sum_{i=k+1}^{l} \frac{1}{x^i-1}
		\,.
	\end{equation}
  We now claim that $f$ is strictly decreasing in $(4,\infty)$.
  This implies $\Delta>0$ since $w>9$ guarantees $a,b>4$, and we
  get $\Delta \geq f(a)-f(b)>0$. %
  To prove the claim, we show that the derivative of $f$ is negative
  on $(4,\infty)$. This is a routine calculation, but we include it here for
  completeness. We have
  \begin{equation} \label{eq: f'(x)}
    f'(x) =
    -\frac{kx^{k-1}}{(x^k-1)^2} +
    \sum_{i=k+1}^{l} \frac{ix^{i-1}}{(x^i-1)^2}
    \,.
  \end{equation}
  The terms of the sum here, let us call them $T_i(x)$, satisfy
  \begin{equation*}
    T_i(x) > 2 \cdot T_{i+1}(x)
  \end{equation*}
  for all $i$ and all $x>4$.
  To see this, note that the inequality is equivalent to
  \begin{equation*}
    2\left(1+\frac{1}{i}\right)x
    < \left(x+\frac{x-1}{x^i-1}\right)^2
    \,.
  \end{equation*}
  This statement is true for all reals $x>4$ and all positive integers
  $i$ since
  then we have that $\text{LHS} \leq 4x < x^2 \leq \text{RHS}$.
  Thus, for $x>4$, we have
  \begin{equation*}
    f'(x) <
    \frac{kx^{k-1}}{(x^k-1)^2}
    \left( -1 + \sum_{i=k+1}^{l} \frac{1}{2^{i-k}} \right)
    < 0
    \,.
  \end{equation*}

  \textbf{Case 2:} $w\in (-1,0)$.
  Here we have $a=(1+w/2)$ and $b=(1+w/3)$.
  We choose $r$ to be a positive integer that satisfies
  $b^r<\frac{1}{4}$.
  Let $rk$ be the smallest index for which the sequences differ, i.e.,
  $s_{rk}\neq t_{rk}$.
  We assume w.l.o.g.\ that $s_{rk}=3$ and $t_{rk}=2$, otherwise we
  exchange the roles of $S$ and $T$.
  In~\eqref{eqn: Deltasum}, terms of the sum for $i<rk$ cancel, and so
  do terms for those $i$'s which are not integer multiples of $r$.
  The terms corresponding to $i=rk$ are $g(b^{rk})-g(a^{rk})<0$.
  We apply the monotonicity of $g$ to the remaining terms for $i>rk$,
  which allows us to upper bound $\Delta$ as follows.
  \begin{equation*}
		\Delta
    \leq
    g(b^{rk})
    + \sum_{i=k+1}^{l/r} g(a^{ri})
		- g(a^{rk})
    - \sum_{i=k+1}^{l/r} g(b^{ri})
	\end{equation*}
  For $x\in(0,1)$, we can expand $g(x)$ into the geometric series
  \begin{equation*}
  g(x)
  =\frac{1}{x-1}
  =-\sum_{j=0}^\infty x^j\,.
  \end{equation*}
  Applying this representation to our estimate for $\Delta$ and
  rearranging terms, we arrive at
	\begin{align*}
		\Delta
    &\leq
    \sum_{j=0}^\infty
    \left(
    (a^{rj})^k
    - (b^{rj})^k
    + \sum_{i=k+1}^{l/r} \left((b^{rj})^i - (a^{rj})^i\right)
    \right)
    =
    \sum_{j=0}^\infty \left( F(a^{rj})-F(b^{rj}) \right)
    \,,
	\end{align*}
  where $F$ is the function
  \begin{equation*}
    F(y)=y^k - \sum_{i=k+1}^{l/r} y^i\,.
  \end{equation*}
  We claim that $F$ is strictly increasing on $(0,\frac{1}{4})$.
  This, together with the fact that~$r$ is chosen such that
  $a^{rj},b^{rj}\in(0,\frac{1}{4})$ for all
  positive integers $j$, implies $\Delta<0$, because then
  $F(a^{rj})-F(b^{rj})<0$ for $j\geq 1$, and for $j=0$ the term is $0$.
  To prove the claim we show that the derivative of $F$ is positive on $(0,\frac{1}{4})$.
  Again, we give the details here for completeness.
  We have
  \begin{equation*}
    F'(y)=k y^{k-1} - \sum_{i=k+1}^{l/r} i y^{i-1}\,,
  \end{equation*}
  and obtain $F'(y)>0$ from the following calculation, using the
  fact that $y\in(0,\frac{1}{4})$.
  \begin{align*}
    (k y^{k-1})^{-1} \cdot \sum_{i=k+1}^{l/r} i y^{i-1}
    &=
    \sum_{i=k+1}^{l/r} \frac{i}{k} y^{i-k}
    =
    \sum_{i=1}^{l/r-k} \left(1+\frac{i}{k}\right) y^{i}\\
    &\leq
    \sum_{i=1}^{l/r-k} \left(1+i\right) y^{i}
    \leq
    \sum_{i=1}^{\infty} y^{i} + \sum_{i=1}^{\infty} i y^{i}\\
    &=
    \frac{1}{1-y} - 1 + \frac{y}{(1-y)^2}
    \leq
    \frac{4}{3} - 1 + \frac{4}{9}<1\,.%
  \end{align*}
\end{proof}

\subsection*{Points on the Reliability Line}

We prove
Theorem~\ref{thm: Tutte main result}\iref{thmi: Tutte reliability}.%

\begin{proposition}\label{prop: reliability}
  Let $w\neq 0$ be a rational number.
  If \cETH{} holds, then \ZREL{G}{w} for a given simple graph $G$ cannot
  be computed in time $\exp(o(m/\log^2m))$.
\end{proposition}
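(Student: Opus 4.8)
The plan is to reduce from the problem of computing the coefficients of the univariate polynomial $v\mapsto\ZREL{G}{v}$ for a given simple graph~$G$; by Proposition~\ref{prop: hyp} applied with $q=0$ (allowed since $0\notin\{1,2\}$), this cannot be done in time $\exp(o(m))$ under \cETH{}. Following the pattern of Proposition~\ref{prop: Tutte Theta}, I would recover these $m+1$ coefficients by evaluating $\ZREL{\cdot}{w}$ at $m+1$ inflated copies of~$G$ and then interpolating; the new ingredient is that the inflating gadgets are Wump graphs, which by Lemma~\ref{lem: lever} have only $O(\log^2 m)$ edges, so the inflated graphs have only $O(m\log^2 m)$ edges --- one logarithmic factor better than the Theta inflations used for general points.

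First I would reduce to the case $w\in(-1,0)\cup(9,\infty)$, which is the range in which Lemma~\ref{lem: lever} produces the required weight shifts. If $w\le -1$, pick a constant integer $k>|w|$ and replace every edge by a path of~$k$ edges: by \eqref{eq: stretch} of Corollary~\ref{cor: stretchthick} this multiplies $\ZREL{\cdot}{w}$ by the nonzero, efficiently computable factor $(kw^{k-1})^{m}$ and moves the effective weight to $w^\ast=w/k\in(-1,0)$. If $0<w\le 9$, replace every edge by $k$ internally disjoint paths of length~$2$ (a generalised Theta gadget) for a suitable constant~$k$: a series reduction on each of the $k$ length-$2$ paths (\eqref{eq: stretch}, factor $2w$) followed by a parallel reduction (\eqref{eq: thick}, no factor) multiplies $\ZREL{\cdot}{w}$ by $\bigl((2w)^{k}\bigr)^{m}$ and moves the effective weight to $w^\ast=(1+w/2)^{k}-1$, which exceeds~$9$ once~$k$ is large enough since $1+w/2>1$. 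In every case the transformation sends simple graphs to simple graphs, blows the number of edges up by only a constant factor, and the multiplicative prefactor is nonzero and computable in polynomial time; hence an algorithm for $\ZREL{\cdot}{w}$ running in time $\exp(o(m/\log^2 m))$ yields, within the same asymptotic running-time class, an algorithm for $\ZREL{\cdot}{w^\ast}$ on simple graphs. So I may assume from now on that $w\in(-1,0)\cup(9,\infty)$.

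Now, given a simple graph~$G$ with~$m$ edges, I would invoke Lemma~\ref{lem: lever} to obtain sequences $S_0,\dots,S_m$ of positive integers --- whose entries lie in $\{2,3\}$, so that each Wump graph $\wump{S_i}$ is simple --- with $|E(\wump{S_i})|\le O(\log^2 m)$ and with the weight shifts $w_{S_0},\dots,w_{S_m}$ from \eqref{eq: wump weight move} pairwise distinct. Setting $G_i=\inflate{G}{\wump{S_i}}$, each $G_i$ is a simple graph with at most $O(m\log^2 m)$ edges, and Lemma~\ref{lem: wumpweights} gives $\ZREL{G_i}{w}=C_{S_i}^m\cdot\ZREL{G}{w_{S_i}}$, where for~$w$ in the range above the prefactor $C_{S_i}$ is nonzero and efficiently computable (the quantities $w_{S_i}$ and $w^{(s-1)i}\bigl((w+s)^i-s^i\bigr)$ occurring in it are all nonzero for such~$w$). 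Querying the assumed oracle on $G_0,\dots,G_m$ therefore yields the values of the degree-$\le m$ polynomial $v\mapsto\ZREL{G}{v}$ at the $m+1$ distinct points $w_{S_0},\dots,w_{S_m}$, and one interpolation step recovers its coefficients.

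Assembling the pieces, an algorithm computing $\ZREL{H}{w}$ for simple~$H$ in time $\exp\bigl(o(m(H)/\log^2 m(H))\bigr)$, run on the graphs~$G_i$ with $m(G_i)=O(m\log^2 m)$, would compute the coefficients of $v\mapsto\ZREL{G}{v}$ in time $(m+1)\cdot\exp\bigl(o\bigl(m\log^2 m/\log^2(m\log^2 m)\bigr)\bigr)+\poly(m)=\exp(o(m))$, since $\log^2(m\log^2 m)=\Theta(\log^2 m)$; this contradicts Proposition~\ref{prop: hyp} under \cETH{}. I expect the only delicate points to be bookkeeping ones: keeping the gadget overhead at $O(\log^2 m)$ edges so that $m(G_i)/\log^2 m(G_i)=\Omega(m)$, and the case analysis that brings an arbitrary nonzero rational~$w$ into the range $(-1,0)\cup(9,\infty)$ of Lemma~\ref{lem: lever}. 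The substantive work --- that Wump inflations realise $\Omega(m)$ distinct weight shifts on the reliability line using only logarithmically many edges per gadget --- has already been carried out in Lemmas~\ref{lem: wumpweights} and~\ref{lem: lever}.
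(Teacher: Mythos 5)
Your proof is correct and takes essentially the same route as the paper: normalize $w$ into the range $(-1,0)\cup(9,\infty)$ via a constant-factor stretch (for negative $w$) or a $2$-stretch of a $k$-thickening (for positive $w$), then use Lemma~\ref{lem: lever} to produce $m+1$ distinct Wump weight shifts, evaluate via Lemma~\ref{lem: wumpweights}, and interpolate to recover the coefficients, whose hardness comes from Proposition~\ref{prop: hyp} at $q=0$. The only cosmetic differences are the exact case boundaries in the normalization step ($w\le -1$ and $0<w\le 9$ rather than $w<0$ and $w>0$) and the explicit verification that the prefactors are nonzero; the substance is identical.
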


\begin{proof}
  If $w<0$, we can pick a positive integer~$k$ big enough such that
  \begin{equation*}
    w' := w/k > -1\,.
  \end{equation*}
  This weight shift corresponds to the $k$-stretch of $G$
  (Corollary~\ref{cor: stretchthick}).
  On the other hand, if $w > 0$, we can pick a positive integer~$k$
  such that
  \begin{equation*}
    w' := (w/2+1)^k-1 > 9\,.
  \end{equation*}
  This is the weight shift that corresponds to the $2$-stretch of the
  $k$-thickening of $G$ (Corollary~\ref{cor: stretchthick}).
  In any case we can compute $Z(G;w',q)$ from $Z(G';w,q)$.
  The graph remains simple after any of these transformations, and the
  number of edges is only increased by a constant factor of at
  most~$2k$.

  By the above, we can assume w.l.o.g.\ that $w\in(-1,0)$ or $w>9$.
  We use Lemma~\ref{lem: lever} to construct $m+1$
  Wump graphs~\wump{S} whose corresponding weight shifts $w_S$ are
  all distinct by property~\iref{wump conclusion w_s distinct} of
  Lemma~\ref{lem: lever}.
  By Lemma~\ref{lem: wumpweights},
  we can compute the values \ZREL{G}{w_S} from
  \ZREL{\inflate{G}{\wump{S}}}{w}, i.e.,
  we get evaluations of $v\mapsto \ZREL{G}{v}$ at $m+1$ distinct
  points.
  Since the degree of this polynomial is $m$, we obtain its
  coefficients by interpolation.
  By Proposition~\ref{prop: hyp}, these coefficients cannot be
  computed in time $\exp(o(m))$ under \cETH.
  By Lemma~\ref{lem: lever}\iref{wump conclusion w_s total size},
  each \inflate{G}{\wump{S}} has at most $\Oh(m \log^2 m)$ edges,
  which implies that \ZREL{G}{w} for given $G$ cannot be computed in time
  $\exp(o(m/\log^2 m))$ as claimed.
\end{proof}

\section{Conclusion and Further Work}
Our results for the Tutte polynomial leave open the line $y=1$ except
for the point $(1,1)$, even in the case of multigraphs. %
That line corresponds to counting the number of forest weighted by the
number of edges, i.e.,
$T(G;1+1/w,1) \sim F(G;w)=\sum_{\text{forests $F$}} w^{|F|}$. %
Thickening and Theta inflation, with the analysis in the proof of
Lemma~\ref{lem: wumpweights}, suffice to show that every point is as
hard as computing the coefficients of $F(G;w)$, without increasing the
number of vertices for multigraphs and with an increase in the number
of edges by a factor of $O(\log^2 m)$ in the case of simple graphs. %
However, we do not know whether computing those coefficients requires
exponential time under \cETH. %
And of course, it would be nice to improve our conditional lower
bounds $\exp(\Omega(n/\poly\log n))$ to match the corresponding upper
bounds $\exp(O(n))$.

\subsection*{Acknowledgements}

The authors are grateful to
Andreas Bj\"orklund,
Leslie Ann Goldberg, and
Dieter van Melkebeek
for valuable comments.

Wump graphs are named for a fictional creature notable for its number
of humps, which appears in the American children's book ``One Fish Two
Fish Red Fish Blue Fish'' by Dr.~Seuss;
the name was suggested by Prasad Tetali.

\printbibliography

\newpage\appendix

\section{The Sparsification Lemma}

\label{sec: counting sparsification}%

Sparsification is the process of reducing the density of graphs,
formulas, or other combinatorial objects, while some properties of the
objects like the answer to a computational problem are preserved.

The objective of sparsification is twofold. %
From an algorithmic perspective, efficient sparsification procedures
can be used as a preprocessing step to make input instances sparse and
thus possibly simpler and smaller, such that only the core information
about the input remains. %
In the literature, such applications of sparsification procedures are
called kernelizations.
From a complexity-theoretic point of view, sparsification is a tool to
identify those instances of a problem that are computationally the
hardest. %
If an \cc{NP}-hard problem admits efficient sparsification, the
hardest instances are sparse.

In the context of the exponential time hypothesis, the sparsification
lemma provides a way to show that the hardest instances of
\pp{$d$-Sat} are sparse and thus the parameter $n$ can be replaced
with $m$ in the statement of the exponential time hypothesis.  %
The following is the sparsification lemma as formulated
in~\cite[Lemma~16.17]{FG06}.
\begin{lemma}[Sparsification Lemma]\label{counting sparsification lemma}%
  Let $d \geq 2$. %
  There exists a computable function $f:\N^2\to\N$ such that for every
  $k \in \N$ and every $d$-CNF formula $\gamma$ with $n$ variables, we
  can find a formula
  \begin{equation*}
    \beta = \bigvee_{i \in [t]} \gamma_i
  \end{equation*}
  such that:
  \begin{enumerate}[(1)]
    \item $\beta$ is equivalent to $\gamma$ (ie., they have the same
    satisfying assignments),\label{spars-lemma-prop1}
    \item $t \leq 2^{n/k}$, and
    \item the $\gamma_i$ are $d$-CNF formulas in which each variable
    occurs at most $f(d,k)$ times.
  \end{enumerate}
  Furthermore, $\beta$ can be computed from $\gamma$ and $k$
  in time $t \cdot \poly(n)$.
\end{lemma}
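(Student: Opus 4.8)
The plan is to reconstruct the branching argument of Impagliazzo, Paturi and Zane~\cite{IPZ01} (as presented in~\cite{FG06}). From $\gamma$ I would build a rooted recursion tree of bounded depth: each internal node carries a $d$-CNF formula together with the partial assignment accumulated on the path from the root, encoded inside the formula as unit clauses, and the leaves are the output formulas $\gamma_i$. Every branching will be \emph{exhaustive} and \emph{mutually exclusive}, so the satisfying assignments of $\gamma$ decompose as the union over $i$ of those of $\gamma_i$; this gives property~\iref{spars-lemma-prop1}, i.e.\ that $\beta=\bigvee_i\gamma_i$ is equivalent to $\gamma$. Each $\gamma_i$ is a $d$-CNF because substituting literals only shortens clauses, and the unit clauses recording the partial assignment have width $1\le d$.

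The branching rule uses thresholds $\theta_1\ge\theta_2\ge\cdots\ge\theta_{d-1}\ge 1$, depending only on $d$ and $k$, to be fixed in the analysis. At a node with formula $F$, let $i$ be the largest index in $\{1,\dots,d-1\}$ for which some set $H=\{\ell_1,\dots,\ell_i\}$ of $i$ literals is contained in at least $\theta_i$ clauses of $F$ --- the ``heart'' of a sunflower with many ``petals''. If such an $H$ exists, create $i+1$ children: in child $0$ we set $\ell_1=\dots=\ell_i=\mathrm{false}$, which deletes the heart literals from every clause through $H$ and thereby strictly decreases the width of at least $\theta_i$ clauses; in child $j$ (for $1\le j\le i$) we set $\ell_1=\dots=\ell_{j-1}=\mathrm{false}$ and $\ell_j=\mathrm{true}$, which satisfies and removes every clause through $H$, deleting at least $\theta_i$ clauses. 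These $i+1$ cases enumerate the position of the first true literal among $\ell_1,\dots,\ell_i$ (with child $0$ being ``none true''), hence partition the assignment space. If no such $H$ exists for any $i$, then in particular no literal lies in $\theta_1$ clauses, so every variable of $F$ occurs in at most $2\theta_1$ clauses; we declare $F$ a leaf and set $f(d,k)=2\theta_1$. Along any root-to-leaf path the total clause width is nonincreasing and strictly decreases at every child-$0$ step, while the clause count strictly decreases at every ``delete'' step, so the depth is polynomially bounded, every node costs $\poly(n)$ time, and the total running time is $\poly(n)$ times the number of nodes, which is $t\cdot\poly(n)$ once the leaf count is controlled.

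The crux --- and the step I expect to be the main obstacle --- is bounding the number of leaves by $2^{n/k}$. Here I would run an amortized analysis: assign to each node a potential combining the number of still-free variables with a width-weighted clause count, and argue that a branching on an $i$-heart with at least $\theta_i$ petals forces the petal literals to span many distinct variables, so that the subtree rooted at a ``delete'' child has already consumed a controlled fraction of the variable budget relative to the $\theta_i$ clauses it killed. Making these estimates add up requires choosing the $\theta_i$ to grow fast enough in $k$ (a tower-type dependence on $d$ and $k$, which is then absorbed into $f(d,k)=2\theta_1$); the sunflower-style counting that relates ``petal count'' to ``variables spent'', carried out by a nested induction on $i$ from $d-1$ down to $1$, is the delicate part. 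Everything else --- equivalence of $\beta$ and $\gamma$, width at most $d$ of the $\gamma_i$, the sparsity at the leaves, and the $t\cdot\poly(n)$ time bound --- is routine bookkeeping.
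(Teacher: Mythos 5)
The paper does not actually prove this lemma; it quotes it verbatim from Flum and Grohe's textbook (Lemma~16.17 of~\cite{FG06}), whose proof follows Impagliazzo, Paturi, and Zane. The only proof content the paper supplies is a small modification of that construction, aimed at upgrading property~(1) to disjointness of the sets $\op{sat}(\gamma_i)$. So what you are attempting is a from-scratch reproof of a cited result, and your sketch diverges from the cited proof in a substantive way: FG06's branching on a flower with heart $\delta$ and petals $\delta_1,\dots,\delta_p$ is always \emph{two-way} --- a ``heart'' formula that replaces the flower by the single clause $\delta$, and a ``petals'' formula that replaces each $\delta_j$ by $\delta_j\setminus\delta$ --- whereas you propose an $(i+1)$-way branch enumerating the position of the first true literal in the heart. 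Amusingly, this finer branching would hand you the disjointness upgrade (the paper's property~(1$'$)) for free, which is exactly what the authors want; but it is not the construction the cited leaf-count analysis is written for.

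The genuine gap is precisely where you flag it: nothing in the sketch actually establishes $t\le 2^{n/k}$. That bound is the entire content of the sparsification lemma --- the leaf formulas being sparse, the $d$-CNF preservation, the partition of assignment space, and the $t\cdot\poly(n)$ time bound are, as you say, routine once the recursion tree is bounded. You gesture at an amortized potential argument and a nested induction on heart size, and you note that the thresholds $\theta_1\ge\dots\ge\theta_{d-1}$ must grow in a tower-like way, but you do not show that any choice of thresholds makes the accounting close. With $(i+1)$-way branching the accounting is not obviously the same as for FG06's two-way branch: you pay a factor of up to $d$ at each node instead of $2$, and you need each child to retire enough ``potential'' (variables fixed, or clauses killed) to compensate. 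That is a real redesign of the IPZ inductive argument, not a bookkeeping detail, and without it the proposal does not yet constitute a proof. If your goal is simply to use the lemma, the cleaner route is the paper's own: cite FG06's two-way construction for the $t$ bound and then apply their local modification (adding unit clauses $\{\neg\ell\}$ for $\ell\in\delta$ to $\gamma^\alpha_{\text{petals}}$) to obtain disjointness without reanalysing the recursion.
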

We sketch below a small modification in the proof of the
sparsification lemma that allows us to replace
\iref{spars-lemma-prop1} with the condition
\begin{enumerate}[\it (1$'$)]
  \item
    $\op{sat}(\gamma)= \dot\bigcup_i \op{sat}(\gamma_i)$\,,
\end{enumerate}
where $\op{sat}(\varphi)$ is the set of assignments that satisfy the
formula~$\varphi$. %
That is, not only is $\beta$ equivalent to $\gamma$, it even holds
that every satisfying assignment of $\beta$ satisfies exactly one
$\gamma_i$.
In particular, \textit{(1$'$)} implies
$\pp{\#Sat}(\gamma)=\sum_i \pp{\#Sat}(\gamma_i)$, which means that the
sparsification lemma can be used for the counting version of
\pp{$3$-Sat}.
\begin{proof}[sketch]
  We adapt the terminology of~\cite[Proof of Lemma~16.17]{FG06} and we
  follow their construction precisely, except for a small change in
  the sparsification algorithm. %
  When the algorithm decides to branch for a CNF-formula~$\gamma$ and
  a flower $\alpha=\left\{ \delta_1,\dots,\delta_p \right\}$, the
  original algorithm would branch on the two formulas
  \begin{align*}
  \gamma_{\text{heart}}^\alpha
  &= \gamma \setminus \left\{ \delta_1,\dots,\delta_p \right\}
  \cup \left\{ \delta\right\},\\
  \gamma_{\text{petals}}^\alpha
  &= \gamma \setminus \left\{ \delta_1,\dots,\delta_p \right\}
  \cup \left\{ \delta_1 \setminus \delta,\dots,\delta_p \setminus\delta\right\}.
  \end{align*}
  We modify the branching on the petals to read
  \[\gamma_{\text{petals}}^\alpha
  = \gamma \setminus \left\{ \delta_1,\dots,\delta_p \right\} \cup
  \left\{ \delta_1 \setminus \delta,\dots,\delta_p
    \setminus\delta\right\} \cup \big\{ \,\left\{\neg l\right\} \colon
  l \in \delta\, \big\}\,.\] This way, the satisfying assignments
  become disjoint: %
  In each branching step, we guess whether the heart contains a
  literal set to true, or whether all literals in the heart are set to
  false and each petal contains a literals set to true.

  Now we have that, for all CNF-formulas~$\gamma$, all
  assignments~$\sigma$ to the variables of~$\gamma$, and all
  flowers~$\alpha$ of~$\gamma$,
  \begin{enumerate}[(i)]
    \item $\sigma$ satisfies $\gamma$ if and only if
      $\sigma$ satisfies $\gamma_{\text{heart}}^\alpha
      \vee \gamma_{\text{petals}}^\alpha$,
      and
    \item
      $\sigma$ does not satisfy $\gamma_{\text{heart}}^\alpha$
      or $\sigma$ does not satisfy $\gamma_{\text{petals}}^\alpha$.
  \end{enumerate}
  By induction, we see that at the end of the algorithm,
  \begin{enumerate}[(i)]
    \item $\sigma$ satisfies $\gamma$ if and only if $\sigma$
      satisfies some $\gamma_i$, and
    \item
      $\sigma$ satisfies at most one $\gamma_i$.
  \end{enumerate}
  This implies that
  $\op{sat}(\gamma)=\dot\bigcup_{i \in [t]}\op{sat}(\gamma_i)$.

  Notice that our new construction adds at most~$n$ clauses of
  size~$1$ to the formulas~$\gamma_i$ compared to the old one. %
  Furthermore, our construction does not make~$t$ any larger because
  the \textsc{REDUCE}-step removes all clauses that properly
  contain~$\left\{ \neg l \right\}$ and thus these unit clauses never
  appear in a flower.
\end{proof}

\begin{proof}[of Theorem~\ref{thm: counting_sparsification_essence}]
  For all integers $d\geq 3$ and $k\geq 1$, the sparsification lemma
  gives an oracle reduction from \pp{\#$d$-Sat} to \pp{\#$d$-Sat}
  that, on input a formula~$\gamma$ with $n$ variables, only queries
  formulas with $m'=O(n)$ clauses, such that the reduction runs in
  time~$\exp(O(n/k))$. %
  Now, if for every $c>0$ there is an algorithm for \pp{\#$d$-Sat}
  that runs in time~$\exp(c m)$, we can combine this algorithm and the
  above oracle reduction to obtain an algorithm for \pp{\#$d$-Sat}
  that runs in time $\exp(O(n/k) + c\cdot m')= \exp(O(n/k) + c \cdot
  O(n))$.  %
  Since this holds for all small $c>0$ and large~$k$, we have for
  every $c'>0$ an algorithm for \pp{\#$d$-Sat} running in
  time~$\exp(c'\cdot n)$. %
  This proves that for all $d\geq 3$, \pp{\#$d$-Sat} can be solved in
  variable-subexponential time if and only if it can be solved in
  clause-subexponential time.

  It remains to show that \pp{\#$d$-Sat} reduces to \pp{\#$3$-Sat}.
  We transform an instance~$\varphi$ of \pp{\#$d$-Sat} into an
  instance $\varphi'$ of \pp{\#$3$-Sat} that has the same number of
  satisfying assignments.
  The formula $\varphi'$ is constructed as in the standard
  width-reduction for $d$-CNF formulas, i.e., by introducing a
  constant number of new variables for every clause of $\varphi$. %
  Thus, since the number of clauses of $\varphi'$ is $O(m)$, any
  clause-subexponential algorithm for \pp{\#$3$-Sat} implies a
  clause-subexponential algorithm for \pp{\#$d$-Sat}.
\end{proof}

\section{Parameterized Complexity}
\label{sec: parameterized}

Our hypothesis \cETH{} relates to parameterized complexity, which is a
branch of computational complexity that considers problems in terms of
two parameters~$n$ and~$k$.
Of special interest in that field are problems that have algorithm
whose running times are of the form $f(k)\poly(n)$ for some computable
function~$f$.
Such problems are called fixed parameter tractable, or \cc{FPT}.

\citeN{FG04} introduce the class $\cc{\#W[1]}$ of parameterized
counting problems.
This class is characterized by complete problems such as computing the
number of cliques of size~$k$ or computing the number of simple paths
of length~$k$ in an $n$-vertex graph.
Implicitly, \citeN{FG04} show that these problems are not
fixed-parameter tractable under \cETH.
\begin{theorem}[Flum and Grohe]
  If \cETH{} holds, then $\cc{\#W[1]} \neq \cc{FPT}$.
\end{theorem}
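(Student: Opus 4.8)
The plan is to prove the contrapositive: assuming $\cc{\#W[1]}=\cc{FPT}$, I will build a $2^{o(n)}$\nobreakdash-time algorithm for \pp{\#$3$-Sat}, which contradicts \cETH{}. Since counting $k$\nobreakdash-cliques is one of the canonical problems complete for $\cc{\#W[1]}$, the assumption yields an algorithm that computes the number of $k$\nobreakdash-cliques of an $N$\nobreakdash-vertex graph in time $f(k)\cdot N^{c}$ for some computable $f$ and some constant $c$. The point to keep in mind is that $f$ is uncontrolled, so the whole reduction must be arranged so that the clique parameter is a \emph{fixed constant} (depending only on the target exponent), while the clique instances are allowed to have size $\exp(O(n/k))$; this is exactly the role sparsification plays here.

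First I would sparsify. Given a $3$\nobreakdash-CNF formula $\varphi$ with $n$ variables, apply the counting version of the sparsification lemma established in Appendix~\ref{sec: counting sparsification} with a large constant parameter $\kappa$: in time $2^{n/\kappa}\poly(n)$ this produces $3$\nobreakdash-CNF formulas $\gamma_1,\dots,\gamma_t$ with $t\le 2^{n/\kappa}$, in each of which every variable occurs $O_\kappa(1)$ times—so each $\gamma_i$ has $m_i=O_\kappa(n)$ clauses—and with $\pp{\#Sat}(\varphi)=\sum_i \pp{\#Sat}(\gamma_i)$. We may also assume that every variable of $\gamma_i$ occurs in some clause (isolated variables only contribute a global factor that is a power of two).

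Next, the core step: a parsimonious reduction from $\pp{\#Sat}(\gamma_i)$ to counting $k$\nobreakdash-cliques, for a constant $k$ fixed later. Partition the clause set of $\gamma_i$ into $k$ groups, and let $V^{(j)}$ be the set of variables occurring in group $j$, so $|V^{(j)}|\le 3m_i/k=O_\kappa(n/k)$. Build a graph $G$ whose vertices are the pairs $(j,\sigma)$ with $j\in[k]$ and $\sigma\in\{0,1\}^{V^{(j)}}$ an assignment satisfying every clause of group $j$; thus $G$ has at most $k\cdot 2^{O_\kappa(n/k)}$ vertices. Join $(j,\sigma)$ and $(j',\sigma')$ by an edge exactly when $j\ne j'$ and $\sigma,\sigma'$ agree on $V^{(j)}\cap V^{(j')}$. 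Since $G$ has no edges inside a group, any $k$\nobreakdash-clique picks exactly one vertex per group, and a routine check shows that $\tau\mapsto\{(j,\tau\restriction_{V^{(j)}}):j\in[k]\}$ is a bijection between the satisfying assignments of $\gamma_i$ and the $k$\nobreakdash-cliques of $G$ (the inverse sends a clique to the well-defined union of its partial assignments, which satisfies every clause because each clause lies inside some group). Running the assumed clique-counting algorithm then computes $\pp{\#Sat}(\gamma_i)$ in time $f(k)\cdot\bigl(k\cdot 2^{O_\kappa(n/k)}\bigr)^{c}=f(k)\cdot 2^{O_\kappa(cn/k)}$.

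Finally I would assemble the running time and choose the constants. Summing over the $t\le 2^{n/\kappa}$ sparsified formulas, the whole algorithm for $\pp{\#Sat}(\varphi)$ runs in time $2^{n/\kappa}\cdot f(k)\cdot 2^{A_\kappa cn/k}$, where $A_\kappa$ absorbs the $\kappa$\nobreakdash-dependent constant above. Given any $\delta>0$, first fix $\kappa$ with $1/\kappa<\delta/2$ (which fixes $A_\kappa$), then fix $k$ with $A_\kappa c/k<\delta/2$; now $f(k)$, $A_\kappa$, $c$ are all constants, so the running time is $2^{\delta n+O(1)}$. As $\delta$ was arbitrary, $\pp{\#Sat}$ is computable in time $2^{o(n)}$, contradicting \cETH{}; since $\pp{\#$k$-Clique}\in\cc{\#W[1]}$, this gives $\cc{\#W[1]}\ne\cc{FPT}$. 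The step I expect to require the most care is precisely this balancing act—using sparsification to shrink the group variable sets to $O(n/k)$ so that an $\exp(O(n/k))$\nobreakdash-size clique instance suffices, and keeping $k$ a constant so that the uncontrolled function $f(k)$ remains a constant and does not swamp the subexponential savings. The bijection of the reduction is the only other thing to verify and is routine.
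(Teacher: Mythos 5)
Your proof is correct and, as far as I can tell, matches the argument that Flum and Grohe give in their paper on the parameterized complexity of counting problems. Note that the paper itself does not spell out a proof of this theorem: it merely states it with the remark that the claim is ``implicitly'' shown in [FG04], so there is no ``paper's own proof'' to compare against in detail.

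A few small comments on the write-up itself, none of which are gaps. (1) The place where sparsification does real work is exactly where you flagged it: without first bounding $m_i = O_\kappa(n)$, partitioning the clauses into a \emph{constant} number $k$ of groups would not bound $|V^{(j)}|$ by $o(n)$, and the clique instance would not be subexponential; you should not try to get away with partitioning the \emph{variables} into $\sqrt n$ groups instead, since then the clique parameter is not constant and the uncontrolled $f$ swamps everything. (2) Your bijection argument is fine, but it silently uses that the graph has no intra-group edges (so every $k$-clique is multicolored) and that $\bigcup_j V^{(j)}$ covers all variables of $\gamma_i$; you state both, so this is just a matter of keeping the justifications adjacent. (3) The final ``as $\delta$ was arbitrary'' step is slightly loosely phrased: you are not producing a single $2^{o(n)}$ algorithm but rather, for every $\delta>0$, a (possibly different) $2^{\delta n}\poly(n)$ algorithm. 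That is exactly what is needed to negate \cETH{} as stated in the paper (which quantifies over all constants $c>0$), so the logic is sound, but a reader might stumble over the wording. Overall this is the standard reduction and it is carried out correctly.
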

The latter is only an implication and, as in the case of decision
problems, we do not know whether the two claims are equivalent.
For a claim that is equivalent to a uniform variant of \cETH{}, we can
follow a construction due to \citeN{g2003cutting}.
They consider the following problem:
\begin{quote}\small
  \begin{description}
  \item[Name] \pp{\#{}Mini-$3$-Sat}
  \item[Input] Integers $k$ and $n$; a $3$-CNF formula~$\varphi$ with
    at most $k\log n$ clauses.
  \item[Output] The number of satisfying assignments of $\varphi$.
  \end{description}
\end{quote}
Without explicit reference to \ETH{}, Downey et
al.~\cite{g2003cutting} (based on ideas of \citeN{CaiJuedes}) prove
that the decision version of this
problem is equivalent to a uniform variant of \ETH{}.
By a straightforward modification of their reduction, one can
establish the following equivalence (see also
\cite[chapter~16]{FG06}).
\begin{theorem}[Downey et al.]
  The following two statements are equivalent.
  \begin{enumerate}[(i)]
    \item There is no computable function $T(n) \leq 2^{o(n)}$ such
      that \pp{\#{}$3$-Sat} has a deterministic algorithm
      that runs in time $T(n)$ for $n$-variable formulas.
    \item \mbox{\pp{\#{}Mini-$3$-Sat} $\notin\mathrm{FPT}$}.
  \end{enumerate}
\end{theorem}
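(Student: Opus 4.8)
The plan is to adapt the miniaturization argument of Downey et al.\ (see also \cite[chapter~16]{FG06}) to the counting setting, proving each implication by its contrapositive, and to use the disjointness form \textit{(1$'$)} of the counting sparsification lemma at the single place where the counting version differs from the decision version. It is harmless to note at the outset that, since $T$ is computable, ``\pp{\#$3$-Sat} has a deterministic algorithm running in time $T(n)\le\exp(o(n))$'' may be taken to mean that it runs in time $2^{\tau(n)}$ for some computable, nondecreasing $\tau$ with $\tau(n)=o(n)$ (take $\tau$ to be the running maximum of $\lceil\log_2\max(T(n),2)\rceil$).

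\textbf{From $\neg$(i) to $\neg$(ii).} Here I would suppose an algorithm $B$ computes \pp{\#$3$-Sat} in time $2^{\tau(n)}$ on $n$-variable formulas. Given an instance $(k,n,\varphi)$ of \pp{\#{}Mini-$3$-Sat}, the formula $\varphi$ has at most $k\log n$ clauses, hence at most $N_\varphi\le 3k\log n$ variables. Let $m_0(k)$ be the least $m$ with $\tau(3km)\le m$; this exists because $\tau(N)=o(N)$ and is computable because $\tau$ is. If $\log n\ge m_0(k)$, then $B$ run on $\varphi$ takes time $2^{\tau(N_\varphi)}\le 2^{\tau(3k\log n)}\le 2^{\log n}=n$; if $\log n<m_0(k)$, then it takes time at most $2^{\tau(3k\cdot m_0(k))}$, a computable function of $k$ alone. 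In either case the running time is bounded by (a computable function of $k$) times $n^{O(1)}$, so \pp{\#{}Mini-$3$-Sat} is in \cc{FPT}.

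\textbf{From $\neg$(ii) to $\neg$(i).} Here I would suppose an algorithm $A$ computes \pp{\#{}Mini-$3$-Sat} in time $g(k)\cdot n^{\alpha}$ with $g$ computable. Given a $3$-CNF formula $\varphi$ on $N$ variables and, after deleting repeated clauses, $O(N^{3})$ clauses, first apply the counting sparsification lemma with a parameter $p$; by property \textit{(1$'$)} this produces, in time $2^{N/p}\cdot\poly(N)$, formulas $\gamma_1,\dots,\gamma_t$ with $t\le 2^{N/p}$, each a $3$-CNF on at most $N$ variables with at most $c_p\cdot N$ clauses, where $c_p:=f(3,p)$ is computable, such that $\pp{\#Sat}(\varphi)=\sum_i\pp{\#Sat}(\gamma_i)$. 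Feed each $\gamma_i$ to $A$ as a \pp{\#{}Mini-$3$-Sat} instance with a parameter $q$ and $n_i:=2^{\lceil c_pN/q\rceil}$, chosen so that $q\log n_i$ dominates the clause count of $\gamma_i$; this returns $\pp{\#Sat}(\gamma_i)$ in time $g(q)\cdot 2^{O(c_pN/q)}$. Summing over $i$, the whole computation runs in time $2^{N/p}\cdot g(q)\cdot 2^{O(c_pN/q)}\cdot\poly(N)$. It remains to choose $p=p(N)\to\infty$ and $q=q(N):=p(N)\cdot f(3,p(N))$ so that $g(q(N))\le 2^{\sqrt N}$ — for instance letting $p(N)$ be the largest $j\le\log\log N$ with $g(j\cdot f(3,j))\le 2^{\sqrt N}$. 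Then $2^{N/p(N)}$, $2^{O(c_{p(N)}N/q(N))}=2^{O(N/p(N))}$, and $g(q(N))$ are each of the form $\exp(o(N))$, and since $p$, $q$, $g$, $f$ are computable the total running time is a computable $T(N)\le\exp(o(N))$. This contradicts~(i).

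\textbf{Main obstacle.} Conceptually, the only ingredient not already present in the decision version is the disjointness property \textit{(1$'$)}, which turns the sparsification into an identity $\pp{\#Sat}(\varphi)=\sum_i\pp{\#Sat}(\gamma_i)$ and is precisely what makes it usable for counting; with that in place, both reductions are the familiar ``Mini'' trick of trading the size of the real instance for the exponentially large bookkeeping parameter $n$. The fiddly part is the second direction: one must coordinate the two slowly-growing schedules $p(N)$ and $q(N)$ against the a priori unbounded computable functions $g$ and $f(3,\cdot)$ so that the final bound is simultaneously subexponential and witnessed by a single computable function — a routine but delicate nested diagonalization of the kind standard in the miniaturization literature. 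A tidy way to organize it is to first prove the equivalence ``$\pp{\#{}Mini-$3$-Sat}\in\cc{FPT}$ iff \pp{\#$3$-Sat} has a computable $\exp(o(m))$-time algorithm measured in the number $m$ of clauses'' (which needs no sparsification) and then invoke the uniform analogue of Theorem~\ref{thm: counting_sparsification_essence}, the sole place property \textit{(1$'$)} is used, to pass between $m$ and $n$.
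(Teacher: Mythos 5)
The paper offers no proof of this theorem: it points to Downey et al.\ and to Flum--Grohe (Ch.~16) and remarks that the statement follows from the decision-version argument ``by a straightforward modification.'' Your write-up is the right kind of fleshing-out, and you correctly isolate the only genuinely new ingredient for the counting setting, namely the disjointness property \textit{(1$'$)} of the counting sparsification lemma, which is invoked precisely in the direction $\neg$(ii)$\Rightarrow\neg$(i): the identity $\op{sat}(\gamma)=\dot\bigcup_i\op{sat}(\gamma_i)$ replaces the parsimoniousness that the decision argument gets for free. That direction, including the nested schedule choosing $p(N)\to\infty$ and $q(N)=p(N)f(3,p(N))$ against the \emph{a priori} unbounded computable functions $g$ and $f(3,\cdot)$, is sound, and it correctly produces a single computable time bound $T(N)\le\exp(o(N))$.

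There is, however, a gap in your first direction. You set $m_0(k)$ to be the \emph{least} $m$ with $\tau(3km)\le m$ and then infer from $\log n\ge m_0(k)$ that $\tau(3k\log n)\le\log n$. That inference is not justified: nothing about a nondecreasing computable $\tau=o(N)$ forces $\tau(3km)\le m$ for all $m\ge m_0(k)$ once it holds at $m_0(k)$. What the argument actually needs is the threshold $m_0'(k)=\min\{m:\tau(3km')\le m'\text{ for all }m'\ge m\}$, which is finite since $\tau=o(N)$, but is \emph{not} in general computable from a computable $\tau$ alone, because the modulus of convergence of $\tau(N)/N\to 0$ need not be effective. This is exactly the subtlety the uniform-miniaturization literature sidesteps by working with \emph{effectively} subexponential time, i.e.\ postulating a computable $M$ with $T(n)\le 2^{n/k}$ for all $n\ge M(k)$; note that your second direction in fact produces a $T$ together with such an $M$, so the theorem should be read with that effective modulus folded into statement~(i). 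Under that reading your $m_0(k)$ is simply read off from the given modulus $M$, and the rest of your argument goes through unchanged.
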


\section{Hardness of 3-Colouring and 3-Terminal MinCut}
\label{app: standard hardness results}%

The purpose of this section is to show that the standard reductions
from \pp{$3$-Sat} to \pp{$3$-Colouring},
\pp{NAE-$3$-Sat}, \pp{MaxCut}, and \pp{$3$-Ter\-mi\-nal Min\-Cut}
computationally preserve the number of solutions and increase the
number of clauses or edges of the instances by at most a constant
factor. %
This implies that the corresponding counting problems cannot be
computed in clause-subexponential or edge-subexponential time
unless~\cETH{} fails.
\begin{theorem}\label{thm: standard reductions}
  \hspace{-1.3pt}
  The problems \pp{\#NAE-$3$-Sat}, \pp{\#MaxCut}, \pp{\#$3$-Terminal
  MinCut}, and \pp{\#$3$-Colouring}
  cannot be deterministically computed in time
  $\exp(o(m))$ unless \cETH{} fails.
\end{theorem}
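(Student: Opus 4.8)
The plan is to combine Theorem~\ref{thm: counting_sparsification_essence} with a short chain of count-preserving, size-linear reductions out of \pp{\#$3$-Sat}. Applying Theorem~\ref{thm: counting_sparsification_essence} with $d=3$, \cETH{} is equivalent to the statement that \pp{\#$3$-Sat} cannot be solved in time $\exp(o(m))$, where $m$ counts clauses. It therefore suffices, for each of the four target problems --- \pp{\#NAE-$3$-Sat}, \pp{\#MaxCut}, \pp{\#$3$-Terminal MinCut}, and \pp{\#$3$-Colouring} --- to give a deterministic polynomial-time reduction that sends a $3$-CNF formula $\varphi$ with $m$ clauses to an instance $I_\varphi$ of that problem with $O(m)$ clauses or edges, together with a polynomial-time procedure that recovers the number of satisfying assignments of $\varphi$ from the number of solutions of $I_\varphi$. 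An $\exp(o(m))$ algorithm for any of these problems would then solve \pp{\#$3$-Sat} in time $\exp(o(m))$, contradicting \cETH{}.

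For the individual reductions I would take the classical \cc{NP}-hardness constructions and carry out the counting bookkeeping. For \pp{\#NAE-$3$-Sat}: introduce one global variable $z$, replace each clause $(\ell_1\vee\ell_2\vee\ell_3)$ by the not-all-equal constraint on $\ell_1,\ell_2,\ell_3,z$, and apply a constant-size width reduction to return to arity three; a satisfying assignment of $\varphi$ corresponds to an NAE-assignment with $z$ false, its bitwise complement gives the other, so the instance has exactly twice as many solutions as $\varphi$ and $O(m)$ clauses. For \pp{\#MaxCut}: reduce from \pp{\#NAE-$3$-Sat}, using a consistency gadget on $\{x,\overline x\}$ that forces those two vertices to opposite sides of every maximum cut, and turning each NAE-clause into a triangle on its three literal vertices, which a cut crosses in exactly two edges precisely when the clause is NAE-satisfied; maximum cuts then correspond two-to-one (a cut and its complement) to NAE-satisfying assignments, the graph is simple, and it has $O(m)$ edges. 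For \pp{\#$3$-Terminal MinCut}: use the standard reduction from \pp{\#MaxCut} in the spirit of Dahlhaus et al., attaching three terminals so that minimum $3$-terminal cuts are in an easily described correspondence with maximum cuts while the edge count stays linear in $m$. For \pp{\#$3$-Colouring}: reduce from \pp{\#$3$-Sat} by the palette construction --- a triangle on colour names $T,F,B$, a gadget per variable forcing $\{x,\overline x\}$ onto $\{T,F\}$, and an OR-gadget per clause that forbids the all-false colouring; here it suffices that the number of proper $3$-colourings extending a fixed satisfying assignment is a constant, or at least a quantity computable in polynomial time from $\varphi$, which one arranges by using a clause gadget whose internal vertices are forced once the colours of $\ell_1,\ell_2,\ell_3$ are fixed.

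The size accounting is routine: every gadget has constant size and is used once per variable or once per clause, so all produced instances have $O(m)$ clauses or edges, and the claimed impossibility of $\exp(o(m))$ algorithms follows. The one point that needs genuine care is \emph{parsimony}: these constructions were engineered for decision, so one must pin down the relation between the two solution counts and be sure it is invertible in polynomial time. For the symmetric-doubling steps --- the global variable in the \pp{\#NAE-$3$-Sat} reduction, the cut/complement symmetry in the \pp{\#MaxCut} reduction --- the factor is a fixed power of two and gives no trouble. I expect the delicate cases to be the width-reduction step for NAE-clauses and the clause gadget for \pp{\#$3$-Colouring}, where a careless gadget admits several internal configurations for one interface pattern; I would address these either by choosing gadgets that are rigid given their interface, so that the reduction is genuinely parsimonious, or, failing that, by computing for each of the constantly many interface patterns the exact number of internal extensions and inverting the resulting (constant-size) linear relation --- a standard but slightly tedious check best placed in the body of the appendix.
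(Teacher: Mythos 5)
Your high-level plan --- invoke the counting sparsification lemma to pass from $n$ to $m$, then chain count-preserving, size-linear reductions out of \pp{\#$3$-Sat} through \pp{\#NAE-$3$-Sat} into \pp{\#MaxCut}, \pp{\#$3$-Terminal MinCut}, and \pp{\#$3$-Colouring} --- is essentially the paper's. But there is a concrete gap in the \pp{\#MaxCut} step, and it is exactly the one you dismiss as giving ``no trouble.'' Your proposed relation ``maximum cuts correspond two-to-one to NAE-satisfying assignments'' fails when the NAE formula has \emph{zero} satisfying assignments: every graph has a nonempty set of maximum cuts, so the oracle will return a positive count that bears no simple relation to the (zero) NAE count, and \pp{\#MaxCut} does not tell you the size of the maximum cut, so you cannot detect this case. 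The paper flags this explicitly (``there is no parsimonious reduction from \pp{\#$3$-Sat} or \pp{\#NAE-$3$-Sat} to \pp{\#MaxCut} since every graph has at least one maximum cut while not every formula is satisfiable'') and fixes it by first \emph{planting} a satisfying assignment into the original $3$-CNF formula, routing the chain through the promise problem \pp{\#NAE-$3$-Sat}$^+$ (instances guaranteed to have at least one NAE-assignment), so that the maximum cut is known to have size exactly $k=2m+n$ and the counting correspondence is exact. Your proposal never introduces a planting step or a promise version, so the inversion you need is not available.

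Two smaller points. First, the triangle-per-clause construction for \pp{\#MaxCut} generically produces a \emph{multigraph} (two clauses sharing a pair of literals, or a repeated literal, create parallel edges), whereas the problem is defined on simple graphs; the paper restores simplicity by a $3$-stretch and carefully works out the resulting $3^{m-k}$ multiplicative factor, which you should not omit. Second, for \pp{\#$3$-Colouring} you reduce directly from \pp{\#$3$-Sat} via a palette/OR-gadget construction, while the paper reduces from \pp{\#NAE-$3$-Sat} following Papadimitriou's Theorem~9.8, which has the cleaner property that each NAE-assignment contributes exactly $3\cdot 2^m$ colourings. Your route can be made to work, but you would need to verify the rigidity of your OR-gadget (the point you yourself flag as delicate); the paper's choice sidesteps that verification. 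Similarly, your NAE-$4$-to-NAE-$3$ width reduction is nonstandard and not obviously parsimonious; the paper instead transforms $3$-CNF so that trivariate clauses are never all-true or all-false and then appends a single fresh variable $z$ to the short clauses, which gives the exact factor~$2$ without a separate width-reduction lemma.
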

In the following, we formally define the problems, sketch the standard
\cc{NP}-hardness reductions, and provide their analyses as needed to
prove Theorem~\ref{thm: standard reductions}.
For the purposes of this section, \emph{polynomial-time reductions}
between counting problems are oracle reductions that make at most one
query.
The reductions we sketch need not be parsimonious, that is, they map
instances of one problems to instances of another problem (which they
query), but the number of solutions need not be exactly equal.
In fact, there is no parsimonious reduction from \pp{\#$3$-Sat} or
\pp{\#NAE-$3$-Sat} to \pp{\#MaxCut} since every graph has at least one
maximum cut while not every formula is satisfiable.
Similarly, reductions from \pp{\#$3$-Sat} to \pp{\#$3$-Terminal
MinCut} cannot be parsimonious.

\subsection*{Not-all-equal-Sat}
We show that counting the number of all not-all-equal assignments is
hard even for the promise problem in which we only have inputs with at
least one such assignment.
A truth assignment is a \emph{not-all-equal assignment} if all
constraints $\{a,b,c\} \in \varphi$ contain a true \emph{and} a false
truth value.
Formally, we use the following promise version of \pp{\#NAE-$3$-Sat}.
\begin{quote}\small
\begin{description}
  \item[Name] \pp{\#NAE-$3$-Sat}$^+$
  \item[Input] $3$-CNF formula~$\varphi$ with at least one
    not-all-equal assignment.
  \item[Output] The number of not-all-equal assignments.
\end{description}
\end{quote}
\begin{lemma}
  There is a polynomial-time reduction from \pp{\#$3$-Sat} to
  \pp{\#NAE-$3$-Sat}$^+$ that maps formulas with $m$ clauses to formulas
  with $O(m)$ clauses.
\end{lemma}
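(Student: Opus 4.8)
The plan is to carry the textbook \cc{NP}-hardness reduction from \pp{$3$-Sat} to \pp{NAE-$3$-Sat}---the one that introduces a single global ``ground'' variable---over to the counting setting, while arranging two extra properties: the output must genuinely possess a not-all-equal assignment (so that it is a legal instance of \pp{\#NAE-$3$-Sat}$^+$), and $\pp{\#Sat}(\varphi)$ must be recoverable from the number of not-all-equal assignments of the output.

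First I would fix the global frame. Adjoin a fresh variable~$s$, the \emph{ground}. Because the not-all-equal relation is invariant under simultaneously flipping all variables, the not-all-equal assignments of any formula over these variables pair up a value of~$s$ with its negation, so their number is exactly twice the number of not-all-equal assignments with $s=\mathrm{false}$; it therefore suffices to control the latter quantity, and in this ``$s=\mathrm{false}$ slice'' a clause $\mathrm{NAE}(\alpha,\beta,s)$ is simply the disjunction $(\alpha\vee\beta)$. Next, for each clause $c=(\ell_1\vee\ell_2\vee\ell_3)$ of~$\varphi$ I would install a constant-size gadget, consisting of a few $\mathrm{NAE}$-clauses of width~$3$ over $\ell_1,\ell_2,\ell_3$, $s$, and $O(1)$ fresh variables private to~$c$, engineered so that in the $s=\mathrm{false}$ slice each of the seven assignments of $(\ell_1,\ell_2,\ell_3)$ that satisfies~$c$ extends to the \emph{same} number~$r$ of not-all-equal assignments of the gadget, while the all-false assignment extends to none. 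Since gadgets for distinct clauses share no private variable, an assignment to the variables of~$\varphi$ (together with $s=\mathrm{false}$) extends in~$r^{m}$ ways if it satisfies~$\varphi$ and in~$0$ ways otherwise, so the gadget part alone would contribute $2\,r^{m}\cdot\pp{\#Sat}(\varphi)$ not-all-equal assignments.

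The delicate point---and the one I expect to cost the most work---is building the per-clause gadget with a \emph{uniform} multiplicity~$r$. The obvious choice, rewriting~$c$ as the width-$4$ clause $\mathrm{NAE}(\ell_1,\ell_2,\ell_3,s)$ and splitting it into $\mathrm{NAE}(\ell_1,\ell_2,y)\wedge\mathrm{NAE}(\bar y,\ell_3,s)$ with a fresh~$y$, already has the desired $0$/nonzero behaviour but is \emph{not} parsimonious: $y$ is forced on some satisfying rows and free on others, so the multiplicity oscillates between~$1$ and~$2$. To repair this I would add a handful of further width-$3$ $\mathrm{NAE}$-clauses that pin~$y$ (and any further auxiliaries) down; checking that the resulting constant-size gadget has a single, row-independent number of extensions is a finite case analysis over the $2^{O(1)}$ relevant partial assignments. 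To secure the emptiness promise I would, before all of this, adjoin a fresh ``escape'' variable~$h$ and build the gadgets from $(\ell_1\vee\ell_2\vee\ell_3\vee h)$ rather than $(\ell_1\vee\ell_2\vee\ell_3)$---for instance starting from the width-$5$ clause $\mathrm{NAE}(\ell_1,\ell_2,\ell_3,s,h)$, again width-reduced and pinned---tuned so that setting $h=\mathrm{true}$ makes every gadget satisfiable in exactly~$r$ ways irrespective of~$\varphi$. Then the output~$\psi$ always has at least the $h=\mathrm{true}$ not-all-equal assignments and is hence a legal \pp{\#NAE-$3$-Sat}$^+$ instance, while $\#\mathrm{NAE}(\psi)=2\,r^{m}\bigl(\pp{\#Sat}(\varphi)+2^{n}\bigr)$, from which $\pp{\#Sat}(\varphi)=\#\mathrm{NAE}(\psi)/(2\,r^{m})-2^{n}$ is read off in polynomial time. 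Each clause of~$\varphi$ spawns only $O(1)$ clauses and variables, so~$\psi$ has $O(m)$ clauses as required (after discarding variables of~$\varphi$ that occur in no clause, which only scales the count by a known power of two).
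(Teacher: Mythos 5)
Your overall plan---a global ground variable, a per-clause NAE gadget, counting by uniform multiplicity, and an extra device to satisfy the nonemptiness promise---has the right shape, but it has a genuine hole exactly where the work must happen: you never exhibit the per-clause gadget. You correctly diagnose that the naive width-reduction $\mathrm{NAE}(\ell_1,\ell_2,y)\wedge\mathrm{NAE}(\bar y,\ell_3,s)$ is not parsimonious because $y$ is forced on some satisfying rows and free on others, and you propose to ``pin $y$'' with additional width-$3$ NAE clauses, deferring to an unstated finite case analysis. But producing such a gadget \emph{is} the lemma; the danger is concrete, since any clause added to constrain $y$ risks also killing satisfying rows unless it is carefully chosen to be inert on all seven of them. (Such a gadget does exist---one working choice is $\mathrm{NAE}(\ell_1,\ell_2,y),\ \mathrm{NAE}(\bar y,\bar\ell_1,s),\ \mathrm{NAE}(\bar y,\bar\ell_2,s),\ \mathrm{NAE}(\bar y,\ell_3,s)$, which in the $s=\mathrm{false}$ slice forces $y=\neg(\ell_1\vee\ell_2)$ uniquely and is satisfiable precisely when $\ell_1\vee\ell_2\vee\ell_3$ holds---but this must be written down and verified, not promised.) The gap recurs, more severely, in your escape-variable device: the width-$5$ clause $\mathrm{NAE}(\ell_1,\ell_2,\ell_3,s,h)$ ``width-reduced and pinned'' so that $h=\mathrm{true}$ yields a row-independent multiplicity is an even more delicate object, and again you do not construct it.

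The paper avoids both difficulties. Rather than retrofitting the naive gadget, it uses Papadimitriou's replacement of $(a\vee b\vee c)$ by $(x\vee\bar a)\wedge(x\vee\bar b)\wedge(\bar x\vee a\vee b)\wedge(x\vee c)$: these clauses force $x=a\vee b$, so the extension count is automatically $1$ on every satisfying row, and the one trivariate clause $(\bar x\vee a\vee b)$ is then never all-true nor all-false, so it is already a legitimate NAE constraint. Appending the ground variable $z$ only to the mono- and bivariate clauses finishes the translation. And instead of engineering an escape variable, the paper first plants one satisfying assignment into the input, so that $\#\mathrm{Sat}(\varphi)=\#\mathrm{Sat}(\psi)+1\geq 1$; this guarantees the promise holds and lets $\#\mathrm{Sat}(\psi)$ be recovered by subtracting $1$ from the scaled count instead of $2^n$. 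These are exactly the two properties you wanted, but the paper's constructions are exhibited rather than postulated.
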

\begin{proof}
  Let~$\psi$ be a $3$-CNF formula with $n$ variables and $m$
  clauses. %
  To fulfil the promise, we first plant a satisfying
  assignment using a popular homework assignment.
  We obtain a $3$-CNF formula $\varphi$ with $O(m)$ variables and
  clauses such that $\pp{\#Sat}(\varphi) = \pp{\#Sat}(\psi)+1$.

  To construct the instance $\varphi'$ to \pp{NAE-$3$-Sat}, we introduce a
  new variable $x$ for every trivariate clause $(a\vee b\vee c)$ of
  $\varphi$, and we replace that clause with
  \[
  (x\vee \overline{a}) \wedge (x\vee \overline{b}) \wedge (\overline{x}\vee
  a\vee b) \wedge (x\vee c)\,.
  \]
  These clauses force $x$ to have the same value as $a\vee b$ in any
  satisfying assignment. %
  It can be checked that these clauses are satisfied exactly if the
  original clause was satisfied and moreover that the trivariate
  clause is never all-false or all-true. %
  In total, we increased the number of clauses four-fold without
  changing the number of satisfying assignments.

  Finally, introduce a single fresh variable $z$ and add this variable
  (positively) to every mono- and bivariate clause. %
  It is well-known that this modification turns $\varphi'$ into an
  instance~$\varphi''$ of \pp{NAE-$3$-Sat}~\cite[Theorem~9.3]{Papa}:
  The not-all-equal assignments of~$\varphi''$ are exactly the
  satisfying assignments of $\varphi'$ (if $z$ is set to false) or
  their complements (if~$z$ is set to true).

  The reduction computes~$\varphi''$ from $\psi$ in polynomial time,
  $\varphi''$ has at most $O(m)$ clauses, and we have
  $\pp{\#NAE-$3$-Sat}(\varphi'') = 2\cdot (\pp{\#Sat}(\psi)+1)$.
\end{proof}

\subsection*{Maximum Cut}
A \emph{cut} is a set $C\subseteq V(G)$ and its \emph{size} is the
number $|E(C,\overline C)|$ of edges of~$G$ that cross the cut.
A \emph{maximum cut} is a cut $C\subseteq V(G)$ of maximum size.
\begin{quote}\small
\begin{description}
  \item[Name] $\pp{\#MaxCut}$
  \item[Input] Simple undirected graph $G$.
  \item[Output] The number of maximum cuts.
\end{description}
\end{quote}
\citeN[Lemma~13]{JerrumSinclair} modify a reduction of
\citeN[Theorem~1.1 and Theorem~1.2]{GareyJohnsonStockmeyer} to show
\cc{\#P}-hardness of this problem. %
The reduction increases the number of edges quadratically, so we
cannot use it. %
Instead, we use the reduction in~\cite[Theorem~9.5]{Papa} and compose
it with a $3$-stretch to make the graph simple. %
The reduction is from \pp{\#NAE-$3$-Sat}$^+$ to \pp{\#MaxCut}.
\begin{lemma}
  There is a polynomial-time reduction from \pp{\#NAE-$3$-Sat}$^+$ to
  \pp{\#MaxCut} that maps formulas with $m$ clauses to graphs with
  $O(m)$ edges.
\end{lemma}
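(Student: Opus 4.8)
The plan is to run the standard \cc{NP}-hardness reduction from \pp{NAE-$3$-Sat} to \pp{MaxCut} (\cite[Theorem~9.5]{Papa}), to observe that on instances carrying a not-all-equal assignment it computes the number of such assignments up to a fixed, known factor, and then to repair the fact that this reduction outputs a multigraph by post-composing it with a $3$-stretch, whose effect I would analyse exactly.

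Let $\varphi$ be an instance of \pp{\#NAE-$3$-Sat}$^+$. After a trivial preprocessing we may assume every clause has exactly three distinct literals: the promise forbids $1$-clauses (a single literal is never ``not all equal''), and a $2$-clause $\{a,b\}$ can be replaced by $\{a,b,w\}\wedge\{a,b,\neg w\}$ for a fresh variable~$w$, which preserves not-all-equal-satisfaction of that clause and multiplies the total number of not-all-equal assignments by a known power of two; the number of clauses stays linear, and we keep writing $m$ for it. I would then build the multigraph $G$ on vertex set $\{x,\overline x : x\text{ a variable}\}$ by adding, for each clause $\{\ell_1,\ell_2,\ell_3\}$, a triangle on $\ell_1,\ell_2,\ell_3$, and, for each variable $x$ with $n_x$ occurrences, a bundle of $n_x$ parallel edges joining $x$ and $\overline x$; then $|E(G)|=6m$. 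The standard argument shows that a cut of $G$ that separates every pair $\{x,\overline x\}$ has size at most $5m$, with equality exactly for the cuts $C_\sigma=\{x:\sigma(x)=1\}\cup\{\overline x:\sigma(x)=0\}$ induced by the not-all-equal assignments $\sigma$ of $\varphi$, while any cut failing to separate some pair has size at most $5m-1$. Since the promise guarantees such a $\sigma$, we obtain $\mathrm{maxcut}(G)=5m$ exactly, and $\sigma\mapsto C_\sigma$ is a bijection between the not-all-equal assignments of $\varphi$ and the maximum cuts of $G$.

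The genuinely new ingredient is removing the parallel edges. Let $G'$ be the $3$-stretch of $G$: replace each edge $uv$ by a path $u\,a_{uv}\,b_{uv}\,v$ through two fresh vertices. Then $G'$ is simple and $|E(G')|=3|E(G)|=O(m)$. A short case analysis of a single stretched edge shows that, once the sides of $u$ and $v$ are fixed, the path contributes at most $3$ crossing edges to a cut of $G'$ if $u$ and $v$ lie on opposite sides, achieved by a \emph{unique} choice of $a_{uv},b_{uv}$, and at most $2$ if they lie on the same side, achieved by exactly $3$ choices. Summing over edges, a cut of $G'$ whose restriction to $V(G)$ cuts $c$ edges of $G$ has size at most $2|E(G)|+c$; hence $\mathrm{maxcut}(G')=2|E(G)|+\mathrm{maxcut}(G)$, and a cut of $G'$ attains this bound iff its restriction is a maximum cut of $G$ and every internal pair is chosen optimally. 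Every maximum cut of $G$ leaves exactly $|E(G)|-\mathrm{maxcut}(G)=m$ edges uncut, each carrying $3$ optimal internal extensions while each cut edge carries $1$; therefore the number of maximum cuts of $G'$ equals $3^m$ times the number of maximum cuts of $G$, i.e.\ $3^m$ times the number of not-all-equal assignments of $\varphi$. The reduction outputs $G'$, makes one query to a \pp{\#MaxCut} oracle, divides by $3^m$, and multiplies back the power-of-two factor from the preprocessing; since $|E(G')|=O(m)$, this is the desired reduction.

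I expect the only non-routine step to be this exact analysis of the $3$-stretch. Two things need care: that no cut of $G'$ can beat the optimal extension of a maximum cut of $G$ (which holds because the bound $2|E(G)|+c$ is monotone in $c$, so maximising over $G'$ forces $c=\mathrm{maxcut}(G)$), and that the number of optimal internal extensions, $3^m$, is the \emph{same} for every maximum cut of $G$ (which is precisely where it matters that the promise pins $\mathrm{maxcut}(G)$ to the exact value $5m$, so that the number of uncut edges is a fixed constant). Everything else is bookkeeping inherited from the classical reduction.
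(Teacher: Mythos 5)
Your proposal is correct and follows the same route as the paper's proof: build the classical NAE-$3$-Sat--to--MaxCut multigraph (triangles for clauses, literal edges for variables, the promise pinning the maxcut value), then de-multigraph via a $3$-stretch with exactly the same per-edge case analysis and the same $3^{\#\text{uncut edges}}$ correction factor. The only deviations are cosmetic: you use $n_x$ parallel literal edges (the textbook version) where the paper uses a single edge per pair, and your final bookkeeping phrase ``multiplies back the power-of-two factor'' should read ``divides out,'' since the preprocessing inflated the count.
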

\begin{proof}
  We use the same reduction as~\cite[Theorem~9.5]{Papa} and we repeat
  the details here for completeness. %
  Given an instance~$\varphi$ of \pp{NAE-$3$-Sat} with $n$ variables and
  $m$~constraints, we construct a graph $G$ as follows:
  For every variable~$x_i$, we add adjacent vertices~$x_i$
  and~$\neg x_i$.
  For every constraint $\{a,b,c\}$ of~$\varphi$, we further add a
  triangle between the three involved literals, which possibly leads
  to multiedges.
  This multigraph $G$ has $2n$ vertices and $3m+n$ edges.

  With $k=2m+n$, we claim that the number of cuts of size~$k$ is equal
  to the number of not-all-equal assignments of $\varphi$.
  First notice that there are no cuts of size larger than~$k$:
  every constraint triangle either contributes zero or two edges to
  any cut~$C$, so every cut has at most $2m$ edges from constraint
  triangles of $G$.
  Except for triangle edges, there are exactly $n$ further edges in
  the graph, so the cut cannot be larger than $2m+n=k$.
  Also note that if any $x_j$ and $\neg x_j$ are on the same side of a
  cut, then the size of that cut cannot exceed $k-1$.
  Hence every cut~$C$ of size exactly $k$ separates all pairs~$x_i$
  and~$\neg x_i$ and can be seen as a truth assignment to the
  variables of~$\varphi$.
  Furthermore, since~$C$ has size exactly $k$, it cuts every
  constraint triangle, so it corresponds to a not-all-equal truth
  assignment of~$\varphi$.
  For the other direction, any cut constructed from a not-all-equal
  assignment separates all $x_i$ and $\neg x_i$, and cuts every
  triangle, so the size of such cuts is~$k$.
  In particular, since we reduced from an instance~$\varphi$ that has
  at least one not-all-equal assignment, the maximum cuts of~$G$ have
  size~$k$.
  We obtain a parsimonious polynomial-time reduction from
  \pp{\#NAE-$3$-Sat}$^+$ to \pp{\#MaxCut} on multigraphs that increases
  the parameters $n$ and $m$ at most by a constant factor.

  We now reduce \pp{\#MaxCut} for multigraphs to simple graphs.
  Let~$G$ be a multigraph with $m$ edges and with a maximum cut of
  size~$k$.
  Let $G'$ be the $3$-stretch of~$G$, that is, every edge is replaced
  by a path with three edges.
  This graph has $3m$ edges, and we claim that
  $\pp{\#MaxCut}(G')=3^{m-k} \cdot \pp{\#MaxCut}(G)$,
  which suffices to prove the reduction.

  To prove the claim, let $C$ be a maxcut of $G$.
  We think of $C$ as a colouring $C:V(G)\to\{0,1\}$ such that the
  number of bichromatic edges is maximized.
  The colouring $C$ can be extended in $3^{m-k}$ ways to a maximum cut
  of $G'$ as follows.
  We consider an edge $\{u,v\}$ of~$G$ that got stretched into a
  $3$-path $u,a,b,v$.
  \begin{enumerate}[(1)]
    \item If $C(u)=C(v)$, then there are exactly three ways to colour
    $a$ and $b$ such that the number of bichromatic edges on the path
    $u,a,b,v$ is two.
    Furthermore, no extension can yield more than two bichromatic
    edges.
    \item If $C(u)\neq C(v)$, then there is exactly one way in which
    colouring can be extended to $a$ and $b$ such that the number of
    bichromatic edges on the path $u,a,b,v$ is three.
  \end{enumerate}
  Since $C$ has $k$ bichromatic edges and $m-k$ monochromatic edges in
  $G$, it can be extended in $3^{m-k}$ ways to yield a colouring of
  $G'$ with $2(m-k)+3k=2m+k=k'$ bichromatic edges.
  On the other hand, any other extension than the above, as well as
  any extension of cuts $C$ of size smaller than $k$ lead to cuts of
  $G'$ that have size smaller than $k'$.
\end{proof}

\subsection*{Minimum cut between three terminals}
For convenience, we restate the definition of
\pp{\#$3$-Terminal MinCut} from \S\ref{sec: hyperbolas}.
\begin{quote}\small
\begin{description}
  \item[Name] $\pp{\#$3$-Terminal MinCut}$
  \item[Input] Simple undirected graph $G=(V,E)$ with three
    distinguished vertices (``terminals'') $t_1,t_2,t_3 \in V$.
  \item[Output] The number of cuts of minimal size that separate $t_1$
    from $t_2$, $t_2$ from $t_3$, and $t_3$ from $t_1$.
\end{description}
\end{quote}
\begin{lemma}
  There is a polynomial-time reduction from the \pp{\#MaxCut} problem to
  \pp{\#$3$-Terminal MinCut} that maps graphs with $m$ edges to graphs
  with $O(m)$ edges.
\end{lemma}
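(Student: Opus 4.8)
The plan is to exhibit the classical polynomial-time reduction from \pp{MaxCut} to the $3$-terminal (multiway) cut problem due to Dahlhaus, Johnson, Papadimitriou, Seymour, and Yannakakis --- the same reduction that underpins the link between $Z_0$ and $3$-terminal cuts already used in Lemma~\ref{lem: three negative edges} and, in the Tutte setting, by Goldberg and Jerrum~\cite{GJ08Tutte} --- and to supplement it with a counting analysis. The input $G=(V,E)$ is already a simple graph with $m$ edges by the definition of \pp{\#MaxCut}. The reduction attaches the three terminals $t_1,t_2,t_3$ to $G$ by a fixed family of constant-size gadgets, one for each edge of $G$, together with weighted ``anchor'' edges at the vertices (the only weights used being of order $\deg_G(v)$, so that all the gadget weights together sum to $\sum_v\deg_G(v)=2m$). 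The resulting graph $H$ is built in polynomial time and has the property that a minimum $3$-terminal cut of $H$ decodes a maximum cut of $G$: every optimal $3$-terminal cut splits $V$ between the parts of $t_1$ and $t_2$, the gadget vertices occupy positions forced by that split, and the size of the cut is an affine function $\alpha\cdot(\text{size of the induced cut of }G)+\beta$ with $\alpha<0$; hence minima correspond exactly to maxima.

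Beyond the textbook correctness of the \emph{decision} version, I would check two things. First, the edge count and simplicity: replacing every weight-$w$ edge by a gadget of $w$ internally disjoint paths of length two (or by $w$ parallel edges followed by a constant stretch, as in the proof of the preceding lemma) turns $H$ into a \emph{simple} graph $H'$ with $O(m)$ edges without changing the value of a minimum $3$-terminal cut. Second, the counting correspondence: an uncrossing argument shows that one may assume each part of an optimal $3$-terminal cut contains exactly one terminal, that no component avoids all terminals, and that each gadget vertex lies in the unique part minimizing its gadget's contribution, which is what pins the cut size to the affine form above. Given this, the map ``minimum $3$-terminal cut of $H'$'' $\mapsto$ ``maximum cut of $G$'' is $c$-to-one, where $c$ is a product of powers of small constants whose exponents are determined by the number of cut edges in a canonical optimum --- a quantity that is the same for every maximum cut and computable from the construction (and, when this lemma is used along the reduction chain, the maximum cut size is the explicit value produced by the step from \pp{\#NAE-$3$-Sat}$^+$ to \pp{\#MaxCut}, so $c$ is known). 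Dividing the single oracle answer by $c$ gives $\pp{\#MaxCut}(G)$; this is why a one-query oracle reduction suffices even though it cannot be parsimonious, consistent with the remarks opening \S\ref{app: standard hardness results}. Combined with the earlier lemmas, this establishes the \pp{\#$3$-Terminal MinCut} part of Theorem~\ref{thm: standard reductions}.

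The step I expect to be the main obstacle is the structural analysis of minimum $3$-terminal cuts of $H$, with the attendant counting bookkeeping. Cut functions uncross, so keeping the two endpoints of a $G$-edge in the same part is never locally more expensive than separating them; the gadget must therefore route each $G$-edge past the third terminal against a large, fixed background cut so that separating endpoints becomes \emph{globally} advantageous, and one has to verify carefully that no alternative $3$-partition --- for instance one that dumps part of $V$ into the third part, or leaves a component unattached --- beats the canonical ones, and that the multiplicity factor $c$ is genuinely uniform over all maximum cuts and recoverable without a second oracle call. By comparison, the polynomial running time and the $O(m)$ bound on the number of edges are routine bookkeeping.
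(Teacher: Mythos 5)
You start from the right reference---Dahlhaus, Johnson, Papadimitriou, Seymour, and Yannakakis---but your route is noticeably more roundabout than the paper's, and at the crucial point it stops short of a proof. The paper invokes the \emph{unweighted} construction in \cite[Theorem~3]{DJPSY94} directly: each edge of $G$ is replaced by a fixed $18$-edge gadget attached to the three terminals, producing a simple graph $F$ with $3+n+4m$ vertices and $18m$ edges, and it is a property of that construction (credited to DJPSY, not re-derived) that the number of minimum $3$-terminal cuts of $F$ \emph{equals} the number of maximum cuts of $G$---a genuinely parsimonious step, no multiplicity factor $c$ needed. You instead reconstruct a \emph{weighted} variant with degree-scaled ``anchor'' edges, and then propose to simulate each weight-$w$ edge by $w$ internally disjoint length-two paths. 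That second step is where the bookkeeping you gesture at becomes real work: every such path doubles the number of optimal cuts whenever its $G$-edge is separated, so the multiplicity factor is $2$ raised to the total \emph{weight} crossing the canonical optimum, which you would need to pin down exactly and prove uniform across all optima; you yourself flag the required structural analysis (``uncrossing,'' ``no component avoids all terminals,'' etc.) as the main obstacle and leave it unresolved.

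So there is a genuine gap: the claim that the map from minimum $3$-terminal cuts of $H'$ to maximum cuts of $G$ is $c$-to-one for a single, instance-computable $c$ is asserted, not established. Nothing in the sketch rules out, for instance, two distinct maximum cuts of $G$ extending to different numbers of optimal $3$-terminal cuts of the weighted gadget graph before simulation, and the weight-simulation multiplier is only correct if the weighted reduction is already parsimonious at the level of ``which weighted edges are cut.'' If you want to complete the argument along your lines, you must carry out the uncrossing/canonical-form analysis to show (a) every optimal $3$-terminal cut of $H$ decodes a unique maximum cut of $G$ and vice versa, and (b) the total weight cut in any optimal solution is a fixed number $W$ depending only on $G$; then $c=2^W$ and the division works. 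The simpler path, and the one the paper takes, is to use DJPSY's own unweighted gadget---which already has the size bound $O(m)$ built in and, by their analysis, already gives a one-to-one correspondence at the optimum---and avoid the weight simulation entirely.
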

\begin{proof}
  We follow the reduction of Dahlhaus et al.~\cite[Theorem 3]{DJPSY94}. %
  So let $G=(V,E)$ be a simple graph with $n$ vertices and $m$
   edges. %
  It is made explicit in~\cite{DJPSY94} that the construction builds a
  graph $F$ with $n'=3+n+4m= O(m)$ vertices. %
  For the number of edges, every $uv\in E$ results in a gadget graph
  $C$ with $18$ edges, so the number of edges in $F$ is $18m=O(m)$. %
  The construction is such that the number of minimum $3$-terminal
  cuts of $F$ equals the number of maximum cuts of~$G$.
\end{proof}

\subsection*{Three-colouring}
\begin{quote}\small
\begin{description}
  \item[Name] $\pp{\#$3$-Colouring}$
  \item[Input] Simple undirected graph $G$.
  \item[Output] The number of proper vertex-colourings with three
    colours.
\end{description}
\end{quote}
\citeN{IPZ01} already observed the hardness of \pp{$3$-Colouring}
under \ETH{}.
This can be extended to the counting version as follows.
\begin{lemma}\hspace{-5pt}\label{NAE-$3$-Sat to 3-Colouring}%
  There is a polynomial-time reduction from the \pp{\#NAE-$3$-Sat} problem to
  \pp{\#$3$-Colouring} that maps formulas with $m$ clauses to graphs
  with $O(m)$ edges.
\end{lemma}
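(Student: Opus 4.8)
The plan is to give the textbook Karp-style reduction to \pp{$3$-Colouring}, tuned so that it is count-preserving up to a computable multiplicative constant, and to check that it blows up the instance only linearly. Given an instance $\varphi$ of \pp{\#NAE-$3$-Sat} with $n$ variables and $m$ clauses, I would first assume (without loss of generality) that every variable occurs in some clause, since deleting an unused variable divides the number of not-all-equal assignments by exactly $2$; hence $n\le 3m$. I would then build a graph $G$ in three layers. First a \emph{palette} triangle on vertices $T,F,B$; in every proper $3$-colouring these get the three distinct colours, and we read the colour of $T$ as ``true'', of $F$ as ``false'', of $B$ as ``base''. Second, for each variable $x$ a triangle $\{v_x,v_{\bar x},B\}$ (edges $v_xv_{\bar x}$, $v_xB$, $v_{\bar x}B$); this forces $v_x$ and $v_{\bar x}$ to receive the two non-base colours, one each, so the restriction of a colouring to the $v_x$'s encodes a truth assignment $\sigma$, with $v_{\bar x}$ automatically representing $\neg x$. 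Third, for each clause $C=\{\ell_1,\ell_2,\ell_3\}$ a constant-size \emph{NAE-gadget} attached to the three vertices $v_{\ell_1},v_{\ell_2},v_{\ell_3}$ (using $O(1)$ fresh vertices and possibly edges to $T,F,B$) whose defining property is: for any assignment of the two non-base colours to $v_{\ell_1},v_{\ell_2},v_{\ell_3}$, the fresh vertices admit a proper colouring iff the three attachment vertices are \emph{not} all equal, and in that case the extension is essentially unique.

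Such a gadget comes out of the standard clause gadgets for $3$-colourability: the NAE-constraint on $\{\ell_1,\ell_2,\ell_3\}$ is precisely the conjunction of the ordinary clauses $(\ell_1\vee\ell_2\vee\ell_3)$ and $(\bar\ell_1\vee\bar\ell_2\vee\bar\ell_3)$ -- the first forbids the all-false pattern, the second the all-true one -- so chaining two constant-size OR-gadgets realises it with only $O(1)$ extra vertices and edges; note that when $\ell_i=v_x$ the negated literal $\bar\ell_i$ is simply the vertex $v_{\bar x}$ already present. The one delicate point is \emph{rigidity}: I need the number of proper colourings of the gadget that extend a given not-all-equal colouring of its three attachment vertices to be a fixed constant $\lambda$, the same for all six such patterns (and $0$ otherwise). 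I would obtain this by using count-rigid versions of the OR-gadget -- these are standard, and any residual ``which disjunct is satisfied'' ambiguity can be collapsed with $O(1)$ further vertices -- so that in fact $\lambda=1$.

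With the gadget in hand, correctness and the count are bookkeeping. Since $G$ is connected (everything hangs off the palette), its proper $3$-colourings biject with triples: one of the $3!=6$ colourings of $\{T,F,B\}$, a truth assignment $\sigma$, and, for each clause $C$, a proper colouring of its gadget extending the colours $\sigma$ induces on $v_{\ell_1},v_{\ell_2},v_{\ell_3}$ -- the last existing (in $\lambda$ ways) exactly when $\sigma$ is a not-all-equal assignment for $C$. Writing $N_{3\text{-col}}$ and $N_{\mathrm{NAE}}$ for the respective solution counts, this gives
\[
  N_{3\text{-col}}(G)\;=\;6\,\lambda^{m}\cdot N_{\mathrm{NAE}}(\varphi)\,,
\]
so a single oracle call recovers $N_{\mathrm{NAE}}(\varphi)$ after dividing by the computable constant $6\lambda^{m}$ (with a rigid gadget, simply by $6$). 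For the size bound, the palette contributes $O(1)$ edges, each variable $3$ edges, and each clause gadget $O(1)$ edges, so $|E(G)|=O(n+m)=O(m)$, and $G$ is clearly polynomial-time constructible. The main obstacle is exactly the construction and verification of the constant-size NAE-clause gadget with the rigidity property -- equivalently, confirming that the classical $3$-colourability clause gadget can be made count-rigid; the palette and variable gadgets, connectivity, the $6\lambda^{m}$ bookkeeping, and the $O(m)$ edge bound are all routine.
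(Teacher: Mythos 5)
Your framework (palette, variable triangles hanging off a ``base'' vertex, count-rigid clause gadgets, and the $6\lambda^m$ bookkeeping) is sound and parallels the paper's proof, which follows Papadimitriou's Theorem~9.8 and uses a single special vertex $a$ in place of your full palette triangle (so their constant is $3\cdot 2^m$ rather than $6\lambda^m$). But the one step you flag as ``the main obstacle'' is a genuine gap, and the route you sketch to close it does not work: realising the NAE-constraint as two chained OR-gadgets does \emph{not} give a count-rigid gadget. If you count extensions of the standard chained $3$-input OR-gadget (two copies of the $5$-vertex binary OR gadget, output pinned to $T$ by edges to $F$ and $B$) you find the number of proper extensions is not a constant over the not-all-equal patterns; worse, it even depends on the \emph{order} of the literals, not just on how many are true. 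For example, chaining left-to-right, the literal pattern $(T,T,F)$ admits $2$ extensions of the OR-gadget for $(\ell_1\vee\ell_2\vee\ell_3)$ and $2$ for $(\bar\ell_1\vee\bar\ell_2\vee\bar\ell_3)$, giving $4$ total, while $(T,F,F)$ admits $2$ and $4$ respectively, giving $8$. So the factor you call $\lambda^m$ is not a power of a single constant and your interpolation to recover $N_{\mathrm{NAE}}$ from one oracle call breaks. ``Collapsing the residual ambiguity with $O(1)$ further vertices'' would require an explicit construction that you have not given, and the ambiguity is not merely ``which disjunct is satisfied.''

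The fix is to throw away the OR-gadget machinery entirely: the classical NAE clause gadget is just a single fresh triangle $c_1c_2c_3$ with the three extra edges $c_i\ell_i$ ($i=1,2,3$). This gadget \emph{is} count-rigid, with $\lambda=2$: since $c_1,c_2,c_3$ must receive all three colours and $c_i$ avoids the colour of $\ell_i$, an all-equal colouring of the literals is infeasible, while every not-all-equal colouring extends in exactly two ways (the vertex opposite the minority literal is forced, and the other two can be swapped). This is exactly the gadget the paper uses; the graph has $1+2n+3m$ vertices and $3n+6m=O(m)$ edges, and the count relation is $N_{3\text{-col}}(G)=3\cdot 2^m\cdot N_{\mathrm{NAE}}(\varphi)$. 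With this substitution, the rest of your argument (the palette, the variable triangles, connectivity, and the $O(m)$ edge bound) goes through verbatim.
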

\begin{proof}
  We follow the proof of \cite[Theorem~9.8]{Papa}.
  The graph~$G$ that is constructed from an
  \pp{NAE-$3$-Sat}-instance $\varphi$ with~$n$ variables and~$m$ clauses
  has $n'=1+2n+3m$ vertices and $m'=3n+6m$ edges. %
  Furthermore, every not-all-equal assignment to the variables of
  $\varphi$ gives rise to exactly $3\cdot 2^m$ proper $3$-colourings
  of $G$:
  There are $3$ possible colours for $a$ and a variable assignment
  then uniquely colours the $2n$ vertices that correspond to literals
  (take the smaller of the remaining colours to mean false and the
  larger to mean true; since complements of not-all-equal assignments
  are also not-all-equal assignments, this choice prevents
  overcounting).
  Now the colouring can be extended to each clause gadget in exactly
  two ways.
  Hence the number of proper $3$-colourings of~$G$ is equal to
  $3\cdot 2^m\cdot \pp{\#NAE-$3$-Sat}(\varphi)$.
\end{proof}

\begin{proof}[of Theorem~\ref{thm: standard reductions}]
  Assume one of the problems can be solved in time $\exp(cm)$ for
  every $c>0$. %
  Then \pp{\#$3$-Sat} can be solved by first applying the applicable
  reductions of the preceding lemmas and then invoking the assumed
  algorithm. %
  This gives for every $c>0$ an algorithm for \pp{\#$3$-Sat} that runs
  in time $\exp(O(c m))$, which implies that \cETH{} fails.
\end{proof}

\subsection{Hardness of Colouring and Other Individual Points on the Chromatic Line}
\label{sec: Linial}

Theorem~\ref{thm: Tutte main result}\iref{thmi: Tutte linial}
cannot be handled by the proof of
Proposition~\ref{prop: individual points, multigraphs, nonzero q} because
thickenings do not produce enough points for interpolation. %
Instead, we use a reduction for the chromatic line that was discovered
by~\citeN{L86}.

The chromatic polynomial $\chi(G;q)$ of $G$ is the polynomial in $q$
with the property that, for all $c \in \N$, the value $\chi(G;c)$
is the number of proper $c$-colourings of the vertices of~$G$. %
We write~$\chi(q)$ for the function~$G \mapsto \chi(G;q)$. %
The Tutte polynomial specializes to the chromatic polynomial for
$y=0$:
\begin{equation}\label{Tutte chromatic}
  \chi(G;q) = (-1)^{n(G) - \comp(G)}q^{\comp(G)} T(G;1-q,0)\, .
\end{equation}
The following two propositions establish
Theorem~\ref{thm: Tutte main result}\iref{thmi: Tutte linial}.
\begin{proposition}\label{prop: Tutte linial}%
  Let $x\in \{-2, -3, \ldots\}$. %

  If \cETH{} holds, then $\pp{Tutte}^{0,1}(x,0)$ cannot be computed in
  time $\exp(o(m))$.
\end{proposition}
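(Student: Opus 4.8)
The plan is to reduce everything to the chromatic polynomial through the specialisation \eqref{Tutte chromatic}. Write $q = 1-x$; the hypothesis $x \in \{-2,-3,\dots\}$ means exactly that $q$ runs over the integers $\{3,4,5,\dots\}$. Since $q \ge 3$ we have $q^{\comp(G)} \neq 0$, and both $(-1)^{n(G)-\comp(G)}$ and $q^{\comp(G)}$ are computable in polynomial time from $G$; hence \eqref{Tutte chromatic} shows that, for each fixed integer $q \ge 3$, computing $\pp{Tutte}^{0,1}(1-q,0)$ on a simple graph $G$ is polynomial-time equivalent to computing $\chi(G;q)$, with the same graph and therefore the same value of $m$. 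So it is enough to prove that, for every fixed integer $q \ge 3$, the map $G \mapsto \chi(G;q)$ on simple graphs cannot be computed in time $\exp(o(m))$ under \cETH{}.

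For that I would reduce from \pp{\#$3$-Colouring}, which by Theorem~\ref{thm: standard reductions} cannot be computed in time $\exp(o(m))$ under \cETH{}. The reduction is Linial's clique-join: given a simple graph $G$ with $m$ edges, form $G'$ by adding $q-3$ fresh, mutually adjacent vertices, each joined to every vertex of $G$ (for $q = 3$ this is simply $G' = G$). Then $G'$ is simple and, since $q$ is a fixed constant, has $n(G') = n(G) + O(1)$ vertices and $m(G') = m(G) + (q-3)\,n(G) + \binom{q-3}{2} = O(m)$ edges; here I use that we may assume $n(G) = O(m(G))$, which holds for the \pp{\#$3$-Colouring} instances produced by the reduction behind Theorem~\ref{thm: standard reductions} since those graphs are connected (alternatively, isolated vertices of $G$ may be deleted first, contributing only a computable factor of $3$ each to $\chi(G;3)$).

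The identity driving the reduction is $\chi(G';q) = \tfrac{q!}{6}\,\chi(G;3)$. Indeed, in any proper $q$-colouring of $G'$ the $q-3$ new (clique) vertices receive $q-3$ pairwise distinct colours, which can be done in $q(q-1)\cdots 4 = q!/3!$ ordered ways; and because every vertex of $G$ is adjacent to all of these vertices, the colouring restricted to $G$ avoids those $q-3$ colours, so it is a proper colouring of $G$ using only the remaining $3$ colours. Conversely, any ordered distinct colouring of the clique together with any proper colouring of $G$ on the three left-over colours extends uniquely to a proper $q$-colouring of $G'$. Since $\chi(G;3)$ counts proper colourings of $G$ with any fixed palette of three colours, this gives $\chi(G';q) = \tfrac{q!}{3!}\,\chi(G;3) = \tfrac{q!}{6}\,\chi(G;3)$, and $\chi(G;3)$ is recovered from the single oracle value $\chi(G';q)$ by dividing by the nonzero constant $q!/6$.

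Putting the pieces together: an algorithm computing $G'\mapsto \chi(G';q)$ in time $\exp(o(m))$ would, composed with the $O(m)$-edge reduction of the previous paragraph, compute \pp{\#$3$-Colouring} in time $\exp(o(m))$, contradicting Theorem~\ref{thm: standard reductions} under \cETH{}; and by the first paragraph this lower bound transfers back to $\pp{Tutte}^{0,1}(x,0)$ for every $x \in \{-2,-3,\dots\}$. The only slightly delicate point is the edge count in the join — the $(q-3)\,n(G) + \binom{q-3}{2}$ new edges are $O(m)$ precisely because $q$ is fixed and $n(G) = O(m(G))$ — everything else is bookkeeping.
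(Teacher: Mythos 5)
Your proof is correct and matches the paper's argument: both set $q=1-x$, pass via~\eqref{Tutte chromatic} to the chromatic polynomial, and use Linial's clique-join identity (your $\chi(G';q)=\frac{q!}{6}\chi(G;3)$ is precisely~\eqref{eq: linial} with $i=q-3$, $r=q$) to reduce from \pp{\#$3$-Colouring}, whose hardness comes from Theorem~\ref{thm: standard reductions}. The one minor addition is that you spell out the implicit assumption $n(G)=O(m(G))$ needed to keep the join's edge count linear, a point the paper leaves unstated.
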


\begin{proof}
  Set $q=1-x$. %
  Since $q\neq 0$, it follows from \eqref{Tutte chromatic} that
  evaluating $\pp{Tutte}(x,0)$ is equivalent to evaluating the
  chromatic polynomial $\chi(q)$ at point $q$. %
  In particular, $\chi(3)$ is the number of $3$-colourings.
  By Theorem~\ref{thm: standard reductions}, if \cETH{} is true,
  $\chi(3)$ cannot be computed in time $\exp(o(m))$ even for simple
  graphs.
  For $i\in \{1,2,\ldots\}$ and all real $r$, Linial's identity is
  \begin{equation}\label{eq: linial}
    \chi(G+K_i;r)
    = r (r-1) \dots (r-i+1) \cdot \chi(G;r-i)\,,
  \end{equation}
  where $G+K_i$ is the simple graph consisting of $G$ and a
  clique~$K_i$ on~$i$ vertices, each of which is adjacent to every
  vertex of~$G$.

  For $q\in\{4,5,\ldots\}$, we can set $i=q-3$ and directly compute
  $\chi(G;3)=\chi(G;q-i)=\chi(G+K_i;q)/[q(q-1)\cdots 4]$. %
  Since $m(G+K_i)=m(G)+i \cdot n(G) + \binom{i}{2} \leq O(m(G))$, it
  follows that $\chi(q)$ cannot be computed in time $\exp(o(m))$ under
  \cETH{}, even for simple graphs.  %
\end{proof}

\begin{proposition}
  Let $x \notin \Q\setminus\{1,0,-1,-2, -3, \ldots\}$. %

  If \cETH{} holds, then $\pp{Tutte}^{0,1}(x, 0)$ cannot be computed in
  time $\exp(o(n))$.
\end{proposition}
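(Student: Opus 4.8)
The plan is a reduction from \pp{\#$3$-Colouring} on simple graphs, which by Theorem~\ref{thm: standard reductions} cannot be solved in time $\exp(o(m))$ under \cETH{}, and a fortiori not in time $\exp(o(n))$. The tool that replaces thickening here is Linial's identity~\eqref{eq: linial}: a single evaluation point $q$ of the chromatic polynomial suffices to generate the $n+1$ evaluation points needed for interpolation, at the cost of only doubling the number of vertices.

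Concretely, put $q=1-x$. The hypothesis on $x$ says exactly that $q$ is a rational that is \emph{not} a nonnegative integer; in particular $q\neq 0$, so by~\eqref{Tutte chromatic} (and the fact that $n(G)$ and $\comp(G)$ are computable in polynomial time) evaluating $\pp{Tutte}^{0,1}(x,0)$ for a given simple graph is polynomial-time equivalent to evaluating $G\mapsto\chi(G;q)$ for simple graphs. It is therefore enough to show the latter cannot be done in time $\exp(o(n))$ under \cETH{}.

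Given a simple graph $G$ on $n$ vertices, for $i=0,1,\ldots,n$ I would form $G+K_i$ (the simple graph obtained from $G$ by adding a clique on $i$ fresh vertices, each adjacent to every vertex of $G$) and query the assumed oracle to obtain $\chi(G+K_i;q)$. By~\eqref{eq: linial},
\[
  \chi(G;q-i)=\frac{\chi(G+K_i;q)}{q(q-1)\cdots(q-i+1)}\,,
\]
and the denominator is nonzero precisely because $q\notin\{0,1,\ldots,i-1\}$. This recovers the values of the degree-$n$ polynomial $r\mapsto\chi(G;r)$ at the $n+1$ pairwise distinct points $q,q-1,\ldots,q-n$, so Lagrange interpolation yields all its coefficients and in particular $\chi(G;3)$, the number of proper $3$-colourings of $G$. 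Since every queried graph $G+K_i$ has at most $2n$ vertices, an $\exp(o(n))$-time algorithm for $\chi(\,\cdot\,;q)$ would, composed with this polynomial-time reduction and $n+1$ oracle calls, give an $\exp(o(n))$-time algorithm for \pp{\#$3$-Colouring} on simple graphs, contradicting Theorem~\ref{thm: standard reductions} under \cETH{}.

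The one delicate point—and the reason the proposition is stated with the parameter $n$ rather than $m$—is the parametrization: the graph $G+K_i$ has up to $\Theta(n^2)$ edges, so the reduction preserves the vertex count only up to a constant factor but destroys sparsity. This is exactly the contrast with Proposition~\ref{prop: Tutte linial}, where for integer $x\le -2$ one applies~\eqref{eq: linial} only once with bounded $i$ and therefore keeps the stronger $\exp(o(m))$ bound. Everything else here is immediate from $q\notin\N$: the nonvanishing of the falling factorials $q(q-1)\cdots(q-i+1)$, the pairwise distinctness of the interpolation nodes $q-i$, and the fact that $\chi(G;\cdot)$ has degree exactly $n$.
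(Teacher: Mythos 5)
Your proof is correct and follows essentially the same route as the paper's: set $q=1-x$, apply Linial's identity with $i=0,\dots,n$ to obtain evaluations of $\chi(G;\cdot)$ at the $n+1$ distinct points $q,q-1,\dots,q-n$, interpolate, evaluate at $3$, and note that the queried graphs $G+K_i$ have at most $2n$ vertices even though they may have $\Theta(n^2)$ edges. You make one detail slightly more explicit than the paper (the nonvanishing of the falling factorials and distinctness of the interpolation nodes), but the argument is the same.
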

\begin{proof}
  Set $q=1-x$. %
  We show that $\pp{Tutte}^{0,1}(x,0)$ cannot be computed in time
  $\exp{(o(n))}$ under \cETH{}. %
  Indeed, with access to $\chi(q)$, we can compute $\chi(G;q-i)$ for
  all $i=0,\dots,n$, noting that all prefactors in \eqref{eq: linial}
  nonzero. %
  From these $n+1$ values, we interpolate to get the coefficients of
  the polynomial $r\mapsto\chi(G;r)$, which in turn allows us evaluate
  $\chi(G;3)$. %
  In this case, the size of the oracle queries depends non-linearly on
  the size of $G$, in particular $m(G+K_n) \sim n^2$. %
  However, the number of vertices is $n(G+K_i) \leq 2n \leq O(m(G))$.
  Thus, since $\chi(3)$ cannot be computed in time $\exp(o(n))$ under \cETH{},
  this also holds for $\chi(q)$, even for simple graphs.
\end{proof}

The only points on the $x$-axis not covered here are $x\in\{1,0,-1\}$. %
Two of these admit polynomial-time algorithms, so we expect no
hardness result. %
By Theorem~\ref{thm: Tutte main result}\iref{thmi: Tutte reliability},
the Tutte polynomial at the point $(1,0)$ cannot be evaluated in time
$\exp(o(m/\log^2 m))$ under \cETH. %

\end{document}